\pgfplotsset{compat=1.18}
\newcommand{\cmark}{\ding{51}}%
\newcommand{\xmark}{\ding{55}}%
\newtheorem{theorem}{Theorem}[section]
\newtheorem{lemma}[theorem]{Lemma}
\newtheorem{prop}[theorem]{Proposition}
\newtheorem{definition}[theorem]{Definition}
\newtheorem{assumption}{Assumption}[section]
\theoremstyle{definition}
\newtheorem{remark}{Remark}[section]
\newtheorem*{example}{Example}
\numberwithin{equation}{section}
\newcommand{\bz}{\mathbf 0}
\newcommand{\bP}{\mathbb P}
\newcommand{\R}{\mathbb R}
\newcommand{\E}[1]{{\mathbb E}\left[ #1 \right]}
\DeclareMathOperator{\proj}{proj}
\DeclareMathOperator{\dist}{dist}
\DeclareMathOperator*{\argmin}{argmin}
\newcommand{\bx}{\bm{x}}
\newcommand{\bbeta}{\bm{\beta}}
\begin{document}
\title{\bf 
Beyond Maximum Likelihood: Variational Inequality Estimation for Generalized Linear Models 
}
\author[1]{Linglingzhi Zhu\thanks{\href{mailto:llzzhu@gatech.edu}{llzzhu@gatech.edu}}}
\author[1]{Jonghyeok Lee\thanks{\href{mailto:jlee4177@gatech.edu}{jlee4177@gatech.edu}}}
\author[1]{Yao Xie\thanks{\href{mailto:yao.xie@isye.gatech.edu}{yao.xie@isye.gatech.edu}}}
\affil[1]{H. Milton Stewart School of Industrial and Systems Engineering\authorcr Georgia Institute of Technology}
\date{\today}

\maketitle

\begin{abstract}
Generalized linear models (GLMs) are fundamental tools for statistical modeling, with maximum likelihood estimation (MLE) serving as the classical approach for parameter inference. While MLE performs well for canonical GLMs, it can become computationally challenging in more general settings with non-canonical, non-smooth, or nonlinear link functions, where the resulting optimization landscape may be ill-conditioned, non-convex, or non-differentiable. In this paper, we study an alternative estimation framework based on variational inequalities (VIs), which formulates GLM estimation through an operator-based equilibrium condition rather than likelihood minimization. We analyze the VI estimator from a statistical perspective and establish finite-sample error bounds and asymptotic normality under mild regularity conditions, together with convergence guarantees for fixed-point and stochastic approximation algorithms. The framework accommodates a broad class of link functions, including non-canonical and non-monotone cases satisfying a strong Minty-type condition, and extends naturally to generalized additive models via basis expansion. Numerical experiments demonstrate that the VI approach achieves competitive finite-sample accuracy and improved numerical stability relative to MLE, particularly in GLMs and GAMs with non-canonical or non-smooth link functions.
\end{abstract}

\section{Introduction}

Generalized linear models (GLMs) are a cornerstone of statistics and machine learning, providing a unified framework that extends linear regression to accommodate diverse response types through the exponential-family formulation \cite{nelder1972generalized,mccullagh1989generalized}. By linking the conditional mean of the response to a linear predictor via a specified link function, GLMs combine interpretability and flexibility, encompassing classical models such as linear, logistic, and Poisson regression as special cases. Owing to this balance, GLMs have been widely used across scientific, engineering, and data-driven decision-making domains.

The central inferential task in GLMs is to estimate the regression parameter characterizing the relationship between covariates and responses. Maximum likelihood estimation (MLE) has long served as the canonical approach, offering strong asymptotic guarantees such as consistency and efficiency under correct model specification. However, in modern modeling scenarios, MLE can face substantial computational challenges. When the link function departs from the canonical form or incorporates non-smooth or nonlinear structures designed for robustness and flexibility, the resulting likelihood-based objective may become ill-conditioned, non-convex, or non-differentiable. These issues often lead to slow convergence, sensitivity to initialization, and numerical instability, particularly in high-dimensional settings. Such difficulties motivate alternative formulations that retain the interpretability of GLMs while improving computational behavior and theoretical robustness.

In this paper, we study \emph{variational inequality (VI)-based estimation} as a principled alternative to MLE for GLMs. The VI framework, originally proposed by Juditsky and Nemirovski \cite{juditsky2019signal} for nonlinear least-squares problems, provides an operator-based formulation of estimation defined through an equilibrium condition rather than likelihood minimization. This perspective naturally accommodates non-canonical and even non-monotone link functions, relaxing the smoothness and convexity requirements inherent in likelihood-based approaches. It also enables a unified treatment of statistical accuracy and algorithmic convergence within a single framework.

\paragraph{Contributions.}
Our analysis shows that the VI formulation recovers the MLE first-order optimality condition under canonical links and yields a novel moment-based estimator for non-canonical links that can improve numerical performance. We establish key statistical guarantees for the VI estimator, including $\mathcal{O}(N^{-1/2})$ consistency and asymptotic normality with a sandwich-type covariance structure. On the computational side, we provide convergence guarantees for fixed-point and stochastic approximation algorithms used to compute the estimator. The framework accommodates non-canonical, non-smooth, and even non-monotone link functions under a mild Minty-type condition. It also extends naturally to generalized additive models (GAMs) via basis expansion. Empirically, the VI estimator achieves improved numerical stability and faster convergence relative to MLE in challenging GLM and GAM settings.

Viewed more broadly, this formulation reflects a decision-centric perspective on estimation, in which estimators are designed around operator geometry that supports stable computation and inference, rather than around likelihood curvature alone.

\section{Background and Problem Setup}\label{Background}

We briefly review the generalized linear model (GLM) framework and establish notation used throughout the paper. We then describe maximum likelihood estimation (MLE) and its reformulation as a variational inequality (VI), which forms the basis of our analysis. A summary of notation is provided in Appendix~\ref{sec:notation}.

Given i.i.d.\ samples $(\bx,y)$, where $\bx\in\R^d$ denotes the covariates and $y\in\R$ the response, a GLM specifies the conditional distribution of $y\mid\bx$ using the exponential family. The conditional mean is linked to a linear predictor through a link function $g:\R\to\R$:
\begin{equation}\label{eq:glm_def}
\mathbb{E}[y\mid \bx]
=
g^{-1}\!\left(\beta_0+\sum_{j=1}^{d}\beta_j x_j\right),
\end{equation}
where $\bbeta=(\beta_0,\dots,\beta_d)\in\R^{d+1}$ is the coefficient vector including an intercept. Classical examples include logistic regression with a binomial distribution and logit link, and Poisson regression with a Poisson distribution and logarithmic link.

\paragraph{Maximum likelihood estimation.}
A standard approach for estimating $\bbeta$ is maximum likelihood estimation (MLE). Given a model with density (or mass) function $p(y;\bbeta)$ and observations $\{y^i\}_{i=1}^N$, the MLE minimizes the empirical negative log-likelihood (NLL),
\begin{equation}\tag{MLE}\label{Problem}
\min_{\bbeta\in\R^{d+1}} \mathcal{L}_N(\bbeta)
:=
\sum_{i=1}^N
\ell\!\left(
g^{-1}\!\left(\beta_0+\sum_{j=1}^{d}\beta_j x_j^i\right),\, y^i
\right).
\end{equation}
Problem~\eqref{Problem} can be solved using second-order methods such as Fisher scoring or IRLS \cite{green1984iteratively,mccullagh1989generalized}, as well as first-order or stochastic methods in large-scale settings \cite{nesterov2013introductory,kushner2003stochastic}.

When canonical link functions are used, \eqref{Problem} is convex and smooth, ensuring global optimality and efficient convergence. In contrast, for non-canonical links the optimization landscape can become ill-conditioned, leading to slow convergence or numerical instability \cite{he2020point}. Moreover, when the inverse link function is non-smooth or saturating, as in Poisson regression with clipped intensity functions \cite{cao2015poisson} or models using piecewise-linear activations such as hinge or ReLU functions, the resulting NLL objective may be non-convex or non-differentiable. These features motivate alternative formulations that do not rely on smooth, well-conditioned likelihood landscapes.

\paragraph{Variational inequality formulation.}
We reformulate estimation as a variational inequality. Let $V:\R^{d+1}\to\R^{d+1}$ denote the population operator
\[
V(\bbeta)
:=
\mathbb{E}_{(\bx,y)\sim\mathbb{P}}
\!\left[
\left(
g^{-1}\!\left(\beta_0+\sum_{j=1}^{d}\beta_j x_j\right)-y
\right)\tilde{\bx}
\right],
\]
where $\tilde{\bx}:=[1;\bx]\in\R^{d+1}$. We define the target parameter $\bbeta^\star$ as the solution to the population variational inequality
\begin{equation}\label{problem_VI_popu}
\langle V(\bbeta^\star),\, \bbeta-\bbeta^\star\rangle \ge 0,
\qquad \forall\,\bbeta\in\R^{d+1}.
\end{equation}
When a canonical link is used, this condition coincides with the population first-order optimality condition of the MLE, and $\bbeta^\star$ equals the population MLE. For non-canonical links, \eqref{problem_VI_popu} defines a moment-based target parameter characterized by the conditional mean relationship \eqref{eq:glm_def}, independently of the likelihood geometry.

Given samples $\{(\bx^i,y^i)\}_{i=1}^N$, the empirical VI estimator $\hat{\bbeta}_N$ is defined as
\begin{equation}\tag{VI}\label{problem_VI}
\hat{\bbeta}_N \in
\left\{
\hat{\bbeta}\in\R^{d+1} :
\langle V_N(\hat{\bbeta}),\, \bbeta-\hat{\bbeta}\rangle \ge 0,
\ \forall\,\bbeta\in\R^{d+1}
\right\},
\end{equation}
with
\[
V_N(\bbeta)
:=
\frac{1}{N}\sum_{i=1}^N
\left(
g^{-1}(\tilde{\bx}^{i\top}\bbeta)-y^i
\right)\tilde{\bx}^i.
\]

\paragraph{Relation to prior work.}
The VI formulation generalizes the MLE score equation in the sense that, under canonical links, the gradient of the NLL coincides exactly with the operator $V_N(\bbeta)$. This operator-based perspective was introduced by Juditsky and Nemirovski \cite{juditsky2019signal} and builds on earlier perceptron-type methods \cite{rosenblatt1958perceptron,kalai2009isotron,kakade2011efficient}. Subsequent work has extended VI-based estimation to online and stochastic settings and to monotone or weakly monotone link functions \cite{juditsky2020convex,juditsky2023generalized,cheng2024point,zhou2024nonlinear,lou2025accurate}.

In contrast to existing analyses that focus primarily on stochastic approximation schemes, our work studies the sample-average VI estimator as a statistical object. We establish finite-sample estimation error bounds, asymptotic normality, and explicit convergence guarantees for deterministic and stochastic algorithms. Compared to likelihood-based formulations, the VI approach enables stable computation under weaker smoothness assumptions on the link function, at the cost of a potential loss of asymptotic efficiency in non-canonical settings.

\subsection{Summary of main results}

We summarize the main theoretical and algorithmic results for VI-based estimation in GLMs. 
Our analysis assumes a Lipschitz inverse link and a strong Minty-type condition on the associated VI operator, under which both statistical consistency and algorithmic convergence can be established. 
Section~\ref{sec:estimator} proves finite-sample and asymptotic properties of the empirical VI solution, including a high-probability estimation error bound of order $\tilde{\mathcal O}(\sqrt{d/N})$ and asymptotic normality with a sandwich-type covariance matrix. 
Section~\ref{sec:algorithm} analyzes algorithms for computing the VI estimator, showing linear convergence of a fixed-point iteration under the strong Minty condition and standard sublinear rates for stochastic approximation in the streaming data setting. 
The results apply to non-canonical links, including non-smooth and potentially non-monotone inverse links satisfying the strong Minty condition (e.g., clipped and ReLU-type links); under canonical links, the VI formulation reduces to the first-order optimality condition of the MLE.

\section{From MLE to VI}
\label{sec:setup}

We consider the GLM defined in~\eqref{eq:glm_def}, where the response follows an exponential-family distribution with mean
$g^{-1}(\tilde{\bx}^{\top}\bbeta)$.
The choice of the loss $\ell$ and inverse link $g^{-1}$ significantly impacts the optimization landscape.

\paragraph{Why MLE optimization can fail for general link functions.}
Next, we show that MLE optimization can become difficult when the inverse link introduces non-convex or non-smooth structure. 
Consider the two-sided clipped exponential link
\[
g^{-1}(z)=\max\{c,\min\{e^z,C\}\}, \qquad 0<c<C,
\]
and the Poisson negative log-likelihood
\[
(\ell\circ g^{-1})(z) = -y\log(g^{-1}(z)) + g^{-1}(z).
\]
This composition is piecewise smooth with flat regions for $z \le \log c$ and $z \ge \log C$, and is non-differentiable at the two kink points. 
A direct inspection of the one-sided derivatives shows that the monotonicity conditions required for convexity cannot hold simultaneously when $0<c<C$. 
Hence, $\ell\circ g^{-1}$ is globally non-convex and non-smooth.

For illustration, letting $y=1$ and $x=1$ yields
\[
\mathcal{L}(\beta) = -\log g^{-1}(\beta) + g^{-1}(\beta).
\]
As shown in Figures~\ref{fig:VI_example}(a) and (b), clipping induces wide flat plateaus where gradient-based MLE updates can stall and fail to reach the global minimizer at $\beta=0$. 
In contrast, Figure~\ref{fig:VI_example}(c) shows that the corresponding VI vector field remains monotone, so simple fixed-point iterations can still converge to the global solution.

\begin{figure}[h!]
\centering
\begin{subfigure}{0.49\linewidth}
  \centering
  \includegraphics[width=\linewidth]{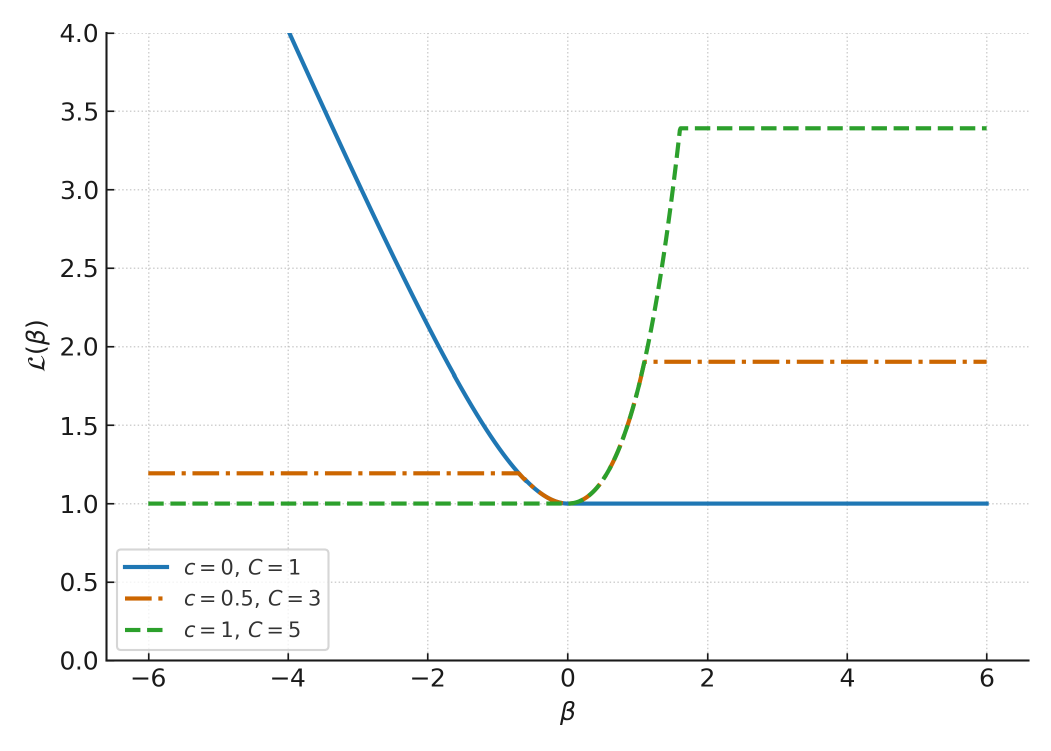}
  \caption{MLE loss $\mathcal{L}(\beta)$.}
\end{subfigure}
\begin{subfigure}{0.49\linewidth}
  \centering
  \includegraphics[width=\linewidth]{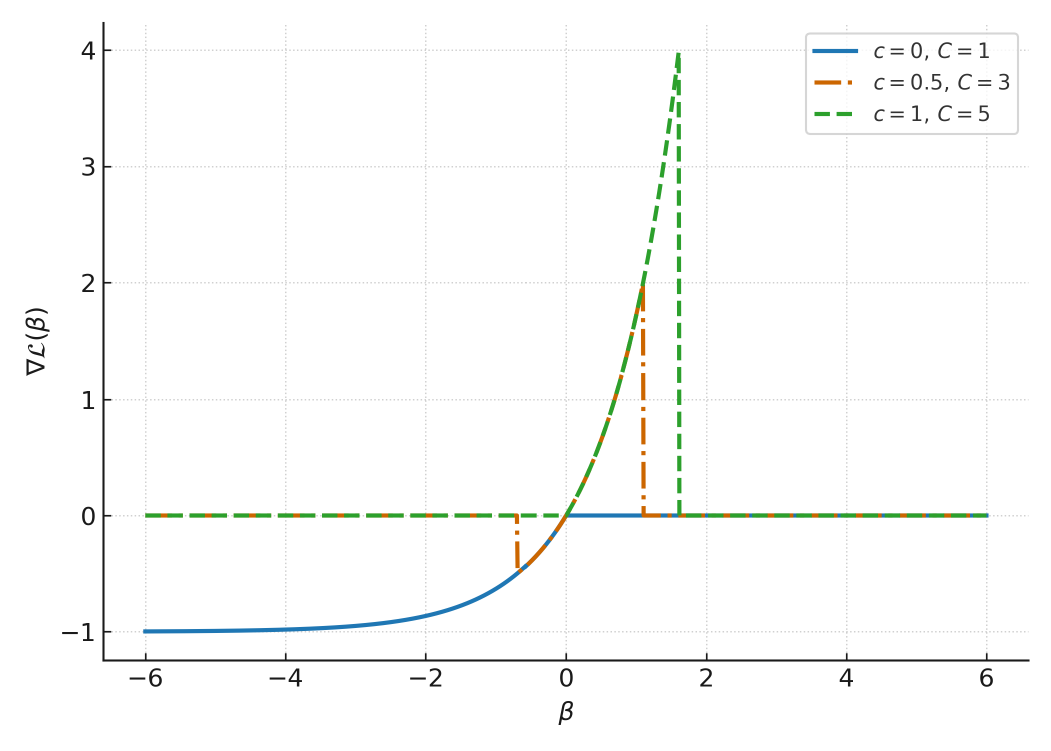}
  \caption{Derivative $\mathcal{L}'(\beta)$.}
\end{subfigure}
\begin{subfigure}{0.49\linewidth}
  \centering
  \includegraphics[width=\linewidth]{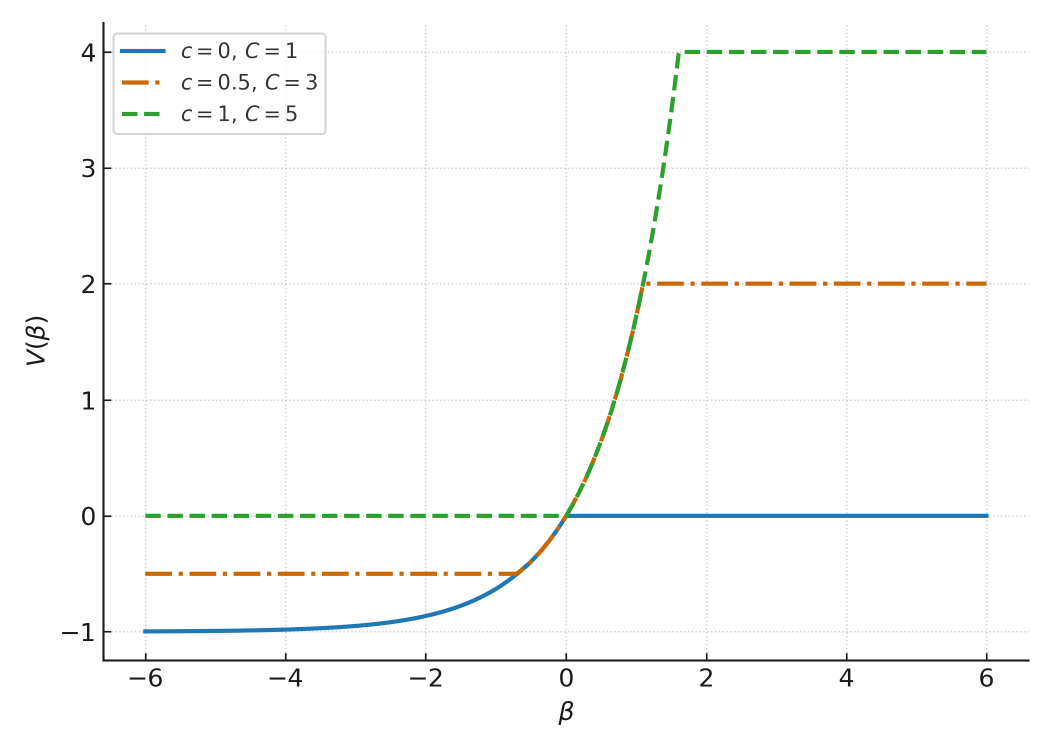}
  \caption{VI field $V(\beta)=g^{-1}(\beta)-1$.}
\end{subfigure}
\caption{Clipped exponential link: MLE loss, its derivative, and the VI vector field.}

\label{fig:VI_example}
\end{figure}

\paragraph{Equivalence under canonical links.}
When the model employs a canonical link function (e.g., Logit for Logistic, Identity for Normal), the gradient of the NLL coincides exactly with the VI operator:
\[
\nabla \mathcal{L}_N(\bbeta) = V_N(\bbeta).
\]
This implies that MLE and VI share identical first-order optimality conditions. 
Representative examples are summarized in Table~\ref{table:summary}.
However, for non-canonical links, this equivalence breaks, and MLE often suffers from optimization challenges.

\begin{table}[h]
    \centering
    \caption{MLE and VI Formulations under Different Links. 
    }
    \label{table:summary}
    \renewcommand{\arraystretch}{1.2}
    \begin{tabular}{cccc} %
    \toprule
    \textbf{Model} & \textbf{Loss $\ell(u,y)$} & \textbf{Inverse Link $g^{-1}(z)$} & \textbf{Equiv?} \\
    \midrule
    \multirow{2}{*}{Logistic} & \multirow{2}{*}{$-y \log u -(1-y)\log(1-u)$} & $e^z/(1+e^z)$ & \cmark \\
    & & $\frac{1}{2}+\frac{1}{\pi}\arctan(z)$  &  \xmark \\ 
    \midrule
    \multirow{2}{*}{Normal} & \multirow{2}{*}{$\frac{1}{2}(u-y)^2$} & $z$ & \cmark \\
    & & $\max\{0,z\}$  &  \xmark \\ 
    \midrule
    \multirow{2}{*}{Exp.} & \multirow{2}{*}{$\log u + y/u$} & $z^{-1}$ & \cmark \\
    & & $e^{-z}$  &  \xmark \\ 
    \midrule
    \multirow{3}{*}{Poisson} & \multirow{3}{*}{$-y\log u + u$} & $e^z$ & \cmark \\
    & & $\log(1+e^z)$  &  \xmark \\ 
    & & $\max\{c,\min\{e^z,C\}\}$  & \xmark \\
    \bottomrule
    \end{tabular}
\end{table}

\paragraph{Flatness of MLE landscape.}
It is well known that MLE optimization algorithms can converge slowly near the true parameter, especially in the non-canonical settings when the likelihood surface is nearly flat or ill-conditioned. Consider the softplus link $g^{-1}(z)=\log(1+e^z)$ in Poisson regression. In this case, the Hessian of the MLE objective decays rapidly in the right tail (as $z \to \infty$). In contrast, the Jacobian of the VI operator remains well-conditioned. This flatness in MLE leads to weaker local contraction and slower convergence compared to VI.

\paragraph{Non-convexity and Non-smoothness.}
More severe issues arise with robust or truncated links. For instance, with the two-sided clipped exponential link $g^{-1}(z)=\max\{c,\,\min\{e^z,\,C\}\}$, the composition of the NLL loss for Poisson regression and the link results in a globally non-convex and non-smooth objective. The MLE loss exhibits flat plateaus with zero gradients, causing gradient-based methods to get stuck. Conversely, the VI vector field remains monotone, ensuring that standard iterative solvers can still recover the global solution.

Detailed derivations and landscape visualizations are provided in Appendix~\ref{sec:setup_appendix}.

\section{Statistical Guarantees of VI Estimators}\label{sec:estimator}

Having established the formulation in~\eqref{problem_VI} and its motivation, we now turn to the theoretical analysis of the proposed estimator's statistical properties. This section derives both finite-sample and asymptotic guarantees for the empirical VI solution, defined as
\[
\hat{\bbeta}_{N}\in\operatorname{Sol}(V_N),
\]
i.e.,
$
V_N(\hat{\bbeta}_{N})
=\sum_{i=1}^N \big(g^{-1}(\tilde{\bx}^{i\top} \hat{\bbeta}_{N})-y^i\big)\tilde{\bx}^i
=\bz.
$
We begin with the following assumptions, which are used throughout the paper:
\begin{assumption}\label{assump}
\leavevmode\\[-20pt]
\begin{enumerate}[label=\normalfont(\roman*)]
    \item The inverse link function $g^{-1}$ is $L$-Lipschitz continuous.
    \item The true parameter $\bbeta^\star$ and all the data $(\bx, y)$ satisfy
    \[
    \left|g^{-1}\left(\tilde{\bx}^{\top}\bbeta^\star
    \right)-y\right|\leq R.
    \]
    \item The feature vector $\bx$ satisfies $|x_j| \le M$ and $|x_j\cdot y| \le M$ for all $j \in [d]$ 
and all $(\bx, y)$.
    \item The population operator $V$ and the empirical operator $V_N$ satisfy the strong Minty condition with modulus $\mu>0$ for any $\bbeta \in \mathbb{R}^{d+1}$:
    \[
    \langle V(\bbeta), \bbeta - \bbeta^\star \rangle \ge \mu \|\bbeta - \bbeta^\star\|^2, 
    \quad
    \langle V_N(\bbeta), \bbeta - \hat{\bbeta}_N \rangle \ge \mu \|\bbeta - \hat{\bbeta}_N\|^2. 
    \]
    Consequently, the solution sets $\operatorname{Sol}(V)$ and $\operatorname{Sol}(V_N)$ are nonempty and singleton.
\end{enumerate}
\end{assumption}

\begin{remark}[Strong Minty condition]
The strong Minty condition is a relatively mild regularity requirement compared to strong monotonicity. 
In Assumption~\ref{assump}, we adopt a slightly stronger version to ensure the uniqueness of both the true parameter $\bbeta^\star$ and the sample-averaged solution $\hat{\bbeta}_N$ for simplicity. 
Nevertheless, this assumption can be relaxed to the standard projection-based form in Definition~\ref{def:strong_minty}, 
by replacing $\bbeta^\star$ and $\hat{\bbeta}_N$ with $\proj_{\operatorname{Sol}(V)}(\bbeta)$ and $\proj_{\operatorname{Sol}(V_N)}(\bbeta)$, respectively. 
Intuitively, it requires that the vector field $F$ exhibits at least linear growth away from the solution, 
with a uniform positive coefficient that prevents flat or degenerate regions around the solution. 
Geometrically, in the one-dimensional case, this means that $F$ defines a supporting hyperplane at the solution  
that strictly separates the solution from all other points, thereby ensuring the local stability of the solution. 
Further discussion and sufficient conditions are provided in Appendix~\ref{app:minty}; see also related treatments in~\cite{huang2023beyond}.
\end{remark}

With these preparations in place, we now proceed to derive finite-sample estimation error bounds that characterize the deviation between the empirical solution $\hat{\bbeta}_N$ and the population parameter $\bbeta^\star$ in Section~\ref{subsec:estimation_err}. 
We then establish that the VI estimator is 
$\mathcal O(N^{-1/2})$-consistent and 
asymptotically normal, 
exhibiting a sandwich-type covariance structure analogous to that of the MLE, 
as detailed in Section~\ref{subsec:asym}.

\subsection{Finite-sample estimation error}\label{subsec:estimation_err}

As a first step, we establish the following auxiliary lemma. Associated with the pair $(\bx, y)$, we recall the vector field $V_{(\bx, y)}:\mathbb{R}^{d+1} \rightarrow \mathbb{R}^{d+1}$
\begin{equation*}
V_{(\bx, y)}(\bbeta):=\left(g^{-1}\left(
\tilde{\bx}^{\top}\bbeta
\right)-y\right)\tilde{\bx}.
\end{equation*}

Now, we are ready to derive the following estimation error of the variational inequality \eqref{problem_VI}.

\begin{theorem}[Estimation error]\label{thm:estimator} Suppose that Assumption \ref{assump} holds. Then for any $\epsilon \in(0,1)$, with the probability at least $1-\epsilon$, we have the following estimation error  
\[
\|\hat{\bbeta}_N-\bbeta^{\star}\|
\leq  \frac{RM}{\mu}\sqrt{\frac{2(d+1)\ln(2(d+1)/\epsilon)}{N}}.
\]
\end{theorem}

\begin{remark}
For the MLE, a similar finite-sample error bound can be derived as 
$
\tilde{\mathcal{O}}(\sqrt{d}/(\mu' \sqrt{N})),
$
where $\mu'>0$ denotes the curvature modulus of the composite mapping 
$\ell\circ g^{-1}$ that satisfies the restricted secant inequality (RSI), i.e., for the maximum likelihood estimator $\bar{\bbeta}_N$ and any $\bbeta$
\[
\langle \nabla\mathcal{L}_N(\bbeta), \bbeta - \bar{\bbeta}_N \rangle \ge \mu' \|\bbeta - \bar{\bbeta}_N\|^2, 
\]
a condition stronger than the Polyak--\L{}ojasiewicz inequality and weaker than the strong convexity; see \cite{karimi2016linear}. Recall that the gradient of the empirical loss of \eqref{Problem} is given by
\begin{align*}
\nabla \mathcal{L}_N(\bbeta)
=\ &\mathbb{E}_{(\bx,y)\sim\mathbb{P}_N}
\!\bigg[
(g^{-1})'\!\left(
\tilde{\bx}^{\top}\bbeta
\right)
\ell'\!\left(
g^{-1}\!\left(
\tilde{\bx}^{\top}\bbeta
\right),y
\right)\tilde{\bx}\,
\bigg].
\end{align*}
If the RSI holds with constant $\mu'$, then 
\[
\mu'\|\bar{\bbeta}_N-\bbeta^{\star}\|^2
\le
\langle\nabla\mathcal{L}_N(\bbeta^{\star}),\bbeta^{\star}-\bar{\bbeta}_N\rangle,
\]
which implies the bound
\[
\|\bar{\bbeta}_N-\bbeta^{\star}\|
\le
\frac{\|\nabla\mathcal{L}_N(\bbeta^{\star})\|}{\mu'}.
\]
Compared with the proof of the VI estimator in Theorem \ref{thm:estimator}, where $g^{-1}$ itself satisfies the strong Minty condition,
the MLE requires the composite operator $(\ell'\circ g^{-1})\cdot(g^{-1})'$ 
to satisfy an analogous Minty-type condition.
This composite structure makes the regularity requirement for MLE substantially more intricate.
\end{remark}

\subsection{Asymptotic normality of VI estimators}\label{subsec:asym}

We now proceed to establish the asymptotic normality of the VI estimator.
Our goal is to show that $\sqrt{N}(\hat\bbeta_N - \bbeta^\star)$ converges in distribution to a normal random vector with mean zero and a sandwich-type covariance matrix. 
To handle potential nonsmoothness of the link function $g^{-1}$, we employ tools from empirical process theory and a local linearization argument around the population solution $\bbeta^\star$.

Since $g^{-1}$ is globally Lipschitz by Assumption~\ref{assump}, 
Rademacher's theorem \cite[Theorem 9.60]{rockafellar2009variational} implies that it is differentiable almost everywhere on $\R$, 
and hence its nondifferentiability set 
\[
\mathcal K:=\{t:(g^{-1})'(t)\text{ does not exist}\}
\]
has Lebesgue measure zero. We further assume the following condition.

\begin{assumption}\label{assump_asym}
The random variable $\langle\tilde{\bx},\bbeta^\star\rangle$ does not fall on the nondifferentiability set $\mathcal K$ of $g^{-1}$, that is,
\[
\mathbb P\big(\langle\tilde{\bx},\bbeta^\star\rangle\in \mathcal K\big)=0.
\]
\end{assumption}

This assumption is not restrictive in practice, since $\mathcal{K}$ has Lebesgue measure zero and thus the event $\langle \tilde{\bx},\bbeta^\star\rangle \in \mathcal{K}$ occurs with probability zero for any continuous feature distribution. With this mild condition, the population VI operator $V$ is guaranteed to be differentiable at the true parameter 
$\bbeta^\star$.

\begin{lemma}\label{lem:V_diff}
Suppose that Assumptions~\ref{assump} and~\ref{assump_asym} hold. 
Then the function $V$ is Fr\'echet differentiable at $\bbeta^\star$ with its Jacobian
\[
\nabla V(\bbeta^\star)
=\mathbb E_{(\bx,y)\sim\mathbb{P}}\left[(g^{-1})'(\langle\tilde{\bx},\bbeta^\star\rangle)\,\tilde{\bx}\tilde{\bx}^\top\right].
\]
\end{lemma}

Owing to the averaging-induced smoothing effect, we obtain the asymptotic normality result.

\begin{theorem}[Asymptotic normality of the VI estimator]\label{thm:VI-CLT}
Suppose Assumptions~\ref{assump} and \ref{assump_asym} hold, and in addition
\begin{enumerate}[label=\textup{(A\arabic*)}]
\item\label{A1} 
The Jacobian
$
\nabla V(\bbeta^\star)
$
is non-singular.
\item\label{A2} The covariance matrix 
\begin{align*}
\Gamma
:\!&=
\mathbb{E}_{(\bx,y)\sim\mathbb{P}}[V_{(\bx,y)}(\bbeta^\star)V_{(\bx,y)}(\bbeta^\star)^\top] = \mathbb{E}_{(\bx,y)\sim\mathbb{P}}[(y - g^{-1}(\tilde{\bx}^\top\bbeta^\star))^2 \tilde{\bx}\tilde{\bx}^\top] 
\end{align*}
is finite and positive definite.
\end{enumerate}
Then 
$\hat\bbeta_N$ is $\sqrt N$-consistent and
\[
\sqrt N(\hat\bbeta_N-\bbeta^\star)\ \xrightarrow{d}\ \mathcal N\!\big(0,\;\nabla V(\bbeta^{\star})^{-1}\Gamma \nabla V(\bbeta^{\star})^{-\top}\big).
\]
\end{theorem}

The resulting asymptotic covariance exhibits a sandwich structure analogous to that of the MLE. While the MLE attains the Cram\'er-Rao lower bound under correctly specified models, the slight difference in asymptotic efficiency for the VI estimator represents a necessary trade-off for the superior computational stability and global convergence guarantees it offers in non-canonical settings. We provide a detailed efficiency comparison in Appendix \ref{app:eff}.

\section{Convergence of VI-Based Algorithms}\label{sec:algorithm}

This section investigates the computational efficiency of solving the empirical VI problem using iterative algorithms. We begin by analyzing the convergence behavior of a fixed-point iterative method under a fixed sample size $N$.
We then extend the analysis to the stochastic approximation setting, where the estimation is performed using a streaming data scheme.

For the following deterministic fixed-point iterative scheme under a fixed sample size, we establish its linear convergence rates under the strong Minty condition. 

\begin{algorithm}[H]\label{alg-det}
    \caption{Fixed-Point Iterative Method}
    \SetKwInOut{Input}{Input}
    \Input{Initialization $\bbeta^0$, step size $\eta> 0$,
    data $(\bx,y)\sim \bP_{N}$}
    \SetKwInOut{Output}{Output}
    \For{$t=0,1,2,\ldots,T-1$}{
    $\bbeta^{t+1}:= 
\bbeta^t- \eta\cdot V_N(\bbeta^t)$ 
    }
\end{algorithm}

\begin{theorem}[Linear convergence of fixed-point method]\label{thm:determine}
Suppose that Assumption \ref{assump} holds with the step size
$$
\eta=\frac{\mu}{L^2(1+dM^2)^2},
$$
where $\mu>0$ is the strong Minty modulus in Assumption~\ref{assump}. Then the sequence of estimates $\bbeta^{t}$ given by the Algorithm \ref{alg-det} for every $t=0,1, 2,\ldots$ satisfies
$$
\|\bbeta^t-\hat{\bbeta}_N\|^2\leq \left(1-\frac{\mu^2}{L^2(1+dM^2)^2}\right)^{t}\cdot \|\bbeta^{0}-\hat{\bbeta}_N\|^2,
$$
Furthermore, for any $\epsilon \in(0,1)$, with the probability at least $1-\epsilon$, we have the following estimation error  
\begin{align*}
\|\bbeta^t-\bbeta^{\star}\|
\leq\ & \left(1-\frac{\mu^2}{L^2(1+dM^2)^2}\right)^{t/2}\cdot \|\bbeta^{0}-\hat{\bbeta}_N\|+ \frac{RM}{\mu}\sqrt{\frac{2(d+1)\ln(2(d+1)/\epsilon)}{N}}.
\end{align*}
\end{theorem}

Next, we consider the stochastic approximation setting, in which the estimator is updated from streaming data.

\begin{algorithm}[H]\label{alg-sgd}
    \caption{Stochastic Approximation}
    \SetKwInOut{Input}{Input}
    \Input{Initialization $\bbeta^0$, step size $\eta^t> 0$,
    streaming data $(\bx^t,y^t)\sim \bP$}
    \SetKwInOut{Output}{Output}
    \For{$t=0,1,2,\ldots,T-1$}{
    $\bbeta^{t+1}:= 
\bbeta^t- \eta^t\cdot V_{(\bx^t,y^t)}(\bbeta^t)$ 
    }
\end{algorithm}

\begin{theorem}[Sublinear estimation error of stochastic approximation]\label{thm:sgd}
Suppose that Assumption~\ref{assump} holds.
Then the sequence $\{\bbeta^t\}_{t\ge0}$ generated by Algorithm~\ref{alg-sgd}
with step size
\[
\eta^t = \frac{1}{\mu(t+1)}, \quad t = 0,1,2,\ldots,
\]
where $\mu>0$ is the strong Minty modulus in Assumption~\ref{assump},
satisfies
\[
\mathbb{E}\|\bbeta^t - \bbeta^\star\|^2
\le
\frac{c_0(d+1)R^2M^2}{\mu^2(t+1)}, \quad t = 0,1,2,\ldots,
\]
for some constant $c_0>1$ that depends on the initialization.
\end{theorem}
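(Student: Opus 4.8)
The plan is to analyze the stochastic approximation recursion $\bbeta^{t+1} = \bbeta^t - \eta^t V_{(\bx^t,y^t)}(\bbeta^t)$ via the standard Robbins--Monro argument, centered at the population solution $\bbeta^\star$. Writing $e_t := \bbeta^t - \bbeta^\star$, I would expand
\[
\|e_{t+1}\|^2 = \|e_t\|^2 - 2\eta^t \langle V_{(\bx^t,y^t)}(\bbeta^t), e_t\rangle + (\eta^t)^2 \|V_{(\bx^t,y^t)}(\bbeta^t)\|^2,
\]
then take conditional expectation given the filtration $\mathcal F_t$ generated by $(\bx^0,y^0),\dots,(\bx^{t-1},y^{t-1})$. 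Since $(\bx^t,y^t)$ is independent of $\mathcal F_t$ and $\bbeta^t$ is $\mathcal F_t$-measurable, the cross term becomes $-2\eta^t \langle V(\bbeta^t), e_t\rangle \le -2\eta^t \mu \|e_t\|^2$ by the population strong Minty condition in Assumption~\ref{assump}(iv). For the noise term, Lemma~\ref{lem:upperbd} gives the deterministic bound $\|V_{(\bx^t,y^t)}(\bbeta^t)\| \le \sqrt{d+1}\,RM$, so $\mathbb E[\|V_{(\bx^t,y^t)}(\bbeta^t)\|^2 \mid \mathcal F_t] \le (d+1)R^2M^2$. This yields the scalar recursion
\[
a_{t+1} \le (1 - 2\eta^t \mu)\, a_t + (\eta^t)^2 (d+1)R^2M^2, \qquad a_t := \mathbb E\|e_t\|^2.
\]

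With the choice $\eta^t = 1/(\mu(t+1))$ we have $1 - 2\eta^t\mu = 1 - 2/(t+1) = (t-1)/(t+1)$ and $(\eta^t)^2 = 1/(\mu^2(t+1)^2)$, so
\[
a_{t+1} \le \frac{t-1}{t+1}\, a_t + \frac{(d+1)R^2M^2}{\mu^2 (t+1)^2}.
\]
I would then prove $a_t \le C/(t+1)$ with $C = c_0(d+1)R^2M^2/\mu^2$ by induction on $t$. The base case fixes $c_0$ in terms of $a_0 = \|\bbeta^0 - \bbeta^\star\|^2$ (this is where the ``$c_0>1$ depending on the initialization'' enters): we need $c_0 \ge \mu^2 a_0/((d+1)R^2M^2)$ and $c_0 > 1$. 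For the inductive step, assuming $a_t \le C/(t+1)$,
\[
a_{t+1} \le \frac{t-1}{t+1}\cdot\frac{C}{t+1} + \frac{C}{c_0(t+1)^2}\cdot c_0'
\]
--- more cleanly, $a_{t+1} \le \frac{(t-1)C}{(t+1)^2} + \frac{(d+1)R^2M^2}{\mu^2(t+1)^2} \le \frac{(t-1)C}{(t+1)^2} + \frac{C}{(t+1)^2} = \frac{tC}{(t+1)^2}$, and since $\frac{t}{(t+1)^2} \le \frac{1}{t+2}$ (equivalent to $t(t+2) \le (t+1)^2$, i.e. $t^2+2t \le t^2+2t+1$), we conclude $a_{t+1} \le C/(t+2)$, closing the induction. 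Note this step uses only $c_0 \ge 1$, consistent with the hypothesis $c_0>1$.

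The main obstacle, and the only point requiring care, is handling the earliest iterations where $1 - 2\eta^t\mu$ can be negative or the recursion contraction factor is not yet helpful: at $t=0$, $\eta^0 = 1/\mu$ gives $1 - 2\eta^0\mu = -1$, so $a_1 \le -a_0 + (d+1)R^2M^2/\mu^2$, which is automatically at most $(d+1)R^2M^2/\mu^2$; and at $t=1$, $1 - 2\eta^1\mu = 0$, giving $a_2 \le (d+1)R^2M^2/(4\mu^2)$. One must either start the induction from $t=2$ (after verifying $a_1, a_2$ satisfy the claimed bound directly) or absorb the transient into $c_0$; I would state the induction from a small fixed index and choose $c_0$ large enough to dominate $\mu^2 a_0/((d+1)R^2M^2)$, $\mu^2 a_1/((d+1)R^2M^2)\cdot 2$, and $1$. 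A secondary technical point is justifying that the updates stay well-defined (no projection is used here), which follows since $V_{(\bx^t,y^t)}$ is globally defined and bounded; finiteness of all $a_t$ follows inductively from the recursion itself. Everything else is the routine Chung-type lemma for $O(1/t)$ stochastic approximation rates.
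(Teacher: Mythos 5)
Your proposal is correct and follows essentially the same route as the paper's proof: the same one-step expansion with conditional expectation, the strong Minty condition for the cross term, the uniform bound $\|V_{(\bx,y)}(\bbeta)\|\le\sqrt{d+1}\,RM$ from Lemma~\ref{lem:upperbd} for the noise term, and the same induction on $a_t\le C/(t+1)$. Your explicit treatment of the transient iterations $t=0,1$ (where $1-2\mu\eta^t\le 0$) is in fact more careful than the paper's, which tacitly assumes $\mu\eta^t\le 1/2$ and absorbs these cases into the choice of $c_0$.
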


These results collectively show that the proposed VI-based algorithms achieve both statistical consistency and efficient convergence, even under non-smooth or non-monotone link functions, thereby underscoring the computational advantages of the proposed framework for general estimation problems.

\section{Numerical Experiments}
\label{sec:exp}

We examine the finite-sample performance of the VI estimator relative to the MLE in Poisson regression with various link functions. 
For each experiment, the true parameter is fixed as $\bbeta^\star = d^{-1/2}(1,\dots,1)\in\R^d$, and the covariates $\bx^i\in\R^d$, $i=1,\dots,N$, are drawn i.i.d.\ from $\mathcal{N}(0,I_d)$. 
Given $\bx^i$, the response $y^i\in\R$ is generated according to
\[
y^i \sim \mathrm{Poisson}\!\left(g^{-1}(\bbeta^\top \bx^i)\right),
\]
with no intercept term. 
We consider the following inverse link functions: 
(i) the log link $g^{-1}(z)=e^z$; 
(ii) the softplus link $g^{-1}(z)=\log(1+e^z)$; 
(iii) the clipped exponential link $g^{-1}(z)=\min\{e^z,2\}$; and 
(iv) a scaled Gaussian-mixture CDF with two components,
\[
g^{-1}(z)=1.65\,\Phi\!\left(\tfrac{z+0.5}{0.7}\right)+1.35\,\Phi\!\left(\tfrac{z-1.2}{0.5}\right),
\]
where $\Phi(z)=\int_{-\infty}^z \exp(-x^2/2)/\sqrt{2\pi}\,dx$ is the standard Gaussian CDF. 
The log link corresponds to the canonical Poisson model, for which the VI and MLE updates coincide. 
The softplus and clipped exponential links introduce bounded curvature and truncation in the mean function, respectively, while the Gaussian-mixture CDF provides a non-convex example designed to test robustness in more challenging settings.

For each link, both estimators are computed by iterative updates of the form
\[
    \bbeta^{k+1} = \bbeta^k - \eta^k V(\bbeta^k),\] 
    \[\bbeta^{k+1} = \bbeta^k - \eta^k \nabla \mathcal{L}(\bbeta^k),\]
with exponentially decaying step size with base $\eta^0 = 0.01$ scaled by $\sqrt{N/d}$. We consider dimensions $d \in \{10, 20, \dots, 100\}$, sample sizes $N \in \{100, 200, \dots, 1000\}$. 

Table~\ref{tab:poisson_table_softplus} reports the average squared error $\|\bbeta^k - \bbeta^\star\|^2$ over 1000 replications for both VI and MLE after $k \in \{20, 50, 100, 200\}$ iterations using the softplus link function. The VI estimator consistently attains smaller error than MLE at every iteration count, often by a larger margin especially at the early stage. A more comprehensive set of results with other link functions and a sparse parameter setting can be found in Appendix~\ref{app_exp}, where we observe that VI generally dominates MLE, except for the Gaussian-mixture CDF link in the ``easy'' regime of large sample size, low dimension, and large iterations.

\begin{table}[t]
\centering
\small
\caption{Mean squared error of the VI estimator and MLE with softplus link across iterations $k$. For each $(k, d, N)$ combination, the smaller error between the two estimators is highlighted in bold. The values in the brackets are standard deviations across 1000 independent repetitions.}
\setlength{\tabcolsep}{3.5pt} %
\renewcommand{\arraystretch}{1.05}
\begin{tabular}{cccccccccc}
\toprule
\multirow{2}{*}{$k$} & \multirow{2}{*}{$d$} &
\multicolumn{2}{c}{$N=100$} & \multicolumn{2}{c}{$N=200$} &
\multicolumn{2}{c}{$N=500$} & \multicolumn{2}{c}{$N=1000$} \\
 &  & VI & MLE & VI & MLE & VI & MLE & VI & MLE \\
\midrule
\multirow{4}{*}{\makecell[c]{20}} &
10  & \textbf{.63} {\scriptsize (.08)} & .71 {\scriptsize (.06)} &
      \textbf{.51} {\scriptsize (.07)} & .62 {\scriptsize (.06)} &
      \textbf{.34} {\scriptsize (.06)} & .47 {\scriptsize (.05)} &
      \textbf{.21} {\scriptsize (.04)} & .34 {\scriptsize (.03)} \\
& 20  & \textbf{.73} {\scriptsize (.07)} & .79 {\scriptsize (.05)} &
       \textbf{.63} {\scriptsize (.06)} & .71 {\scriptsize (.05)} &
       \textbf{.47} {\scriptsize (.05)} & .58 {\scriptsize (.04)} &
       \textbf{.33} {\scriptsize (.04)} & .46 {\scriptsize (.03)} \\
& 50  & \textbf{.82} {\scriptsize (.05)} & .87 {\scriptsize (.03)} &
       \textbf{.75} {\scriptsize (.04)} & .81 {\scriptsize (.03)} &
       \textbf{.63} {\scriptsize (.04)} & .71 {\scriptsize (.03)} &
       \textbf{.51} {\scriptsize (.03)} & .62 {\scriptsize (.03)} \\
& 100 & \textbf{.88} {\scriptsize (.03)} & .91 {\scriptsize (.02)} &
       \textbf{.83} {\scriptsize (.03)} & .87 {\scriptsize (.02)} &
       \textbf{.73} {\scriptsize (.03)} & .79 {\scriptsize (.02)} &
       \textbf{.63} {\scriptsize (.03)} & .71 {\scriptsize (.02)} \\
\midrule
\multirow{4}{*}{\makecell[c]{50}} &
10  & \textbf{.47} {\scriptsize (.10)} & .57 {\scriptsize (.09)} &
      \textbf{.32} {\scriptsize (.08)} & .44 {\scriptsize (.07)} &
      \textbf{.16} {\scriptsize (.05)} & .27 {\scriptsize (.05)} &
      \textbf{.08} {\scriptsize (.03)} & .16 {\scriptsize (.03)} \\
& 20  & \textbf{.59} {\scriptsize (.09)} & .67 {\scriptsize (.07)} &
       \textbf{.46} {\scriptsize (.07)} & .57 {\scriptsize (.06)} &
       \textbf{.28} {\scriptsize (.05)} & .40 {\scriptsize (.05)} &
       \textbf{.16} {\scriptsize (.03)} & .27 {\scriptsize (.03)} \\
& 50  & \textbf{.75} {\scriptsize (.06)} & .80 {\scriptsize (.05)} &
       \textbf{.64} {\scriptsize (.06)} & .71 {\scriptsize (.05)} &
       \textbf{.46} {\scriptsize (.05)} & .57 {\scriptsize (.04)} &
       \textbf{.32} {\scriptsize (.04)} & .44 {\scriptsize (.03)} \\
& 100 & \textbf{.83} {\scriptsize (.05)} & .86 {\scriptsize (.04)} &
       \textbf{.74} {\scriptsize (.05)} & .79 {\scriptsize (.04)} &
       \textbf{.59} {\scriptsize (.04)} & .68 {\scriptsize (.03)} &
       \textbf{.46} {\scriptsize (.03)} & .56 {\scriptsize (.03)} \\
\midrule
\multirow{4}{*}{\makecell[c]{100}} &
10  & \textbf{.40} {\scriptsize (.11)} & .50 {\scriptsize (.10)} &
      \textbf{.26} {\scriptsize (.08)} & .36 {\scriptsize (.08)} &
      \textbf{.12} {\scriptsize (.04)} & .20 {\scriptsize (.04)} &
      \textbf{.05} {\scriptsize (.02)} & .11 {\scriptsize (.03)} \\
& 20  & \textbf{.54} {\scriptsize (.10)} & .62 {\scriptsize (.08)} &
       \textbf{.40} {\scriptsize (.08)} & .50 {\scriptsize (.07)} &
       \textbf{.22} {\scriptsize (.05)} & .32 {\scriptsize (.05)} &
       \textbf{.12} {\scriptsize (.03)} & .20 {\scriptsize (.03)} \\
& 50  & \textbf{.71} {\scriptsize (.07)} & .76 {\scriptsize (.06)} &
       \textbf{.59} {\scriptsize (.06)} & .66 {\scriptsize (.05)} &
       \textbf{.40} {\scriptsize (.05)} & .50 {\scriptsize (.04)} &
       \textbf{.26} {\scriptsize (.04)} & .36 {\scriptsize (.03)} \\
& 100 & \textbf{.82} {\scriptsize (.05)} & .84 {\scriptsize (.04)} &
       \textbf{.72} {\scriptsize (.05)} & .76 {\scriptsize (.04)} &
       \textbf{.54} {\scriptsize (.05)} & .63 {\scriptsize (.04)} &
       \textbf{.40} {\scriptsize (.04)} & .50 {\scriptsize (.03)} \\
\midrule
\multirow{4}{*}{\makecell[c]{200}} &
10  & \textbf{.38} {\scriptsize (.12)} & .48 {\scriptsize (.10)} &
      \textbf{.24} {\scriptsize (.08)} & .34 {\scriptsize (.08)} &
      \textbf{.10} {\scriptsize (.04)} & .18 {\scriptsize (.04)} &
      \textbf{.05} {\scriptsize (.02)} & .09 {\scriptsize (.02)} \\
& 20  & \textbf{.53} {\scriptsize (.11)} & .61 {\scriptsize (.09)} &
       \textbf{.38} {\scriptsize (.08)} & .47 {\scriptsize (.07)} &
       \textbf{.20} {\scriptsize (.04)} & .30 {\scriptsize (.04)} &
       \textbf{.10} {\scriptsize (.03)} & .18 {\scriptsize (.03)} \\
& 50  & \textbf{.71} {\scriptsize (.07)} & .75 {\scriptsize (.06)} &
       \textbf{.57} {\scriptsize (.07)} & .64 {\scriptsize (.06)} &
       \textbf{.38} {\scriptsize (.05)} & .48 {\scriptsize (.04)} &
       \textbf{.24} {\scriptsize (.04)} & .34 {\scriptsize (.03)} \\
& 100 & \textbf{.82} {\scriptsize (.06)} & .84 {\scriptsize (.04)} &
       \textbf{.70} {\scriptsize (.06)} & .75 {\scriptsize (.04)} &
       \textbf{.52} {\scriptsize (.05)} & .61 {\scriptsize (.04)} &
       \textbf{.38} {\scriptsize (.04)} & .48 {\scriptsize (.03)} \\
\bottomrule
\end{tabular}
\label{tab:poisson_table_softplus}

\end{table}

Figure~\ref{fig:trajectories} and Figure~\ref{fig:Kcurve} illustrate representative convergence behavior for $(d,N)= (20, 400)$. Figure~\ref{fig:trajectories} shows the sample convergence trajectories of the VI and MLE iterates for all four link functions. The trajectories for the log link coincide exactly, as expected from their equivalence as in Appendix \ref{app:examples}. For non-canonical links, the VI update exhibits noticeably faster convergence in the early iteration phase. In the clipped exponential case, the VI and MLE trajectories coincide during the initial iterations because the iterates have not yet reached the truncation boundary, and the model effectively behaves like the log link in that region. For the Gaussian-mixture CDF link, which induces a highly non-convex regime, VI avoids the oscillations and occasional divergence observed in MLE.

Figure~\ref{fig:Kcurve} reports the average squared error against the iteration count $k$. The gap between VI and MLE is particularly pronounced for the softplus link, where VI converges significantly faster. For the clipped exponential link, the gap becomes more pronounced in the later iterations as the iterations approach the truncation boundary. For the Gaussian-mixture CDF link, the gap is largest during the early iterations and gradually narrows as both methods stabilize.

Overall, the results demonstrate that the proposed VI framework achieves comparable estimation accuracy to MLE under finite sample sizes, while often exhibiting faster convergence and greater numerical stability for non-canonical link functions. With canonical links, both methods coincide, but in more general GLMs, the VI formulation effectively mitigates the flat-landscape and conditioning issues that slow gradient-based likelihood estimation.

\begin{figure}[!t]
    \centering
    \begin{subfigure}[t]{0.24\linewidth}
        \centering
        \includegraphics[width=\linewidth]{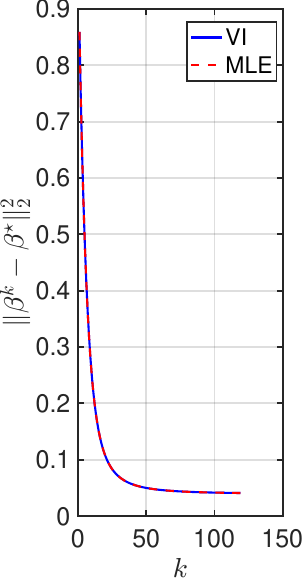}
        \caption{log}
    \end{subfigure}\hfill
    \begin{subfigure}[t]{0.24\linewidth}
        \centering
        \includegraphics[width=\linewidth]{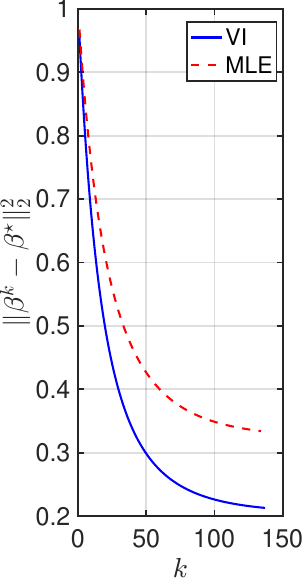}
        \caption{softplus}
    \end{subfigure}\hfill
    \begin{subfigure}[t]{0.24\linewidth}
        \centering
        \includegraphics[width=\linewidth]{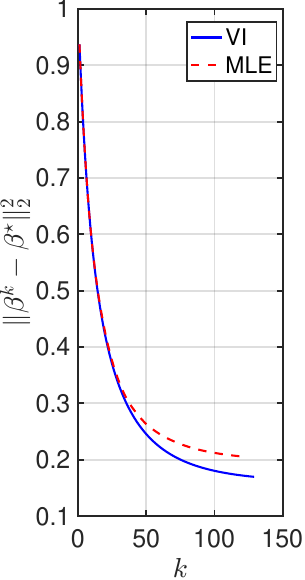}
        \caption{clipped exp.}
    \end{subfigure}\hfill
    \begin{subfigure}[t]{0.24\linewidth}
        \centering
        \includegraphics[width=\linewidth]{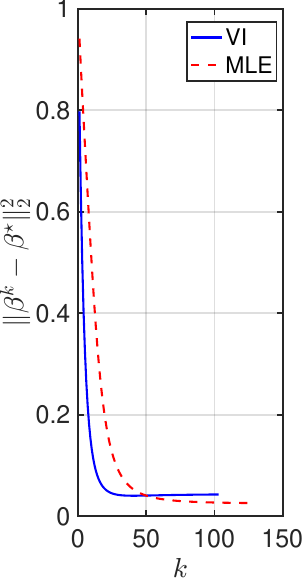}
        \caption{GMM CDF}
    \end{subfigure}
    \caption{Convergence trajectories of VI and MLE for Poisson regression with different link functions ($d=20, N=400$).}
    \label{fig:trajectories}
\end{figure}

\begin{figure}[!t]
    \centering
    \begin{subfigure}[t]{0.33\linewidth}
        \centering
        \includegraphics[width=\linewidth]{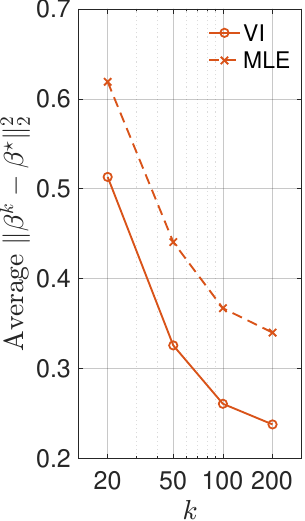}
        \caption{softplus}
    \end{subfigure}\hfill
    \begin{subfigure}[t]{0.33\linewidth}
        \centering
        \includegraphics[width=\linewidth]{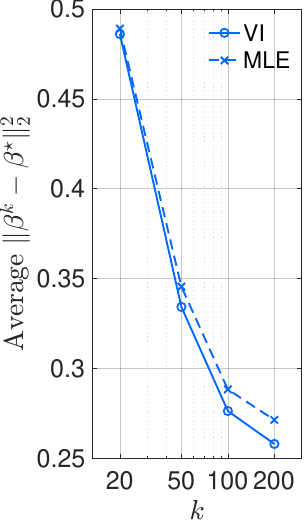}
        \caption{clipped exp.}
    \end{subfigure}\hfill
    \begin{subfigure}[t]{0.33\linewidth}
        \centering
        \includegraphics[width=\linewidth]{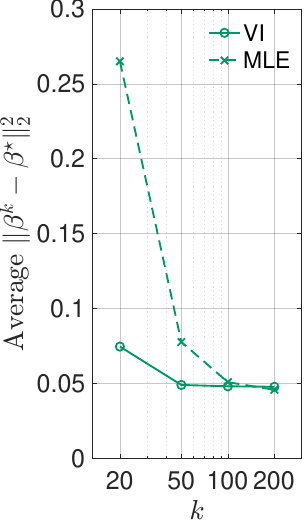}
        \caption{GMM CDF}
    \end{subfigure}
    \caption{Average squared error for VI and MLE against iteration budgets for Poisson regression with different link functions ($d=20, N=400$).}
    \label{fig:Kcurve}
\end{figure}

\section{Extension to Generalized Additive Models}\label{sec:gam}

Generalized additive models (GAMs) \cite{hastie1986generalized} extend GLMs by replacing the linear predictor
with an additive (component-wise) structure, while retaining the exponential-family modeling of the response.
Suppose we observe i.i.d.\ pairs $(\bx,y)$, where $\bx=(x_1,\ldots,x_d)\in\mathbb{R}^d$ is the feature vector and
$y\in\mathbb{R}$ is the response. A GAM links the conditional mean to the covariates via
\begin{equation*}
\mathbb{E}[y\mid \bx] = g^{-1}\!\left( \beta_0 + \sum_{j=1}^d f_j(x_j) \right),
\end{equation*}
where $g^{-1}$ is the inverse link function, $\beta_0\in\mathbb{R}$ is the intercept, and each additive component
$f_j:\mathbb{R}\to\mathbb{R}$ describes the nonlinear effect of the $j$-th feature. To ensure identifiability,
we impose the standard constraint $f_j(0)=0$ (alternatively one may impose a centering constraint such as
$\mathbb{E}[f_j(X_j)]=0$).

\paragraph{Basis representation.}
To render the estimation problem finite-dimensional, we approximate each component function $f_j$ using a truncated
basis expansion. Let $\{\psi_{k}\}_{k=1}^K$ be a set of basis functions (e.g., B-splines or polynomials) satisfying
$\psi_k(0)=0$, so that $f_j(0)=0$ holds automatically. We parameterize
\[
f_j(x_j) \approx \sum_{k=1}^K \beta_{jk}\,\psi_k(x_j),\quad j=1,\ldots,d.
\]
Define the stacked feature mapping $\Phi:\mathbb{R}^d\to\mathbb{R}^{dK}$ by
\[
\Phi(\bx) := [\psi_1(x_1),\ldots,\psi_K(x_1),\ \ldots,\ \psi_1(x_d),\ldots,\psi_K(x_d)]^\top,
\]
and
$\tilde{\Phi}(\bx):=[1;\Phi(\bx)]\in\mathbb{R}^{dK+1}$. Let $\bbeta\in\mathbb{R}^{dK+1}$ denote the augmented parameter vector collecting the intercept and all basis coefficients,
i.e., $\bbeta=(\beta_0,\{\beta_{jk}\}_{j\in[d],k\in[K]})$. Under this formulation, the additive predictor becomes linear in the
transformed feature space:
\[
\beta_0+\sum_{j=1}^d\sum_{k=1}^K \beta_{jk}\psi_k(x_j)=\langle \bbeta,\tilde{\Phi}(\bx)\rangle,
\]
and hence the basis-expanded GAM reduces to a GLM problem in $\mathbb{R}^{dK+1}$.

The choice of basis functions impacts the theoretical properties of the estimator. We require the basis functions to be orthogonal (or at least linearly independent) with respect to the data distribution. If the basis functions are collinear, the covariance matrix of the augmented features $\mathbb{E}[\tilde{\Phi}(\bx)\tilde{\Phi}(\bx)^\top]$ becomes singular. This constitutes a degenerate case where the strong Minty condition (Assumption~\ref{assump}(iv)) fails (i.e., the modulus $\mu \to 0$), thereby compromising the uniqueness and stability guarantees of the VI estimator.

\paragraph{VI estimator for GAM.}
Following the framework established for GLMs, we define the VI estimator for GAMs by substituting the original covariates
with the basis features. The empirical vector field $V_N:\mathbb{R}^{dK+1}\to\mathbb{R}^{dK+1}$ is
\begin{equation*}
V_N(\bbeta)
:= \mathbb{E}_{(\bx,y)\sim\mathbb{P}_N}
\left[\left( g^{-1}\!\big(\langle \bbeta,\tilde{\Phi}(\bx)\rangle\big) - y \right)\cdot \tilde{\Phi}(\bx)\right].
\end{equation*}
The VI estimator $\hat{\bbeta}_N$ is then defined as the solution to
$V_N(\hat{\bbeta}_N)=\mathbf{0}$ (whenever the solution is unique).

This formulation allows us to extend the theoretical guarantees of the VI estimator to the GAM setting.
In particular, when the basis size $K$ is fixed, the consistency and asymptotic normality results established for GLMs
carry over by replacing the dimension factor $d+1$ with $dK+1$ and using the corresponding Lipschitz/Minty constants associated
with the transformed features $\tilde{\Phi}(\bx)$.

\begin{example}
We examine the reconstruction of a GAM using the VI and MLE estimators in a toy experiment. Consider a GAM of the form
\[
    \E{y \mid \bx} = g^{-1}\Bigg(\sum_{j=1}^d f_j(x_j)\Bigg), \qquad \bx \in [-1, 1]^d.
\]
Each additive component $f_j$ is expanded using a truncated Legendre basis without the constant term,
which enforces identifiability via $f_j(0) = 0$,
$
    f_j(\bx) = \sum_{k=1}^K \beta_{jk} P_k(x_j),
$
where $P_k$ denotes the $k$-th Legendre polynomial on $[-1,1]$.
With this basis expansion, the GAM reduces to a finite-dimensional GLM, to which we apply both the VI and MLE updates.

Detailed experimental settings and results are deferred to Appendix \ref{app:GAM_exp}.
Here we briefly summarize the main observations.
For non-canonical links, VI typically exhibits more stable optimization behavior and slightly improved reconstruction of the underlying additive components.
These results demonstrate that the VI estimator naturally extends to GAMs and retains its key advantages beyond linear predictors.

\end{example}

\section{Discussion}
\label{sec:discussion}

This work clarifies the operator geometry underlying variational inequality formulations and shows that the behavior of the VI estimator is governed by the strong Minty modulus and the Lipschitz constant. These quantities determine convergence, stability, and statistical rates, suggesting a simple guideline for link-function design: larger Minty constants and smaller Lipschitz constants improve performance. Compared to MLE, which is constrained by the likelihood consistency of the composite loss $\ell\circ g^{-1}$, the VI framework offers greater flexibility in challenging regimes. In practice, VI is most effective when likelihood objectives exhibit flat or non-smooth regions, while canonical models remain well served by MLE.
From a robustness perspective, the VI formulation decouples estimation from likelihood curvature, making stability depend on operator geometry rather than precise probabilistic specification.

\section*{Acknowledgment} 

This work is partially supported by an NSF DMS-2220495, CNS-2220387, NSF DMS-2134037, and the Coca-Cola Foundation.

\bibliography{references}

@article{cao2015poisson,
  title={Poisson matrix recovery and completion},
  author={Cao, Yang and Xie, Yao},
  journal={IEEE Transactions on Signal Processing},
  volume={64},
  number={6},
  pages={1609--1620},
  year={2015},
  publisher={IEEE}
}

@article{he2020point,
  title={Point process estimation with mirror prox algorithms},
  author={He, Niao and Harchaoui, Zaid and Wang, Yichen and Song, Le},
  journal={Applied Mathematics \& Optimization},
  volume={82},
  number={3},
  pages={919--947},
  year={2020},
  publisher={Springer}
}

@article{juditsky2019signal,
  title={Signal recovery by stochastic optimization},
  author={Juditsky, Anatoli and Nemirovski, Arkadi},
  journal={Automation and Remote Control},
  volume={80},
  number={10},
  pages={1878--1893},
  year={2019},
  publisher={Springer}
}

@article{huang2023beyond,
  title={Beyond monotone variational inequalities: Solution methods and iteration complexities},
  author={Huang, Kevin and Zhang, Shuzhong},
  journal={arXiv preprint arXiv:2304.04153},
  year={2023}
}

@inproceedings{kalai2009isotron,
  title        = {The Isotron Algorithm: High-Dimensional Isotonic Regression},
  author       = {Kalai, Adam Tauman and Sastry, Ravi},
  booktitle    = {Proceedings of the 22nd Annual Conference on Learning Theory (COLT)},
  pages        = {9},
  year         = {2009}
}

@article{kakade2011efficient,
  title={Efficient learning of generalized linear and single index models with isotonic regression},
  author={Kakade, Sham M. and Kanade, Varun and Shamir, Ohad and Kalai, Adam},
  journal={Advances in Neural Information Processing Systems},
  volume={24},
  year={2011}
}

@article{hastie1986generalized,
  title={Generalized additive models},
  author={Hastie, Trevor and Tibshirani, Robert},
  journal={Statistical science},
  volume={1},
  number={3},
  pages={297--310},
  year={1986},
  publisher={Institute of Mathematical Statistics}
}

@article{baillon1977quelques,
  title={Quelques propri{\'e}t{\'e}s des op{\'e}rateurs angle-born{\'e}s et n-cycliquement monotones},
  author={Baillon, Jean-Bernard and Haddad, Georges},
  journal={Israel Journal of Mathematics},
  volume={26},
  pages={137--150},
  year={1977},
  publisher={Springer}
}

@book{vershynin2018high,
  title        = {High-Dimensional Probability: An Introduction with Applications in Data Science},
  author       = {Vershynin, Roman},
  series       = {Cambridge Series in Statistical and Probabilistic Mathematics},
  volume       = {47},
  year         = {2018},
  publisher    = {Cambridge University Press},
  address      = {Cambridge, UK}
}

@article{nelder1972generalized,
  title={Generalized linear models},
  author={Nelder, John Ashworth and Wedderburn, Robert W. M.},
  journal={Journal of the Royal Statistical Society Series A: Statistics in Society},
  volume={135},
  number={3},
  pages={370--384},
  year={1972},
  publisher={Oxford University Press}
}

@book{mccullagh1989generalized,
  title        = {Generalized Linear Models},
  author       = {McCullagh, Peter and Nelder, John A.},
  year         = {1989},
  publisher    = {Chapman and Hall},
  address      = {London},
  edition      = {2nd}
}

@incollection{van1996weak,
  title        = {Weak Convergence},
  author       = {Van der Vaart, Aad W. and Wellner, Jon A.},
  booktitle    = {Weak Convergence and Empirical Processes: With Applications to Statistics},
  pages        = {16--28},
  year         = {1996},
  publisher    = {Springer},
  address      = {New York, NY}
}

@article{juditsky2023generalized,
  title={Generalized generalized linear models: Convex estimation and online bounds},
  author={Juditsky, Anatoli and Nemirovski, Arkadi and Xie, Yao and Xu, Chen},
  journal={arXiv preprint arXiv:2304.13793},
  year={2023}
}

@article{juditsky2020convex,
  title={Convex parameter recovery for interacting marked processes},
  author={Juditsky, Anatoli and Nemirovski, Arkadi and Xie, Liyan and Xie, Yao},
  journal={IEEE Journal on Selected Areas in Information Theory},
  volume={1},
  number={3},
  pages={799--813},
  year={2020},
  publisher={IEEE}
}

@article{lou2025accurate,
  title={Accurate, provable, and fast nonlinear tomographic reconstruction: A variational inequality approach},
  author={Lou, Mengqi and Verchand, Kabir Aladin and Fridovich-Keil, Sara and Pananjady, Ashwin},
  journal={arXiv preprint arXiv:2503.19925},
  year={2025}
}

@article{rosenblatt1958perceptron,
  title={The perceptron: a probabilistic model for information storage and organization in the brain.},
  author={Rosenblatt, Frank},
  journal={Psychological Review},
  volume={65},
  number={6},
  pages={386},
  year={1958},
  publisher={American Psychological Association}
}

@inproceedings{zhou2024nonlinear,
  title={Nonlinear time-series embedding by monotone variational inequality},
  author={Zhou, Jonathan Y and Xie, Yao},
  booktitle = {Proceedings of the 13th International Conference on Learning Representations (ICLR)},
  year      = {2025},
}

@article{cheng2024point,
  title={Point processes with event time uncertainty},
  author={Cheng, Xiuyuan and Gong, Tingnan and Xie, Yao},
  journal={arXiv preprint arXiv:2411.02694},
  year={2024}
}

@inproceedings{karimi2016linear,
  title={Linear convergence of gradient and proximal-gradient methods under the {P}olyak-{{\L}}ojasiewicz condition},
  author={Karimi, Hamed and Nutini, Julie and Schmidt, Mark},
  booktitle={Joint European Conference on Machine Learning and Knowledge Discovery in Databases},
  pages={795--811},
  year={2016},
  organization={Springer}
}

@book{rockafellar2009variational,
  title        = {Variational Analysis},
  author       = {Rockafellar, R. Tyrrell and Wets, Roger J.-B.},
  series       = {Grundlehren der Mathematischen Wissenschaften},
  volume       = {317},
  edition      = {2nd},
  year         = {2009},
  publisher    = {Springer},
  address      = {Berlin, Heidelberg}
}

@article{green1984iteratively,
  title        = {Iteratively Reweighted Least Squares for Maximum Likelihood Estimation, and Some Robust and Resistant Alternatives},
  author       = {Green, Peter J.},
  journal      = {Journal of the Royal Statistical Society: Series B},
  volume       = {46},
  number       = {2},
  pages        = {149--170},
  year         = {1984},
  publisher    = {Wiley}
}

@book{nesterov2013introductory,
  title        = {Introductory Lectures on Convex Optimization: A Basic Course},
  author       = {Nesterov, Yurii},
  series       = {Applied Optimization},
  volume       = {87},
  year         = {2013},
  publisher    = {Springer},
  address      = {New York, NY}
}

@book{kushner2003stochastic,
  title        = {Stochastic Approximation and Recursive Algorithms and Applications},
  author       = {Kushner, Harold J. and Yin, G. George},
  series       = {Stochastic Modelling and Applied Probability},
  volume       = {35},
  edition      = {2nd},
  year         = {2003},
  publisher    = {Springer},
  address      = {New York, NY}
}
\bibliographystyle{plain}

\newpage
\appendix
\onecolumn

\section{Notation}\label{sec:notation}
The notation in the paper is standard. Let $\mathbb{R}^d$ denote the $d$-dimensional Euclidean space equipped with the inner product 
$\langle \bm{u}, \bm{v}\rangle := \bm{u}^{\top}\bm{v}$ and the induced norm 
$\|\bm{u}\| := \sqrt{\langle \bm{u}, \bm{u}\rangle}$. 
We write $\|\cdot\|_{\infty}$ for the infinity norm. For a convex set $\mathcal{B} \subseteq \mathbb{R}^{d}$, the Euclidean projection of a point 
$\bbeta$ onto $\mathcal{B}$ is defined as
$
\proj_{\mathcal{B}}(\bbeta) := \argmin_{\bbeta' \in \mathcal{B}} \|\bbeta' - \bbeta\|$,
and the corresponding distance is given by
$
\operatorname{dist}(\bbeta, \mathcal{B}) := \|\bbeta - \proj_{\mathcal{B}}(\bbeta)\|$.
For a general vector field $F:\mathbb{R}^d \to \mathbb{R}^d$, its solution set is defined as
\begin{align*}
\operatorname{Sol}(F)
:\!&= \left\{\hat{\bbeta} \in \mathbb{R}^d :
\langle F(\hat{\bbeta}), \bbeta - \hat{\bbeta} \rangle \ge 0,\ 
\forall \bbeta \in \mathbb{R}^d \right\}= \left\{\hat{\bbeta} \in \mathbb{R}^d : F(\hat{\bbeta}) = \mathbf{0}\right\}.
\end{align*}
We use $\nabla F(\bbeta)$ to denote the Jacobian of $F$ at $\bbeta$. 
All random quantities are defined on a common probability space 
$(\Omega, \mathcal{F}, \mathbb{P})$. 
We consider i.i.d.\ samples $\{(\bx^i, y^i)\}_{i=1}^N$ drawn from an unknown distribution 
$\mathbb{P}$, where each covariate $\bx^i \in \mathbb{R}^{d}$ and response $y^i \in \mathbb{R}$. 
The empirical measure is denoted by
$
\mathbb{P}_N := \frac{1}{N}\sum_{i=1}^N \delta_{(\bx^i, y^i)}.
$
Denote by $V_{(\bx^i, y^i)}$ the per-sample VI operator and by $\mathcal{L}_{(\bx^i, y^i)}$ the per-sample MLE negative log-likelihood function.
 We use $\xrightarrow{d}$ to denote convergence in distribution. The notations $\mathcal{O}(\cdot)$ and $\tilde{\mathcal{O}}(\cdot)$ represent standard and logarithmically-tight asymptotic order, respectively, where $\tilde{\mathcal{O}}(f(N)) = \mathcal{O}(f(N)\,\mathrm{polylog}(N))$. The notation $\mathcal{O}_p(\cdot)$ denotes stochastic order in probability, that is, $X_N = \mathcal{O}_p(a_N)$ means that the sequence $\{X_N / a_N\}$ is bounded in probability. Likewise, the notations $o(\cdot)$ and $o_p(\cdot)$ denote deterministic and stochastic convergence to zero, respectively.

\section{Detailed MLE and VI Optimization Landscape Derivations}
\label{sec:setup_appendix}

In this appendix, we supplement the motivation provided in Section~\ref{sec:setup} with detailed mathematical derivations and a rigorous analysis of the optimization landscapes. 
First, we explicitly derive the gradients for several classical GLMs, including logistic, linear, exponential, and Poisson regression, to demonstrate the precise equivalence between MLE and VI formulation under canonical link functions (Appendix~\ref{app:examples}). 
Second, we investigate the pathological behaviors of MLE in non-canonical settings. We provide an analytical calculation of the Hessian for the softplus link to quantify the ill-conditioning (flatness) of the MLE objective, and we present a detailed characterization of the non-convexity and non-smoothness induced by robust clipped links. These analyses highlight the structural advantages of the VI operator, which retains monotonicity even when the corresponding likelihood objective fails to be convex.

\subsection{Examples of Vector Fields for Canonical MLE and VI}
\label{app:examples}

This section provides several representative examples of empirical vector fields arising from
the MLE and the corresponding VI
formulations. For each GLM, we derive the gradient of the empirical
NLL function $\mathcal{L}_N(\bbeta)$ and verify that it coincides
with the VI operator $V_N(\bbeta)$ under the canonical link function.

\subsubsection{Logistic regression}
For the binary logistic regression,  the empirical NLL is given by
\begin{align*}
\mathcal{L}_N(\bbeta)
= -\frac{1}{N}\sum_{i=1}^N \Bigl(
&y^i\log g^{-1}(\bbeta^\top \tilde{\bx}^i)+ (1-y^i)\log\bigl(1-g^{-1}(\bbeta^\top \tilde{\bx}^i)\bigr)
\Bigr).
\end{align*}
with gradient given by
\begin{align*}
\nabla\mathcal{L}_N(\bbeta)
= \frac{1}{N}\sum_{i=1}^N\;&
\frac{(g^{-1})'(\bbeta^\top \tilde{\bx}^i)}
{g^{-1}(\bbeta^\top \tilde{\bx}^i)\bigl(1-g^{-1}(\bbeta^\top \tilde{\bx}^i)\bigr)}\cdot\Bigl(g^{-1}(\bbeta^\top \tilde{\bx}^i)-y^i\Bigr)\,\tilde{\bx}^i.
\end{align*}
For the logistic (sigmoid) link function
$
g^{-1}(z) = 1/(1+e^{-z}),
$
one can show that $(g^{-1})'(z)=g^{-1}(z)(1-g^{-1}(z))$, which simplifies the gradient to
\[
\nabla\mathcal{L}_N(\bbeta)
= \frac{1}{N}\sum_{i=1}^N
\left(g^{-1}(\bbeta^\top \tilde{\bx}^i)-y^i\right)\tilde{\bx}^i,
\]
and the last term above is identical to $V_N(\bbeta)$ in this case.

\subsubsection{Linear regression}

For the Gaussian (linear) model with an identity link, the empirical NLL is
\[
\mathcal{L}_N(\bbeta)
= \frac{1}{N}\sum_{i=1}^N \left(g^{-1}(\bbeta^\top \tilde{\bx}^i)-y^i\right)^2,
\]
whose gradient is
\[
\nabla\mathcal{L}_N(\bbeta)
= \frac{1}{N}\sum_{i=1}^N (g^{-1})'(\bbeta^\top \tilde{\bx}^i)
\left(g^{-1}(\bbeta^\top \tilde{\bx}^i)-y^i\right)\tilde{\bx}^i.
\]
With the identity link $g^{-1}(z)=z$, we have $(g^{-1})'(z)=1$, leading to
\[
\nabla\mathcal{L}_N(\bbeta)
= \frac{1}{N}\sum_{i=1}^N \left(g^{-1}(\bbeta^\top \tilde{\bx}^i)-y^i\right)\tilde{\bx}^i
= V_N(\bbeta).
\]

\subsubsection{Exponential regression}

For the exponential model, the empirical NLL with $\bbeta^\top \tilde{\bx}^i>0$ is
\[
\mathcal{L}_N(\bbeta)
= \frac{1}{N}\sum_{i=1}^N
\left(\log g^{-1}(\bbeta^\top \tilde{\bx}^i) + y^i/g^{-1}(\bbeta^\top \tilde{\bx}^i)\right),
\]
and the corresponding gradient is
\[
\nabla\mathcal{L}_N(\bbeta)
= \frac{1}{N}\sum_{i=1}^N (g^{-1})'(\bbeta^\top \tilde{\bx}^i)
\left(\frac{1}{g^{-1}(\bbeta^\top \tilde{\bx}^i)}
- \frac{y^i}{g^{-1}(\bbeta^\top \tilde{\bx}^i)^2}\right)\tilde{\bx}^i.
\]
For the inverse link $g^{-1}(z)=z^{-1}$, we have $(g^{-1})'(z)=-z^{-2}$, yielding
\[
\nabla\mathcal{L}_N(\bbeta)
= -\frac{1}{N}\sum_{i=1}^N
\left(g^{-1}(\bbeta^\top \tilde{\bx}^i)-y^i\right)\tilde{\bx}^i
= -V_N(\bbeta).
\]
The gradient $\nabla\mathcal{L}_N(\bbeta)$ differs from $V_N(\bbeta)$ only by a sign, so minimizing the NLL is equivalent to solving $V_N(\bbeta)=\bz$, yielding the same optimal solution.

\subsubsection{Poisson regression}

For the Poisson model, the empirical NLL is
\[
\mathcal{L}_N(\bbeta)
= \frac{1}{N}\sum_{i=1}^N
\left(-y^i\log g^{-1}(\bbeta^\top \tilde{\bx}^i) + g^{-1}(\bbeta^\top \tilde{\bx}^i)\right).
\]
The gradient is
\[
\nabla\mathcal{L}_N(\bbeta)
= \frac{1}{N}\sum_{i=1}^N (g^{-1})'(\bbeta^\top \tilde{\bx}^i)
\left(-\frac{y^i}{g^{-1}(\bbeta^\top \tilde{\bx}^i)}+1\right)\tilde{\bx}^i.
\]
With the exponential link $g^{-1}(z)=e^z$, we have $(g^{-1})'(z)=g^{-1}(z)$, so
\[
\nabla\mathcal{L}_N(\bbeta)
= \frac{1}{N}\sum_{i=1}^N
\left(g^{-1}(\bbeta^\top \tilde{\bx}^i)-y^i\right)\tilde{\bx}^i
= V_N(\bbeta).
\]

\subsection{Flatness of MLE optimization landscape}

 We show that even when standard smooth link functions are applied to improve curvature, the resulting optimization landscape of the MLE can still exhibit considerably slower convergence than the VI formulation. 
This phenomenon is illustrated using the softplus link function, under which the MLE score and the VI operator define different estimating equations.

Recall the softplus inverse link function:
\[
g^{-1}(z)=\log(1+e^{z}),\quad
(g^{-1})'(z)=\sigma(z):=\frac{1}{1+e^{-z}},\quad
\]
and
\[
(g^{-1})''(z)=\sigma(z)\left(1-\sigma(z)\right).
\]
For a single sample $(\bx,y)$, the Poisson negative log-likelihood is
\[
\mathcal{L}_{(\bx,y)}(\bbeta)= -y\log(g^{-1}(z))+g^{-1}(z),\quad z=\tilde{\bx}^\top \bbeta,
\]
with gradient and Hessian given by, respectively, 
\[
\nabla \mathcal{L}_{(\bx,y)}(\bbeta)
= \sigma(z)\!\left(1-\frac{y}{\log(1+e^z)}\right) \tilde{\bx},
\]
\[
\nabla^2 \mathcal{L}_{(\bx,y)}(\bbeta)
=\Bigg\{
\sigma(z)\bigl(1-\sigma(z)\bigr)\!\left(1-\frac{y}{\log(1+e^z)}\right)
+\frac{y\,\sigma(z)^2}{\log(1+e^z)^2}
\Bigg\}\, \tilde{\bx} \tilde{\bx}^\top .
\]
On the other hand, the VI estimating operator is
$
V_{(\bx,y)}(\bbeta)=\left(g^{-1}(z)-y\right)\tilde{\bx}
$.
At the population level, it follows that
\[
\nabla V(\bbeta)
=\mathbb{E}_{\bx\sim\mathbb{P}_{\bx}}\left[\sigma(z)\cdot\tilde{\bx}\tilde{\bx}^\top\right],\quad \nabla^2 \mathcal{L}(\bbeta)
=\mathbb{E}_{\bx\sim\mathbb{P}_{\bx}}\left[\frac{\sigma(z)^2}{g^{-1}(z)}\cdot\tilde{\bx}\tilde{\bx}^\top\right].
\]
Since $g^{-1}(z)\ge \sigma(z)$ for all $z$ (with equality only as $z\to-\infty$), we have $\sigma(z)\ge \sigma(z)^2/g^{-1}(z)$ pointwise and hence
\[
\nabla V(\bbeta)-\nabla^2 \mathcal{L}(\bbeta)
=\mathbb{E}_{\bx\sim\mathbb{P}_{\bx}}\left[\left(\sigma(z)-\frac{\sigma(z)^2}{g^{-1}(z)}\right)\cdot\tilde{\bx}\tilde{\bx}^\top\right]\succeq \bz.
\]
Moreover, we know that
\[
\frac{\sigma(z)}{\sigma^2(z)/g^{-1}(z)}=\frac{g^{-1}(z)}{\sigma(z)}=\frac{\log(1+e^z) (1+e^z)}{e^z}\rightarrow+\infty\quad \text{as}\quad z\to+\infty, 
\]
which reveals a flat-growth regime in the right tail when the features are not linearly dependent  (i.e., $\mathbb{E}_{\bx\sim\mathbb{P}_{\bx}}[\tilde{\bx}\tilde{\bx}^\top]\succ \bz$): the curvature weight of the MLE decays relative to that of the VI formulation. 
Consequently, optimization solvers exhibit weaker local contraction when applied to the MLE, and statistical concentration bounds that depend on the curvature modulus become looser.

\section{Characterization of Minty Conditions}\label{app:minty}

To start with, we introduce the following definition, which is widely used for solving variational inequalities, e.g., \cite{huang2023beyond}.
\begin{definition}[Strong Minty condition]\label{def:strong_minty}
The vector field $F:\R^{d}\rightarrow \R^{d}$ satisfies the strong Minty condition with constant $\mu \geq 0$ if $\operatorname{Sol}(F)\neq\emptyset$ and
$$
\langle F(\bbeta), \bbeta-\proj_{\operatorname{Sol}(F)}(\bbeta)\rangle \geq \mu\cdot\dist^2(\bbeta,\operatorname{Sol}(F)), \quad \forall \bbeta\in \R^{d}.
$$
\end{definition}

The strong Minty condition plays a central role in the VI estimator. In this section, we first establish sufficient conditions under which this property holds and illustrate them through representative examples. 
We then discuss guiding principles for designing link functions that naturally satisfy or approximate the strong Minty condition in practice.

\begin{definition}
\label{def:weak_minty_eb}
Given constants $\rho,\mu_{\mathrm{EB}}>0$, we introduce the following properties:
\begin{itemize}
\item   
$F$ is said to be $\rho$-weakly monotone if
\begin{equation*}
\langle F(\bbeta_2)-F(\bbeta_1),\, \bbeta_2-\bbeta_1\rangle 
\;\ge\;
-\frac{\rho}{2}\|\bbeta_2-\bbeta_1\|^2,
\qquad \forall\,\bbeta_1,\bbeta_2\in\mathbb{R}^d.
\end{equation*}
\item  
$F$ is said to satisfy the error bound property with constant $\mu_{\mathrm{EB}}>0$ if
\begin{equation*}
\|F(\bbeta)-F(\proj_{\operatorname{Sol}(F)}(\bbeta))\|
\;\ge\;
\mu_{\mathrm{EB}}\cdot\mathrm{dist}(\bbeta,\,\operatorname{Sol}(F)),
\qquad \forall\,\bbeta\in\mathbb{R}^d.
\end{equation*}
\end{itemize}
\end{definition}

Weak monotonicity is a mild regularity condition that allows certain non-monotone vector fields while still ensuring local stability of the VI solution.
In contrast, the error bound condition provides a geometric separation property: it quantifies how far a point $\bbeta$ is from the solution set in terms of the magnitude of the operator, implying that the vector field 
$F(\bbeta)$ is uniformly separated from zero along the direction of its projection onto the solution set.
Intuitively, this means that $F$ grows at least linearly with the distance to the solution, forming a local conic region around the zero point. With these two conditions, we have the following implication.
\begin{prop}\label{cor:weakcvx}
Suppose that the vector field $F:\mathbb{R}^{d}\rightarrow\mathbb{R}^{d}$ satisfies $L$-Lipschitz condition, i.e.,
\begin{equation*}
\| F(\bbeta_2)-F (\bbeta_1)\|\leq L\|\bbeta_2-\bbeta_1\|.
\end{equation*}
Additionally, we assume that $F$ satisfies the $\rho$-weakly monotone condition and the error bound property holds with $\mu_{\rm EB}$. If $\rho L < \mu_{\rm EB}^2$, then the strong Minty condition holds with $\mu = \frac{\mu_{\rm EB}^2-\rho L}{L-\rho}$. 
\end{prop}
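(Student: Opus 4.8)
The plan is to verify the strong Minty inequality pointwise. Fix $\bbeta\in\mathbb{R}^{d}$; the inequality is trivial when $\bbeta\in\operatorname{Sol}(F)$, so assume $\bbeta\notin\operatorname{Sol}(F)$ and set $\bar\bbeta:=\proj_{\operatorname{Sol}(F)}(\bbeta)$, $d:=\dist(\bbeta,\operatorname{Sol}(F))=\|\bbeta-\bar\bbeta\|>0$. Since $\operatorname{Sol}(F)=\{\bbeta':F(\bbeta')=\bz\}$, we have $F(\bar\bbeta)=\bz$, so applying the $L$-Lipschitz bound and the error bound property to the pair $(\bbeta,\bar\bbeta)$ gives at once
\[
\mu_{\mathrm{EB}}\,d \ \le\ \|F(\bbeta)\| \ \le\ L\,d,
\]
while $\rho$-weak monotonicity at $(\bbeta,\bar\bbeta)$ gives only the negative estimate $\langle F(\bbeta),\bbeta-\bar\bbeta\rangle\ge-\tfrac{\rho}{2}d^2$. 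The whole content of the proof is to upgrade this to the positive bound $\mu d^2$ with $\mu=\tfrac{\mu_{\mathrm{EB}}^2-\rho L}{L-\rho}$; the hypothesis $\rho L<\mu_{\mathrm{EB}}^2$ is exactly what makes $\mu>0$ and rules out the borderline configurations in which the conclusion would fail.

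I would carry out the upgrade along the segment $\bbeta_t:=\bar\bbeta+t(\bbeta-\bar\bbeta)$, $t\in[0,1]$, which joins a zero of $F$ to $\bbeta$. First, since $\bar\bbeta=\proj_{\operatorname{Sol}(F)}(\bbeta)$ and $\operatorname{Sol}(F)$ is closed (and, under the hypotheses, convex), one checks that $\bar\bbeta=\proj_{\operatorname{Sol}(F)}(\bbeta_t)$ for every $t$, hence $\dist(\bbeta_t,\operatorname{Sol}(F))=t\,d$; the error bound then propagates along the segment as $\|F(\bbeta_t)\|\ge\mu_{\mathrm{EB}}\,t\,d$, and Lipschitzness together with $F(\bbeta_0)=\bz$ gives $\|F(\bbeta_t)\|\le L\,t\,d$. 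Next, put $h(t):=\langle F(\bbeta_t),\bbeta-\bar\bbeta\rangle$, so that $h(0)=0$ and $h(1)$ is the quantity to be bounded below; Lipschitzness makes $h$ Lipschitz with constant $Ld^2$, and $\rho$-weak monotonicity applied to pairs $(\bbeta_t,\bbeta_s)$ shows that $t\mapsto h(t)+\tfrac{\rho}{2}t\,d^2$ is nondecreasing. Combining the error-bound floor on $\|F(\bbeta_t)\|$, the Lipschitz control on the part of $F(\bbeta_t)$ orthogonal to $\bbeta-\bar\bbeta$ (which starts at $0$ at $t=0$), and this almost-monotonicity of $h$, and then optimizing over the remaining slack, should yield $h(1)\ge\tfrac{\mu_{\mathrm{EB}}^2-\rho L}{L-\rho}d^2$. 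Since $\bbeta$ was arbitrary, this is the asserted strong Minty condition with $\mu=\tfrac{\mu_{\mathrm{EB}}^2-\rho L}{L-\rho}$.

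The main obstacle is precisely this last combination step. The error bound controls only the \emph{norm} $\|F(\bbeta_t)\|$, whereas the strong Minty condition is a statement about the \emph{tangential component} $h(t)=\langle F(\bbeta_t),\bbeta-\bar\bbeta\rangle$, and a priori $F(\bbeta_t)$ could point almost orthogonally to $\bbeta-\bar\bbeta$. The resolution has to track the orthogonal component of $F$ along the segment: it vanishes at $t=0$ and has Lipschitz increment at most $Ld$, so the amount of the error-bound floor that can ``leak'' into the orthogonal direction is itself bounded, forcing the tangential part to be large for small $t$; the almost-monotone property of $h$ then transports these pointwise lower bounds forward to $t=1$, with a loss governed solely by $\rho$ and $L$, and the arithmetic collapses to the stated constant. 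A secondary, routine point is to establish that $\operatorname{Sol}(F)$ is convex, so that its metric projection is single-valued and constant along $[\bar\bbeta,\bbeta]$; this is also where the quantitative gap $\rho L<\mu_{\mathrm{EB}}^2$ is used, since it excludes critical points around which the operator is ``too non-monotone,'' and in the GLM settings of interest it holds because the governing operators have symmetric Jacobians.
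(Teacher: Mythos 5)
Your proposal does not close the argument: the step you yourself flag as ``the main obstacle''---converting the norm floor $\|F(\bbeta_t)\|\ge \mu_{\mathrm{EB}}\,t\,d$ into a lower bound on the tangential component $h(t)=\langle F(\bbeta_t),\bbeta-\bar\bbeta\rangle$---is exactly where the content of the proposition lies, and the ingredients you list do not suffice to carry it out. Decompose $F(\bbeta_t)=\alpha(t)u+G(t)$ with $u=(\bbeta-\bar\bbeta)/d$ and $G(t)\perp u$. The orthogonal part satisfies only $\|G(t)\|\le L\,t\,d$ (it starts at zero and $F$ is $L$-Lipschitz along the segment), which is the same order as the trivial upper bound $\|F(\bbeta_t)\|\le L\,t\,d$; since necessarily $\mu_{\mathrm{EB}}\le L$, the resulting estimate $\alpha(t)^2\ge(\mu_{\mathrm{EB}}^2-L^2)t^2d^2$ is vacuous, and weak monotonicity only gives $h(t)\ge-\tfrac{\rho}{2}t\,d^2$, which has the wrong sign. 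No actual mechanism is exhibited that forces $h(1)>0$, let alone with the precise modulus $(\mu_{\mathrm{EB}}^2-\rho L)/(L-\rho)$; ``the arithmetic collapses to the stated constant'' is an assertion, not a derivation. A second unproved claim is the convexity of $\operatorname{Sol}(F)$, which you need so that $\proj_{\operatorname{Sol}(F)}(\bbeta_t)=\bar\bbeta$ along the whole segment; appealing to symmetric Jacobians ``in the GLM settings of interest'' is not a proof for a general vector field $F$.

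The paper's proof avoids all of this with a single global quadratic-form manipulation and never touches the segment or the structure of $\operatorname{Sol}(F)$ beyond the existence of the projection point: since $F$ is $\rho$-weakly monotone and $L$-Lipschitz, $F+\rho I$ is monotone and $(L+\rho)$-Lipschitz, and the Baillon--Haddad inequality yields the cocoercivity bound
\[
\langle F(\bbeta)-F(\bar\bbeta)+\rho(\bbeta-\bar\bbeta),\,\bbeta-\bar\bbeta\rangle
\;\ge\;\frac{1}{L+\rho}\,\|F(\bbeta)-F(\bar\bbeta)+\rho(\bbeta-\bar\bbeta)\|^2 .
\]
Expanding the square, collecting the $\langle F(\bbeta)-F(\bar\bbeta),\bbeta-\bar\bbeta\rangle$ terms (which produces the factor $(L-\rho)/(L+\rho)$), and substituting the error bound $\|F(\bbeta)-F(\bar\bbeta)\|\ge\mu_{\mathrm{EB}}\|\bbeta-\bar\bbeta\|$ gives the stated modulus in two lines. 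If you wish to salvage your one-dimensional transport route, you would need a quantitative substitute for this cocoercivity step; as written, the proposal is a plan with the decisive inequality left unproven.
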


\begin{proof}
Since $F$ satisfies the $\rho$-weakly monotone and $L$-Lipschitz condition, we know that $F(\bbeta)+\rho \bbeta$ is monotone and $L+\rho$ Lipschitz, then by Baillon-Haddad Lemma \cite[Corollaire 10]{baillon1977quelques} we know that
\begin{align*}
&\langle F(\bbeta)- F(\proj_{\operatorname{Sol}(F)}(\bbeta))+\rho(\bbeta-\proj_{\operatorname{Sol}(F)}(\bbeta)),\bbeta-\proj_{\operatorname{Sol}(F)}(\bbeta)\rangle\\
\ge\ &\frac{1}{L+\rho}\|F(\bbeta)-F(\proj_{\operatorname{Sol}(F)}(\bbeta))+\rho(\bbeta-\proj_{\operatorname{Sol}(F)}(\bbeta))\|^2
\end{align*}
which implies that
\begin{align*}
&\frac{L-\rho}{L+\rho}\langle F(\bbeta)-F(\proj_{\operatorname{Sol}(F)}(\bbeta)),\bbeta-\proj_{\operatorname{Sol}(F)}(\bbeta)\rangle\\
\ge\ &\frac{1}{L+\rho}\|F(\bbeta)-F(\proj_{\operatorname{Sol}(F)}(\bbeta))\|^2-\frac{\rho L}{L+\rho}\|\bbeta-\proj_{\operatorname{Sol}(F)}(\bbeta)\|^2.
\end{align*}
Combining this with the error bound property, we know that
\begin{align*}
&\langle F(\bbeta)-F(\proj_{\operatorname{Sol}(F)}(\bbeta)),\bbeta-\proj_{\operatorname{Sol}(F)}(\bbeta)\rangle\\
\ge\ & \frac{\mu_{\rm EB}^2}{L-\rho}\|\bbeta-\proj_{\operatorname{Sol}(F)}(\bbeta)\|^2-\frac{\rho L}{L-\rho}\|\bbeta-\proj_{\operatorname{Sol}(F)}(\bbeta)\|^2\\
=\ &\frac{\mu_{\rm EB}^2-\rho L}{L-\rho}\|\bbeta-\proj_{\operatorname{Sol}(F)}(\bbeta)\|^2.
\end{align*}
The proof is complete.
\end{proof}

It is straightforward to verify that if $F$ is strongly monotone, then it implies the error bound property, which characterizes the local growth behavior around the solution set and implies the strong Minty condition. Moreover, we provide an example showing that $F$ need not be monotone for the strong Minty condition to hold.

\begin{example}
In many practical systems, such as those encountered in signal processing and communications, the ideal input-output response is approximately linear. However, due to hardware imperfections or nonlinear circuit effects, small nonlinear distortions may arise. A simple model capturing such behavior is given by
\[
g^{-1}(z) = z + 2\sin(z)\cos(z).
\]
The derivative of this function is
\[
(g^{-1})'(z) = 1 + 2 (\cos^2(z)-\sin^2(z))=3\cos^2(z)-\sin^2(z),
\]
which is not always positive; hence, \(g^{-1}\) is not a monotone function.  
Nevertheless, let \(z^\star = 0\) denote the solution to \(g^{-1}(z)=0\). Then
\[
\langle g^{-1}(z)-g^{-1}(z^\star),\, z - z^\star \rangle = g^{-1}(z)z\ge z^2 + 2 z\sin(z)\cos(z)=z(z + \sin(2z))\ge \frac{1}{2}z^2.
\]
This inequality confirms that, despite the lack of monotonicity, the mapping \(g^{-1}\) satisfies the strong Minty condition with modulus $\mu =1/2$.
\end{example}

The following lemma shows that the vector field $V_N$ satisfies the strong Minty condition when the inverse link function $g^{-1}$ satisfies an averaged Minty's condition (which can be implied by the strong monotonicity). With a fixed sample size $N$, we collect the covariates into the data matrix 
$\bm{X}_N \in \mathbb{R}^{N \times d}$ defined as
\[
\bm{X}_N :=
\begin{bmatrix}
x^1_1 & \cdots & x^1_{d} \\
\vdots & \ddots & \vdots \\
x^N_1 & \cdots & x^N_{d}
\end{bmatrix},
\]
where $x^i_j$ denotes the $j$-th covariate of the $i$-th observation. A population version can be derived under
analogous conditions.

\begin{lemma}[Sufficient Condition for  Strong Minty Condition]\label{lem:strongminty}
Suppose that the inverse link function $g^{-1}$ satisfies the strong monotonicity with modulus $\mu_g \ge 0$ or the averaged strong Minty condition with $\operatorname{Sol}(V_N)\neq\emptyset$ that
\begin{align}\label{eq:avg_minty}
&\frac{1}{N}\sum_{i=1}^N\left( g^{-1}(\tilde{\bx}^{i\top}\bbeta)-y^i\right)\cdot \tilde{\bx}^{i\top}(\bbeta-\proj_{\operatorname{Sol}(V_N)}(\bbeta))\ge \frac{\mu_g }{N}\cdot\sum_{i=1}^N \left|\tilde{\bx}^{i\top}(\bbeta-\proj_{\operatorname{Sol}(V_N)}(\bbeta))\right|^2.
\end{align}
Then the vector field $V_N$ satisfies the strong Minty condition with modulus $\mu_g\sigma_N^2/N$, where $\sigma_N$ is the minimal singular value of $\tilde{\bm{X}}_N:=[\bm{1}\; \bm{X}_N]$, i.e., for any $\bbeta\in \R^{d+1}$
$$
\langle V_N(\bbeta), \bbeta-\proj_{\operatorname{Sol}(V_N)}(\bbeta)\rangle \geq \frac{\mu_g\sigma_N^2}{N}\cdot \|\bbeta-\proj_{\operatorname{Sol}(V_N)}(\bbeta)\|^2.
$$
\end{lemma}

\begin{proof}
Recall the definition that $V_N(\bbeta)=\frac{1}{N}\sum_{i=1}^N (g^{-1}(\beta_0+\sum_{j=1}^{d} \beta_{j}x^i_{j})-y^i)\cdot\tilde{\bx}^i$.
This together with \eqref{eq:avg_minty} implies that
\begin{align*}
\langle V_N(\bbeta), \bbeta-\proj_{\operatorname{Sol}(V_N)}(\bbeta)\rangle 
&=  \left\langle\frac{1}{N}\sum_{i=1}^N \left(g^{-1}\left(\beta_0+\sum_{j=1}^{d} \beta_{j}x^i_{j}\right)-y^i\right)\cdot\tilde{\bx}^i,\bbeta-\proj_{\operatorname{Sol}(V_N)}(\bbeta)\right\rangle\\
&\ge  \frac{\mu_g}{N}\cdot \sum_{i=1}^N\left[\left|\tilde{\bx}^{i\top} (\bbeta-\proj_{\operatorname{Sol}(V_N)}(\bbeta))\right|^2\right]\\
&=  \frac{\mu_g}{N}\cdot \left\|\tilde{\bm{X}}_N (\bbeta-\proj_{\operatorname{Sol}(V_N)}(\bbeta))\right\|^2\\
&\ge  \frac{\mu_g\sigma_N^2}{N}\cdot \|\bbeta- \proj_{\operatorname{Sol}(V_N)}(\bbeta)\|^2.
\end{align*}
This establishes that the averaged strong Minty condition \eqref{eq:avg_minty} indeed implies the desired inequality for $V_N$. 
It remains to show that \eqref{eq:avg_minty} itself is guaranteed whenever the inverse link function $g^{-1}$ is $\mu_g$-strongly monotone. 
In this case, for any $\bbeta\in\mathbb{R}^{d+1}$ we have
\begin{align*}
&\sum_{i=1}^N\left\langle g^{-1}(\tilde{\bx}^{i\top}\bbeta)-y^i, \tilde{\bx}^{i\top}(\bbeta-\proj_{\operatorname{Sol}(V_N)}(\bbeta))\right\rangle\\
=\ &\sum_{i=1}^N\left\langle \left( g^{-1}(\tilde{\bx}^{i\top}\bbeta)-g^{-1}(\tilde{\bx}^{i\top}\proj_{\operatorname{Sol}(V_N)}(\bbeta))\right)\cdot \tilde{\bx}^i, \bbeta-\proj_{\operatorname{Sol}(V_N)}(\bbeta)\right\rangle\\
\ge\ & \mu_g \sum_{i=1}^N \left|\tilde{\bx}^{i\top}(\bbeta-\proj_{\operatorname{Sol}(V_N)}(\bbeta))\right|^2,
\end{align*}
where the equality is from the definition of $\proj_{\operatorname{Sol}(V_N)}(\bbeta)$, namely that it is a solution of 
\[
V_N(\proj_{\operatorname{Sol}(V_N)}(\bbeta))=\frac{1}{N}\sum_{i=1}^N (g^{-1}(\tilde{\bx}^{i\top}\proj_{\operatorname{Sol}(V_N)}(\bbeta))-y^i)\cdot\tilde{\bx}^i=\bz.
\]
Then \eqref{eq:avg_minty} holds. The proof is complete.
\end{proof}

\section{Proofs for Section \ref{sec:estimator}}

\begin{lemma}\label{lem:upperbd}
Suppose that Assumptions \ref{assump} {\rm(i)}-{\rm(iii)} hold.
Then for every $(\bx, y)$ we have
$$
\begin{aligned}
\|V_{(\bx, y)}(\bbeta)\|_{\infty} \leq RM,\quad \|V_{(\bx, y)}(\bbeta)\|  \leq\sqrt{d+1}RM,\\
\end{aligned}
$$
and
\[
\|V_{(\bx, y)}(\bbeta_2)-V_{(\bx, y)}(\bbeta_1)\|\leq (LdM^2+L)\cdot\|\bbeta_2-\bbeta_1\|.
\]
\end{lemma}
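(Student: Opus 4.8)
The plan is to reduce all three estimates to the single structural fact that $V_{(\bx,y)}(\bbeta)$ is a scalar multiple of the augmented feature vector: writing the scalar residual $r_{\bbeta}(\bx,y):=g^{-1}(\tilde{\bx}^\top\bbeta)-y\in\R$, we have $V_{(\bx,y)}(\bbeta)=r_{\bbeta}(\bx,y)\,\tilde{\bx}$ with $\tilde{\bx}=[1;\bx]$. Given this factorization, every bound splits into (a) a scalar estimate — on $|r_{\bbeta}(\bx,y)|$ for the first two claims, or on $|g^{-1}(\tilde{\bx}^\top\bbeta_2)-g^{-1}(\tilde{\bx}^\top\bbeta_1)|$ for the Lipschitz claim — times (b) an elementary norm bound on $\tilde{\bx}$. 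For (b), Assumption~\ref{assump}(iii) gives $|x_j|\le M$, so, using the (standard) normalization $M\ge 1$, the leading $1$-coordinate is dominated and $\|\tilde{\bx}\|_{\infty}=\max\{1,\max_j|x_j|\}\le M$, while $\|\tilde{\bx}\|^2=1+\sum_{j=1}^d x_j^2\le 1+dM^2\le(d+1)M^2$.

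For the first two inequalities, I would invoke Assumption~\ref{assump}(ii) to bound the residual $|r_{\bbeta}(\bx,y)|\le R$, and then just multiply the two factors:
\[
\|V_{(\bx,y)}(\bbeta)\|_{\infty}=|r_{\bbeta}(\bx,y)|\,\|\tilde{\bx}\|_{\infty}\le RM,
\qquad
\|V_{(\bx,y)}(\bbeta)\|=|r_{\bbeta}(\bx,y)|\,\|\tilde{\bx}\|\le R\sqrt{d+1}\,M .
\]

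For the Lipschitz bound, the one non-automatic step is to notice that the $\bbeta$-independent term $-y\,\tilde{\bx}$ cancels in the difference, so that
\[
V_{(\bx,y)}(\bbeta_2)-V_{(\bx,y)}(\bbeta_1)=\bigl(g^{-1}(\tilde{\bx}^\top\bbeta_2)-g^{-1}(\tilde{\bx}^\top\bbeta_1)\bigr)\,\tilde{\bx}.
\]
Then $L$-Lipschitz continuity of $g^{-1}$ (Assumption~\ref{assump}(i)) together with Cauchy--Schwarz gives $|g^{-1}(\tilde{\bx}^\top\bbeta_2)-g^{-1}(\tilde{\bx}^\top\bbeta_1)|\le L\,|\tilde{\bx}^\top(\bbeta_2-\bbeta_1)|\le L\,\|\tilde{\bx}\|\,\|\bbeta_2-\bbeta_1\|$, and multiplying by the remaining factor $\|\tilde{\bx}\|$ yields $\|V_{(\bx,y)}(\bbeta_2)-V_{(\bx,y)}(\bbeta_1)\|\le L\|\tilde{\bx}\|^2\|\bbeta_2-\bbeta_1\|\le L(1+dM^2)\|\bbeta_2-\bbeta_1\|=(LdM^2+L)\|\bbeta_2-\bbeta_1\|$; note this derivation uses only Assumptions~(i) and (iii), and not the normalization $M\ge1$.

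I do not anticipate a genuine obstacle: the statement is a direct consequence of the scalar-times-vector form of $V_{(\bx,y)}$. The only points deserving a sentence of care are the cancellation of the data term $-y\,\tilde{\bx}$ in the third estimate — without it one would be forced to bound $g^{-1}(\tilde{\bx}^\top\bbeta)$ uniformly over unbounded $\bbeta$, which is not possible — and the minor bookkeeping that matches $1+dM^2$ with $(d+1)M^2$ and $\max\{1,M\}$ with $M$ in the first two stated constants, i.e.\ the implicit normalization $M\ge1$.
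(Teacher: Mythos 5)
Your proposal is correct and follows essentially the same route as the paper's proof: factor $V_{(\bx,y)}(\bbeta)$ as a scalar residual times $\tilde{\bx}$, bound the residual by $R$ and $\|\tilde{\bx}\|_\infty$, $\|\tilde{\bx}\|$ via Assumption~\ref{assump}(ii)--(iii), and for the Lipschitz estimate cancel the $-y\tilde{\bx}$ term and apply Cauchy--Schwarz with $\|\tilde{\bx}\|^2\le 1+dM^2$. The paper likewise makes the implicit normalization $M\ge 1$ explicit in the same place you do.
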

\begin{proof}
First, from Assumptions \ref{assump} {\rm(ii)}-{\rm(iii)} it follows that $\|\bx\|_{\infty}\leq M$, $\|\bx\|\leq \sqrt{d}M$ and
    \[
    \left|g^{-1}\left(\beta_0+\sum_{j=1}^{d} \beta_{j}x_{j}\right)-y\right|\leq R.
    \]
Then we have the upper bound (assuming $M\ge 1$) that
$$
\begin{aligned}
\|V_{(\bx, y)}(\bbeta)\|_{\infty}
& \leq \left|g^{-1}\left(\beta_0+\sum_{j=1}^{d} \beta_{j}x_{j}\right)-y\right|\cdot\|\tilde{\bx}\|_{\infty}\leq RM,\\
\|V_{(\bx, y)}(\bbeta)\|
& \leq \left|g^{-1}\left(\beta_0+\sum_{j=1}^{d} \beta_{j}x_{j}\right)-y\right|\cdot\|\tilde{\bx}\|\leq \sqrt{d+1}RM.
\end{aligned}
$$
Next, for the Lipschitz constant of $V_{(\bx,y)}$, one has that
$$
\begin{aligned}
\|V_{(\bx, y)}(\bbeta')-V_{(\bx, y)}(\bbeta)\|
&=\left\|\left(g^{-1}\left(\beta_0'+\sum_{j=1}^{d} \beta'_{j}x_{j}\right)-g^{-1}\left(\beta_0+\sum_{j=1}^{d} \beta_{j}x_{j}\right)\right)\cdot\tilde{\bx}\right\|\\
&\leq L\|\tilde{\bx}\|\cdot\left|\beta_0'-\beta_0+\sum_{j=1}^{d} \beta'_{j}x_{j}-\sum_{j=1}^{d} \beta_{j}x_{j}\right|\\
&\leq L\|\tilde{\bx}\|^2\|\bbeta'-\bbeta\|\\
&\leq (LdM^2+L) \cdot\|\bbeta'-\bbeta\|.
\end{aligned}
$$
The proof is complete.
\end{proof}

\subsection*{Proof of Theorem \ref{thm:estimator}}

\begin{proof}

First, from the strong Minty condition in Assumption \ref{assump} (iv), we know that
\[
\mu\|\bbeta^{\star}-\hat{\bbeta}_N\|^2
\leq\langle V_N(\bbeta^{\star}),\bbeta^{\star}-\hat{\bbeta}_N\rangle\leq\|V_N(\bbeta^{\star})\|\|\bbeta^{\star}-\hat{\bbeta}_N\|,
\]
which implies 
\begin{equation}\label{eq:estimate_key}
\mu\|\bbeta^{\star}-\hat{\bbeta}_N\|\leq \|V_N(\bbeta^{\star})\|.
\end{equation}
Recall the definition that for any $\bbeta\in\R^{d+1}$
\[
V_N(\bbeta)=\mathbb{E}_{(\bx,y)\sim\mathbb{P}_N} \left[\left(g^{-1}\left(\beta_0+\sum_{j=1}^{d} \beta_{j}x_{j}\right)-y\right)\cdot\tilde{\bx}\right]=\mathbb{E}_{(\bx,y)\sim\mathbb{P}_N}V_{(\bx, y)}(\bbeta).
\]
Then we know from Lemma \ref{lem:upperbd} that 
\[
\|V_N(\bbeta^{\star})\|_{\infty}\le RM.
\]
As $V(\bbeta^{\star})=\bz$, by applying the Hoeffding's inequality \cite[Theorem 2.2.6]{vershynin2018high}, it follows that for any $t>0$ and $j\in\{0,\dots,d\}$,
\begin{equation*}
\begin{aligned}
\mathbb{P}\left(|V_N(\bbeta^{\star})_{j}|\ge t\right)
&=\mathbb{P}\left(|V_N(\bbeta^{\star})_{j}-V(\bbeta^{\star})_{j}|\ge t\right)\\
&=\mathbb{P}\left(V_N(\bbeta^{\star})_{j}-V(\bbeta^{\star})_{j}\ge t\right)+\mathbb{P}\left(-V_N(\bbeta^{\star})_{j}+V(\bbeta^{\star})_{j}\ge t\right)\\
&\leq 2\exp\left(-\frac{N t^2}{2R^2M^2}\right).
\end{aligned}
\end{equation*}
Using the union bound (i.e., Boole's inequality),
\begin{equation*}
\begin{aligned}
\mathbb{P}\left(\|V_N(\bbeta^{\star})\|\ge t\right)
&\leq \mathbb{P}\left(\sqrt{d+1}\|V_N(\bbeta^{\star})\|_{\infty}\ge t\right)\\
&\leq \sum_{j=0}^d\mathbb{P}\left(|V_N(\bbeta^{\star})_{j}|\ge \frac{t}{\sqrt{d+1}}\right)\le 2(d+1)\exp\left(-\frac{Nt^2}{2 (d+1)R^2 M^2}\right).
\end{aligned}
\end{equation*}
Equivalently, for any $\epsilon\in (0,1)$, set $t=RM\sqrt{2(d+1)\ln(2(d+1)/\epsilon)/N}$. Then with probability at least $1-\epsilon$,
\[
\|V_N(\bbeta^{\star})\|\le RM\sqrt{\frac{2(d+1)\ln(2(d+1)/\epsilon)}{N}}.
\]
Combining this with \eqref{eq:estimate_key}, we derive the desired results.
\end{proof}

\subsection*{Proof of Lemma \ref{lem:V_diff}}

\begin{proof}
Let $u:=\langle\tilde{\bx},\bbeta^\star\rangle$ and 
$\delta_t:=\langle\tilde{\bx},\bm{h}_t\rangle$ for some sequence $\bm{h}_t\to \bm{h}$ as $t\downarrow0$. 
Then
\begin{equation}\label{eq:smooth_key1}
\frac{V(\bbeta^\star+t \bm{h}_t)-V(\bbeta^\star)}{t}
=\mathbb E_{(\bx,y)\sim\mathbb{P}}\left[\frac{g^{-1}(u+t\delta_t)-g^{-1}(u)}{t}\,\tilde{\bx}\right].
\end{equation}
Since $\bP(u\in\mathcal K)=0$ by Assumption~\ref{assump_asym}, 
$g^{-1}$ is differentiable at $u$ almost surely, and therefore
\[
g^{-1}(u+t\delta_t)
= g^{-1}(u) + (g^{-1})'(u)\,t\delta_t + o(t|\delta_t|) \quad\text{a.s.}
\]
Hence, pointwise almost surely,
\begin{equation}\label{eq:smooth_key2}
\frac{g^{-1}(u+t\delta_t)-g^{-1}(u)}{t}\,\tilde{\bx}\
\rightarrow\
(g^{-1})'(u)\,\langle\tilde{\bx},\bm{h}\rangle\,\tilde{\bx}.
\end{equation}
Moreover, by the global Lipschitz property of $g^{-1}$ with constant $L$,
\[
\left\|\frac{g^{-1}(u+t\delta_t)-g^{-1}(u)}{t}\,\tilde{\bx}\right\|
\le L\,|\delta_t|\,\|\tilde{\bx}\|
\le L\,\|\tilde{\bx}\|^2\|\bm{h}_t\|.
\]
Under Assumption~\ref{assump}, we have  
$\|\tilde{\bx}\|^2\le 1+dM^2$, 
so the right-hand side is bounded by $L(1+dM^2)\|\bm{h}_t\|$, 
which is integrable and independent of $t$ since $\|\bm{h}_t\|$ is bounded. 
From \eqref{eq:smooth_key1} and \eqref{eq:smooth_key2} with the dominated convergence theorem, it follows that
\[
\frac{V(\bbeta^\star+t \bm{h}_t)-V(\bbeta^\star)}{t}
\ \rightarrow\
\mathbb E_{(\bx,y)\sim\mathbb{P}}\left[(g^{-1})'(\langle \tilde{\bx},\bbeta^{\star}\rangle)\tilde{\bx}\tilde{\bx}^\top\right]\bm{h}.
\]
Thus $V$ is Fr\'echet differentiable at $\bbeta^\star$ with 
gradient $\nabla V(\bbeta^\star)$ as claimed.
\end{proof}

\subsection*{Proof of Theorem \ref{thm:VI-CLT}}

\begin{proof}

By Assumption~\ref{assump}, $|g^{-1}(\langle \tilde{\bx},\bbeta^\star\rangle)-y|\le R$, 
$\|\bx\|^2\le dM^2$, and $g^{-1}$ is Lipschitz. 
Hence,
\[
\mathbb{E}_{(\bx,y)\sim\mathbb{P}}\!\left[\|V_{(\bx,y)}(\bbeta^\star)\|^2\right]
 \le 
\mathbb{E}_{(\bx,y)\sim\mathbb{P}}\!\left[
|g^{-1}(\langle \tilde{\bx},\bbeta^{\star}\rangle)-y|^2\,\|\tilde{\bx}\|^2
\right]
\le R^2(1+dM^2)
< \infty.
\]
Therefore, when $(\bx^i,y^i)_{i=1}^N$ are i.i.d.\ samples, the random vectors 
$V_{(\bx^i,y^i)}(\bbeta^\star)$ are i.i.d.\ with mean zero and covariance 
matrix $\Gamma$ (finite and positive definite by Assumption~\ref{A2}). 
Applying the multivariate central limit theorem yields
\begin{equation}\label{eq:score-clt}
\sqrt N\big(V_N(\bbeta^\star)-V(\bbeta^\star)\big)
\ \xrightarrow{d}\ 
\mathcal N(\mathbf{0},\,\Gamma).
\end{equation}

Adding and subtracting $V(\hat\bbeta_N)$ and $V(\bbeta^\star)$ on $V_N(\hat\bbeta_N)$ yields
\begin{equation}\label{eq:add-subtract}
\begin{aligned}
\mathbf{0} = V_N(\hat\bbeta_N)
&= V_N(\bbeta^\star)
 + \underbrace{V(\hat\bbeta_N)-V(\bbeta^\star)}_{\text{population increment}}
 + \underbrace{(V_N(\hat\bbeta_N)-V(\hat\bbeta_N))
          -(V_N(\bbeta^\star)-V(\bbeta^\star))}_{=:R_N}.
\end{aligned}
\end{equation}
By the Fr\'echet differentiability of $V$ at $\bbeta^\star$ (Lemma~\ref{lem:V_diff}), we have the local linearization
\begin{equation}\label{eq:linearization}
V(\hat\bbeta_N)-V(\bbeta^\star)
= \nabla V(\bbeta^{\star})(\hat\bbeta_N-\bbeta^\star) + r_N,
\qquad \|r_N\|=o(\|\hat\bbeta_N-\bbeta^\star\|).
\end{equation}
From Lemma~\ref{lem:upperbd}, the sample operator $V_{(\bx,y)}$ is $(L d M^2+L)$-Lipschitz. Therefore,
\[
R_N = \left[\mathbb{E}_N - \mathbb{E}\right]\left(V_{(\cdot)}(\hat\bbeta_N)-V_{(\cdot)}(\bbeta^\star)\right)
\]
represents an empirical process evaluated on a Lipschitz-indexed function class.  
For $0<r\le 1$, define
\[
\mathcal{F}_r
:= \Bigl\{\,f_{\bbeta}(\cdot)
:= \frac{V_{(\cdot)}(\bbeta)-V_{(\cdot)}(\bbeta^\star)}
       {\|\bbeta-\bbeta^\star\|}\,:\,
\|\bbeta-\bbeta^\star\|\le r
\Bigr\}.
\]
Each function $f_{\bbeta}$ is uniformly bounded by the envelope 
$L d M^2 + L$, and the class $\mathcal{F}_r$ is a finite-dimensional 
(Euclidean) Lipschitz image of a ball in $\R^{d+1}$. 
Such finite-dimensional Lipschitz (parametric) classes have a covering 
numbers that grow polynomially in $1/\epsilon$, 
a direct consequence of the parameter space being a subset of $\R^{d+1}$ 
(see \cite[Theorem~2.7.11]{van1996weak}). 
This polynomial entropy bound, together with the existence of a square-integrable 
envelope, implies that $\mathcal{F}_r$ is $\mathbb{P}$-Donsker 
by \cite[Theorem~2.5.2]{van1996weak}. 
Consequently,
\begin{equation}\label{eq:covering}
\sup_{f\in\mathcal{F}_r}
\big\|\sqrt{N}[\mathbb{E}_N-\mathbb{E}]f\big\|
= O_p(1).
\end{equation}
Since $\|\hat\bbeta_N-\bbeta^\star\|=O_p(N^{-1/2})$ by Theorem \ref{thm:estimator}, we have
\[
\mathbb{P}\big(\|\hat\bbeta_N-\bbeta^\star\|\le r\big)\ \rightarrow\ 1.
\]
Then on the event $\mathcal{A}_N:=\{\|\hat\bbeta_N-\bbeta^\star\|\le r\}$ it follows
\begin{align*}
\sqrt{N}\,\|R_N\|
= 
\sqrt{N}\,
\left\|
   [\mathbb{E}_N-\mathbb{E}]
   \left(V_{(\cdot)}(\hat\bbeta_N)
   -V_{(\cdot)}(\bbeta^\star)\right)
\right\| 
&=
\left\|
   \sqrt{N}[\mathbb{E}_N-\mathbb{E}]
   f_{\hat\bbeta_N}
\right\|
\cdot\|\hat\bbeta_N-\bbeta^\star\| \\
&\le
\sup_{f\in\mathcal{F}_{r}}
\left\|
   \sqrt{N}[\mathbb{E}_N-\mathbb{E}]f
\right\|\cdot\|\hat\bbeta_N-\bbeta^\star\|.
\end{align*}
This together with $\|\hat{\bbeta}_N-\bbeta^\star\|=O_p(N^{-1/2})$ 
and \eqref{eq:covering},
we conclude on $\mathcal{A}_N$ that
\[
\sqrt N\,\|R_N\|=O_p(1)\cdot O_p(N^{-1/2})=o_p(1).
\]
Since $\mathbb{P}(\mathcal{A}_N^c)\to0$, the same conclusion holds unconditionally:
\[
\|R_N\|=o_p(N^{-1/2}).
\]
Substitute \eqref{eq:linearization} into \eqref{eq:add-subtract}:
\[
\bz
= V_N(\bbeta^\star) + \nabla V(\bbeta^{\star})(\hat\bbeta_N-\bbeta^\star) + r_N + R_N.
\]
Rearranging terms gives
\[
\sqrt N(\hat\bbeta_N-\bbeta^\star)
= -\,\nabla V(\bbeta^{\star})^{-1}\sqrt N\big(V_N(\bbeta^\star)-V(\bbeta^\star)\big) - \nabla V(\bbeta^{\star})^{-1}\sqrt N(r_N+R_N).
\]
The first term converges in distribution to $\mathcal N(0,\nabla V(\bbeta^{\star})^{-1}\Gamma \nabla V(\bbeta^{\star})^{-\top})$ by \eqref{eq:score-clt}.
On the other hand, 
$\sqrt N\|r_N\|=o_p(1)$ and $\sqrt N\|R_N\|=o_p(1)$, the second term vanishes in probability.  
Applying Slutsky's theorem gives
\[
\sqrt N(\hat\bbeta_N-\bbeta^\star)
\ \xrightarrow{d}\
\mathcal N\left(\bz,\nabla V(\bbeta^\star)^{-1}\Gamma \nabla V(\bbeta^\star)^{-\top}\right).
\]
The proof is complete.
\end{proof}

\section{Comparison of Asymptotic Efficiency between VI and MLE}
\label{app:eff}

A classical result in asymptotic statistics states that the MLE is asymptotically efficient, namely, it is consistent, asymptotically unbiased, and attains the Cram\'er--Rao lower bound (CRB) under correct model specification.
The asymptotic covariance of the proposed VI estimator,
\[
\Sigma_{\mathrm{VI}}
= \nabla V(\bbeta^\star)^{-1}\Gamma\nabla V(\bbeta^\star)^{-\top},
\]
shares the same ``sandwich'' structure as that of the MLE,
\[
\Sigma_{\mathrm{MLE}}
= \big(\nabla^2 \mathcal{L}(\bbeta^\star)\big)^{-1}
\operatorname{Cov}(\nabla \mathcal{L}_{(\bx,y)}(\bbeta^\star))
\big(\nabla^2 \mathcal{L}(\bbeta^\star)\big)^{-\top},
\]
where the expected Hessian and the score covariance are given by
\[
\nabla^2 \mathcal{L}(\bbeta^\star)
= \mathbb{E}_{(\bx,y)\sim\mathbb{P}}
  \!\left[
    \frac{(g^{-1})'(\tilde{\bx}^\top\bbeta^\star)^2}
         {\operatorname{Var}(y\mid\tilde{\bx}) 
         }
    \, \tilde{\bx}\tilde{\bx}^\top
  \right],
\]
and
\begin{align*}
\operatorname{Cov}(\nabla \mathcal{L}_{(\bx,y)}(\bbeta^\star))
:=\ &   \mathbb{E}_{(\bx,y)\sim\mathbb{P}}
  \!\left[
    \nabla \mathcal{L}_{(\bx,y)}(\bbeta^\star)
    \nabla \mathcal{L}_{(\bx,y)}(\bbeta^\star)^\top
  \right]\\
=\ &  \mathbb{E}_{(\bx,y)\sim\mathbb{P}}
  \!\left[
    \frac{(y - g^{-1}(\tilde{\bx}^\top\bbeta^\star))^2
    (g^{-1})'(\tilde{\bx}^\top\bbeta^\star)^2}{
    \operatorname{Var}(y\mid\tilde{\bx})^2
    }
    \, \tilde{\bx}\tilde{\bx}^\top
  \right].
\end{align*}
For correctly specified models, the information identity
\[
\operatorname{Cov}\!\big(\nabla \mathcal{L}_{(\bx,y)}(\bbeta^\star)\big)
= \nabla^2 \mathcal{L}(\bbeta^\star)
= I(\bbeta^\star)
\]
holds, and the MLE achieves the Fisher information bound
\(I(\bbeta^\star)^{-1}\), attaining the CRB.
In the following remark, we show that the MLE can indeed be asymptotically more efficient than the VI estimator under correct model specification. Their asymptotic covariances coincide only when the link function is canonical. This observation does not contradict our earlier discussion, as the VI estimator may still outperform the MLE in the presence of model or variance misspecification.

For correctly specified models with general  link functions, 
the VI estimator generally does not satisfy the corresponding identity
\(\Gamma = \nabla V(\bbeta^\star)\),
since the residual covariance 
and the Jacobian
\(\nabla V(\bbeta^\star)\)
coincide only for special canonical links. 
Let
\[
\Sigma_{\mathrm{MLE}}^{-1}
=\mathbb{E}\!\left[\frac{(g^{-1})'(\tilde{\bx}^\top\bbeta^\star)^{2}}
{\operatorname{Var}(y\mid \tilde{\bx})}\,
\tilde{\bx}\tilde{\bx}^\top\right],\quad
\bm{r}_1=\sqrt{\operatorname{Var}(y\mid \tilde{\bx})}\,\tilde{\bx},\quad
\bm{r}_2=\frac{(g^{-1})'(\tilde{\bx}^\top\bbeta^\star)}{\sqrt{\operatorname{Var}(y\mid \tilde{\bx})}}\,\tilde{\bx}.
\]
Then $\mathbb{E}[\bm{r}_1\bm{r}_1^\top]=\Gamma$, $\mathbb{E}[\bm{r}_2\bm{r}_2^\top]=\Sigma_{\mathrm{MLE}}^{-1}$, $\mathbb{E}[\bm{r}_1\bm{r}_2^\top]=\nabla V(\bbeta^\star)$,
and hence the block moment matrix
\[
\begin{bmatrix}
\Gamma & \nabla V(\bbeta^\star)\\[2pt]
\nabla V(\bbeta^\star) & \Sigma_{\mathrm{MLE}}^{-1}
\end{bmatrix}
\succeq \mathbf{0}.
\]
By the Schur complement, this implies
the MLE is never less efficient than the VI estimator, i.e.,
\[
\Sigma_{\mathrm{MLE}} \ \preceq\ \Sigma_{\mathrm{VI}}.
\]
The equality holds if and only if $\bm{r}_1$ and $\bm{r}_2$ are a.s. linearly dependent, i.e., up to a constant $c>0$,
\[
\mathbb{E}\big[(y - g^{-1}(\tilde{\bx}^\top\bbeta^\star))^2 \mid \tilde{\bx}\big]
= c\cdot(g^{-1})'(\tilde{\bx}^\top\bbeta^\star)
\quad \text{a.s. in } \tilde{\bx},
\]
which corresponds to the case where the link $g$ is canonical (up to a constant scale).
Hence, under correct specification, the MLE dominates the VI estimator in efficiency,
with equality attained exactly for canonical links. As a result, the VI estimator may exhibit a loss of statistical efficiency relative to the MLE, for which the MLE is asymptotically optimal in the sense of attaining the CRB.
Nevertheless, when the model or variance component is misspecified,
the VI formulation, which relies solely on the mean relation, remains consistent and may even achieve a smaller asymptotic variance for the mean component, providing a robust alternative to likelihood-based estimation.

\section{Proofs for Section \ref{sec:algorithm}}

\subsection*{Proof of Theorem \ref{thm:determine}}

\begin{proof}
From the update rule of the fixed-point iterative method, the iterates satisfy
\[
\bbeta^{t+1}
= \bbeta^t - \eta\,V_N(\bbeta^t),
\qquad
\hat{\bbeta}_N
= \hat{\bbeta}_N - \eta\,V_N(\hat{\bbeta}_N),
\]
where $\hat{\bbeta}_N$ is a fixed point of the operator $V_N$, i.e., $V_N(\hat{\bbeta}_N) = \bz$.
It then follows that
\[
\|\bbeta^{t+1} - \hat{\bbeta}_N\|^2
=
\|\bbeta^t - \eta\,V_N(\bbeta^t)
      - \hat{\bbeta}_N + \eta\,V_N(\hat{\bbeta}_N)\|^2.
\]
Expanding the square gives
\[
\|\bbeta^{t+1}-\hat\bbeta_N\|^2
=
\|\bbeta^t-\hat\bbeta_N\|^2
-2\eta\big\langle\bbeta^t-\hat\bbeta_N,\,
  V_N(\bbeta^t)-V_N(\hat\bbeta_N)\big\rangle
+\eta^2\|V_N(\bbeta^t)-V_N(\hat\bbeta_N)\|^2.
\]
By the strong Minty monotonicity condition in Assumption~\ref{assump} and the Lipschitz continuity in Lemma~\ref{lem:upperbd}, we have
\[
\left\langle
  V_N(\bbeta^t)-V_N(\hat\bbeta_N),\bbeta^t-\hat\bbeta_N\right\rangle
\ge \mu\|\bbeta^t-\hat\bbeta_N\|^2,
\quad
\|V_N(\bbeta^t)-V_N(\hat\bbeta_N)\|
\le (L d M^2+L)\|\bbeta^t-\hat\bbeta_N\|.
\]
Substituting these bounds gives
\[
\|\bbeta^{t+1}-\hat\bbeta_N\|^2
\le
\bigl(1 - 2\eta\mu + \eta^2 L^2 (1+dM^2)^2 \bigr)
\|\bbeta^t-\hat\bbeta_N\|^2.
\]
Choosing $\eta = \mu/(L^2 (1+dM^2)^2)$ yields
\[
\|\bbeta^{t+1}-\hat\bbeta_N\|^2
\le
\left(1 - \frac{\mu^2}{L^2 (1+dM^2)^2}\right)
\|\bbeta^t-\hat\bbeta_N\|^2.
\]
By applying the above bound recursively, we obtain the claimed linear convergence rate of Algorithm \ref{alg-det}. 
Combining this result with Theorem~\ref{thm:estimator} further yields with the probability at least $1-\epsilon$,
\begin{align*}
\|\bbeta^{t} - \bbeta^{\star}\|
&\le
\|\bbeta^{t} - \hat{\bbeta}_N\| + \|\hat{\bbeta}_N - \bbeta^{\star}\|\\
&\le
\left(1 - \frac{\mu^{2}}{L^{2}(1+dM^2)^2}\right)^{t/2} \|\bbeta^{0} - \hat{\bbeta}_N\|
+ \frac{RM}{\mu} \sqrt{\frac{2(d+1)\ln(2(d+1)/\epsilon)}{N}}.
\end{align*}
The proof is complete.
\end{proof}

\subsection*{Proof of Theorem \ref{thm:sgd}}

\begin{proof}
From the update rule we know that
\[
\begin{aligned}
\|\bbeta^{t+1}-\bbeta^\star\|^2
&=
\bigl\|
\bbeta^t-\eta^tV_{(\bx^t,y^t)}(\bbeta^t)-\bbeta^\star
\bigr\|^2\\[2pt]
&=
\|\bbeta^t-\bbeta^\star\|^2
-2\eta^t\langle V_{(\bx^t,y^t)}(\bbeta^t),\,\bbeta^t-\bbeta^\star\rangle
+(\eta^t)^2\|V_{(\bx^t,y^t)}(\bbeta^t)\|^2.
\end{aligned}
\]
Taking expectations and using the law of total expectation yields
\begin{equation}\label{eq:sgd_expansion}
\begin{aligned}
\frac{1}{2}\mathbb{E}\|\bbeta^{t+1}-\bbeta^\star\|^2
&\le
\frac{1}{2}\mathbb{E}\|\bbeta^{t}-\bbeta^\star\|^2
-\eta^t\,\mathbb{E}\big[\langle V_{(\bx^t,y^t)}(\bbeta^t),\bbeta^t-\bbeta^\star\rangle\big]
+\frac{1}{2}(\eta^t)^2\,\mathbb{E}\|V_{(\bx^t,y^t)}(\bbeta^t)\|^2.
\end{aligned}
\end{equation}
By the strong Minty condition in Assumption~\ref{assump}, we have
\[
\mathbb{E}\!\left[\langle V_{(\bx^t,y^t)}(\bbeta^t),\,\bbeta^t-\bbeta^\star\rangle\right]
\ge \mu\,\mathbb{E}\|\bbeta^t-\bbeta^\star\|^2.
\]
In addition, Lemma~\ref{lem:upperbd} implies that
$\|V_{(\bx^t,y^t)}(\bbeta^t)\|^2 \le (d+1)R^2M^2$.
Substituting these bounds into~\eqref{eq:sgd_expansion} yields
\[
\frac{1}{2}\mathbb{E}\|\bbeta^{t+1}-\bbeta^\star\|^2
\le
\frac{1}{2}(1-2\mu\eta^t)\mathbb{E}\|\bbeta^{t}-\bbeta^\star\|^2
+\frac{1}{2}(\eta^t)^2 (d+1)R^2M^2.
\]
Let $\eta^t = 1/[\mu(t+1)]$ and assume the initialization satisfies
\[
\frac{1}{2}\mathbb{E}\|\bbeta^{0}-\bbeta^\star\|^2
\le \frac{c_0 (d+1) R^2 M^2}{2\mu^2}
\]
for some constant $c_0 > 1$.
We will prove by induction that
\[
\frac{1}{2}\mathbb{E}\|\bbeta^{t}-\bbeta^\star\|^2
\le
\frac{c_0 (d+1) R^2 M^2}{2\mu^2(t+1)},\qquad t = 0,1,\ldots.
\]
The base case $t=0$ holds by assumption.
For the induction step, note that $\mu\eta^t = 1/(t+1) \le 1/2$. Then,
\[
\begin{aligned}
\frac{1}{2}\mathbb{E}\|\bbeta^{t+1}-\bbeta^\star\|^2
&\le \frac{1}{2}(1 - 2\mu\eta^t)\mathbb{E}\|\bbeta^{t}-\bbeta^\star\|^2
+ \frac{1}{2}(\eta^t)^2 (d+1)R^2M^2 \\
&\le \frac{c_0 (d+1)R^2M^2}{2\mu^2 t}\left(1 - 2\mu\eta^t\right)
+ \frac{1}{2}(\eta^t)^2 (d+1)R^2M^2 \\
&\le \frac{c_0 (d+1)R^2M^2}{2\mu^2 t}\left(1 - \frac{2}{t+1}\right)
+ \frac{c_0 (d+1)R^2M^2}{2\mu^2 (t+1)^2} \\
&= \frac{c_0 (d+1)R^2M^2}{2\mu^2 (t+1)}\left(\frac{t-1}{t} + \frac{1}{t+1}\right)
\le \frac{c_0 (d+1)R^2M^2}{2\mu^2 (t+1)}.
\end{aligned}
\]
This completes the induction and thus the proof.
\end{proof}

\section{Additional Experiment Results}\label{app_exp}
\subsection{GLM experiment}\label{app:GLM_exp}
Tables~\ref{tab:poisson_table_dense} and \ref{tab:poisson_table_sparse} present additional numerical results for the finite-sample performance of VI and MLE under different parameter structures. Table~\ref{tab:poisson_table_dense} corresponds to the dense parameter setting $\bbeta^\star = d^{-1/2}(1,\dots,1) \in \R^d$ as in Section \ref{sec:exp}. The results for the softplus link were already discussed in Table~\ref{tab:poisson_table_softplus}, so here we report the outcomes for the log (sanity check), clipped exponential, and Gaussian-mixture CDF links. Table~\ref{tab:poisson_table_sparse} corresponds to the sparse parameter setting, where $\bbeta^\star = (2/\sqrt{5}, 1/\sqrt5, 0, \dots, 0) \in \R^d$, and summarizes the results for the softplus, clipped exponential, and Gaussian-mixture CDF links (the log link is omitted). Overall, the results under the sparse parameter setting are qualitatively similar to those under the dense parameter setting. In both cases, the VI estimator consistently outperforms MLE across most configurations, with particularly large improvements for the softplus link.

\begin{table}[!t]
\centering
\caption{Mean squared error of the VI estimator and MLE across link functions and iterations $k$ with dense parameter $\bbeta^\star = d^{-1/2}(1,\dots,1)$. For each $(\text{Link}, k, d, N)$ combination, the smaller error between the two estimators is highlighted in bold. The values in the brackets are standard deviations across 1000 independent repetitions.}
\resizebox{\textwidth}{!}{%
\begin{tabular}{cccccccccc}
\toprule
\multirow{2}{*}{Link, $k$} & \multirow{2}{*}{$d$} & \multicolumn{2}{c}{$N=100$} & \multicolumn{2}{c}{$N=200$} & \multicolumn{2}{c}{$N=500$} & \multicolumn{2}{c}{$N=1000$} \\
 &  & VI & MLE & VI & MLE & VI & MLE & VI & MLE \\
\midrule
\multirow{4}{*}{\makecell[c]{log \\ $k=20$}} & 10 & {.241} (.105) & .241 (.105) & {.108} (.057) & .108 (.057) & {.025} (.014) & .025 (.014) & {.007} (.004) & .007 (.004) \\
 & 20 & {.384} (.106) & .384 (.106) & {.232} (.074) & .232 (.074) & {.079} (.030) & .079 (.030) & {.023} (.009) & .023 (.009) \\
 & 50 & {.589} (.082) & .589 (.082) & {.432} (.071) & .432 (.071) & {.226} (.046) & .226 (.046) & {.100} (.025) & .100 (.025) \\
 & 100 & {.719} (.061) & .719 (.061) & {.586} (.062) & .586 (.062) & {.379} (.049) & .379 (.049) & {.221} (.032) & .221 (.032) \\
\midrule
\multirow{4}{*}{\makecell[c]{log \\ $k=50$}} & 10 & {.128} (.069) & .128 (.069) & {.048} (.027) & .048 (.027) & {.013} (.007) & .013 (.007) & {.006} (.003) & .006 (.003) \\
 & 20 & {.258} (.088) & .258 (.088) & {.124} (.049) & .124 (.049) & {.033} (.012) & .033 (.012) & {.013} (.004) & .013 (.004) \\
 & 50 & {.485} (.080) & .485 (.080) & {.307} (.063) & .307 (.063) & {.120} (.030) & .120 (.030) & {.045} (.011) & .045 (.011) \\
 & 100 & {.655} (.065) & .655 (.065) & {.486} (.060) & .486 (.060) & {.249} (.039) & .249 (.039) & {.118} (.021) & .118 (.021) \\
\midrule
\multirow{4}{*}{\makecell[c]{log \\ $k=100$}} & 10 & {.104} (.059) & .104 (.059) & {.039} (.021) & .039 (.021) & {.012} (.006) & .012 (.006) & {.006} (.003) & .006 (.003) \\
 & 20 & {.221} (.078) & .221 (.078) & {.101} (.039) & .101 (.039) & {.028} (.010) & .028 (.010) & {.012} (.004) & .012 (.004) \\
 & 50 & {.452} (.084) & .452 (.084) & {.269} (.059) & .269 (.059) & {.096} (.025) & .096 (.025) & {.037} (.008) & .037 (.008) \\
 & 100 & {.633} (.064) & .633 (.064) & {.448} (.059) & .448 (.059) & {.214} (.035) & .214 (.035) & {.095} (.017) & .095 (.017) \\
\midrule
\multirow{4}{*}{\makecell[c]{log \\ $k=200$}} & 10 & {.096} (.053) & .096 (.053) & {.037} (.020) & .037 (.020) & {.012} (.006) & .012 (.006) & {.006} (.003) & .006 (.003) \\
 & 20 & {.207} (.075) & .207 (.075) & {.093} (.037) & .093 (.037) & {.027} (.010) & .027 (.010) & {.013} (.004) & .013 (.004) \\
 & 50 & {.442} (.081) & .442 (.081) & {.259} (.057) & .259 (.057) & {.089} (.022) & .089 (.022) & {.036} (.008) & .036 (.008) \\
 & 100 & {.626} (.069) & .626 (.069) & {.439} (.059) & .439 (.059) & {.207} (.035) & .207 (.035) & {.090} (.016) & .090 (.016) \\
\midrule
\multirow{4}{*}{\makecell[c]{clipped exp. \\ $k=20$}} & 10 & \textbf{.593} (.068) & .595 (.066) & \textbf{.484} (.054) & .488 (.051) & \textbf{.342} (.042) & .346 (.036) & \textbf{.241} (.030) & .245 (.024) \\
 & 20 & \textbf{.696} (.052) & .697 (.051) & \textbf{.593} (.047) & .595 (.045) & \textbf{.449} (.037) & .452 (.034) & \textbf{.340} (.028) & .344 (.024) \\
 & 50 & \textbf{.800} (.040) & .801 (.039) & \textbf{.722} (.035) & .722 (.034) & \textbf{.593} (.030) & .595 (.029) & \textbf{.484} (.024) & .486 (.023) \\
 & 100 & \textbf{.865} (.027) & .865 (.027) & \textbf{.802} (.026) & .803 (.025) & \textbf{.693} (.024) & .694 (.023) & \textbf{.594} (.021) & .596 (.020) \\
\midrule
\multirow{4}{*}{\makecell[c]{clipped exp. \\ $k=50$}} & 10 & \textbf{.444} (.084) & .456 (.076) & \textbf{.334} (.066) & .344 (.056) & \textbf{.201} (.048) & .208 (.036) & \textbf{.125} (.033) & .128 (.022) \\
 & 20 & \textbf{.571} (.074) & .580 (.069) & \textbf{.447} (.061) & .457 (.055) & \textbf{.297} (.044) & .306 (.037) & \textbf{.199} (.032) & .206 (.025) \\
 & 50 & \textbf{.724} (.055) & .729 (.051) & \textbf{.608} (.046) & .616 (.042) & \textbf{.448} (.038) & .458 (.034) & \textbf{.332} (.031) & .342 (.026) \\
 & 100 & \textbf{.814} (.039) & .818 (.036) & \textbf{.722} (.038) & .727 (.035) & \textbf{.569} (.033) & .578 (.030) & \textbf{.450} (.026) & .459 (.024) \\
\midrule
\multirow{4}{*}{\makecell[c]{clipped exp. \\ $k=100$}} & 10 & \textbf{.388} (.090) & .403 (.081) & \textbf{.273} (.068) & .285 (.057) & \textbf{.156} (.048) & .161 (.034) & \textbf{.089} (.033) & .090 (.021) \\
 & 20 & \textbf{.519} (.080) & .532 (.072) & \textbf{.394} (.065) & .409 (.057) & \textbf{.239} (.044) & .251 (.037) & \textbf{.154} (.033) & .160 (.024) \\
 & 50 & \textbf{.691} (.060) & .698 (.055) & \textbf{.566} (.057) & .578 (.052) & \textbf{.392} (.041) & .406 (.036) & \textbf{.274} (.032) & .287 (.026) \\
 & 100 & \textbf{.803} (.047) & .804 (.043) & \textbf{.691} (.043) & .698 (.040) & \textbf{.519} (.034) & .532 (.031) & \textbf{.390} (.029) & .404 (.025) \\
\midrule
\multirow{4}{*}{\makecell[c]{clipped exp. \\ $k=200$}} & 10 & \textbf{.377} (.094) & .393 (.083) & \textbf{.255} (.070) & .270 (.058) & \textbf{.141} (.046) & .145 (.033) & \textbf{.080} (.032) & .080 (.020) \\
 & 20 & \textbf{.502} (.084) & .516 (.075) & \textbf{.375} (.068) & .391 (.058) & \textbf{.228} (.047) & .239 (.038) & \textbf{.142} (.034) & .147 (.024) \\
 & 50 & \textbf{.685} (.066) & .692 (.060) & \textbf{.547} (.055) & .560 (.050) & \textbf{.375} (.042) & .391 (.037) & \textbf{.259} (.033) & .272 (.027) \\
 & 100 & .800 (.048) & \textbf{.799} (.043) & \textbf{.680} (.046) & .688 (.041) & \textbf{.507} (.038) & .521 (.034) & \textbf{.376} (.030) & .391 (.026) \\
\midrule
\multirow{4}{*}{\makecell[c]{GMM CDF \\ $k=20$}} & 10 & \textbf{.167} (.061) & .459 (.094) & \textbf{.076} (.030) & .271 (.069) & \textbf{.023} (.010) & .074 (.026) & \textbf{.010} (.004) & .017 (.007) \\
 & 20 & \textbf{.302} (.070) & .610 (.070) & \textbf{.166} (.043) & .456 (.068) & \textbf{.057} (.017) & .209 (.043) & \textbf{.022} (.007) & .073 (.017) \\
 & 50 & \textbf{.513} (.067) & .754 (.045) & \textbf{.348} (.050) & .648 (.044) & \textbf{.164} (.026) & .452 (.043) & \textbf{.076} (.013) & .267 (.033) \\
 & 100 & \textbf{.663} (.053) & .827 (.031) & \textbf{.511} (.048) & .753 (.032) & \textbf{.299} (.031) & .609 (.032) & \textbf{.165} (.019) & .451 (.029) \\
\midrule
\multirow{4}{*}{\makecell[c]{GMM CDF \\ $k=50$}} & 10 & \textbf{.103} (.045) & .200 (.079) & \textbf{.049} (.023) & .079 (.036) & .019 (.009) & \textbf{.017} (.008) & .010 (.004) & \textbf{.007} (.003) \\
 & 20 & \textbf{.210} (.066) & .371 (.086) & \textbf{.105} (.035) & .201 (.053) & \textbf{.039} (.013) & .055 (.018) & .019 (.006) & \textbf{.017} (.005) \\
 & 50 & \textbf{.435} (.074) & .588 (.068) & \textbf{.260} (.051) & .426 (.059) & \textbf{.104} (.022) & .199 (.035) & \textbf{.049} (.010) & .079 (.015) \\
 & 100 & \textbf{.624} (.063) & .715 (.048) & \textbf{.435} (.055) & .586 (.050) & \textbf{.213} (.028) & .370 (.038) & \textbf{.105} (.015) & .199 (.025) \\
\midrule
\multirow{4}{*}{\makecell[c]{GMM CDF \\ $k=100$}} & 10 & \textbf{.098} (.046) & .145 (.066) & \textbf{.048} (.023) & .052 (.025) & .019 (.009) & \textbf{.014} (.006) & .010 (.005) & \textbf{.007} (.003) \\
 & 20 & \textbf{.200} (.065) & .297 (.082) & \textbf{.100} (.033) & .143 (.045) & .039 (.013) & \textbf{.037} (.013) & .019 (.006) & \textbf{.014} (.005) \\
 & 50 & \textbf{.424} (.079) & .525 (.072) & \textbf{.240} (.049) & .344 (.058) & \textbf{.099} (.021) & .141 (.028) & \textbf{.048} (.010) & .051 (.011) \\
 & 100 & \textbf{.625} (.070) & .671 (.056) & \textbf{.419} (.056) & .520 (.051) & \textbf{.197} (.027) & .292 (.036) & \textbf{.098} (.015) & .140 (.020) \\
\midrule
\multirow{4}{*}{\makecell[c]{GMM CDF \\ $k=200$}} & 10 & \textbf{.099} (.050) & .131 (.067) & .049 (.023) & \textbf{.047} (.023) & .020 (.010) & \textbf{.014} (.007) & .010 (.004) & \textbf{.007} (.003) \\
 & 20 & \textbf{.196} (.066) & .273 (.079) & \textbf{.097} (.032) & .127 (.042) & .038 (.013) & \textbf{.034} (.012) & .019 (.006) & \textbf{.014} (.004) \\
 & 50 & \textbf{.421} (.083) & .507 (.077) & \textbf{.241} (.049) & .328 (.057) & \textbf{.098} (.020) & .125 (.027) & .049 (.010) & \textbf{.047} (.010) \\
 & 100 & \textbf{.629} (.075) & .660 (.057) & \textbf{.421} (.056) & .503 (.052) & \textbf{.195} (.029) & .271 (.035) & \textbf{.097} (.015) & .127 (.020) \\
\midrule
\bottomrule
\end{tabular}
}
\label{tab:poisson_table_dense}
\end{table}

\begin{table}[!t]
\centering
\caption{Mean squared error of the VI estimator and MLE across link functions and iterations $k$ with sparse parameter $\bbeta^\star = (2/\sqrt{5}, 1/\sqrt5, 0, \dots, 0)$.}
\resizebox{\textwidth}{!}{%
\begin{tabular}{cccccccccc}
\toprule
\multirow{2}{*}{Link, $k$} & \multirow{2}{*}{$d$} & \multicolumn{2}{c}{$N=100$} & \multicolumn{2}{c}{$N=200$} & \multicolumn{2}{c}{$N=500$} & \multicolumn{2}{c}{$N=1000$} \\
 &  & VI & MLE & VI & MLE & VI & MLE & VI & MLE \\
\midrule
\multirow{4}{*}{\makecell[c]{softplus \\ $k=20$}} & 10 & \textbf{.632} (.078) & .717 (.061) & \textbf{.513} (.069) & .619 (.055) & \textbf{.339} (.054) & .464 (.046) & \textbf{.216} (.037) & .342 (.034) \\
 & 20 & \textbf{.731} (.064) & .795 (.049) & \textbf{.625} (.058) & .711 (.046) & \textbf{.463} (.047) & .577 (.038) & \textbf{.334} (.036) & .461 (.030) \\
 & 50 & \textbf{.832} (.043) & .872 (.032) & \textbf{.757} (.042) & .815 (.032) & \textbf{.631} (.037) & .716 (.028) & \textbf{.508} (.030) & .615 (.024) \\
 & 100 & \textbf{.882} (.031) & .909 (.023) & \textbf{.823} (.029) & .865 (.022) & \textbf{.726} (.028) & .791 (.021) & \textbf{.626} (.027) & .712 (.021) \\
\midrule
\multirow{4}{*}{\makecell[c]{softplus \\ $k=50$}} & 10 & \textbf{.469} (.106) & .570 (.088) & \textbf{.341} (.080) & .455 (.069) & \textbf{.159} (.048) & .268 (.048) & \textbf{.078} (.028) & .161 (.032) \\
 & 20 & \textbf{.585} (.082) & .668 (.066) & \textbf{.459} (.078) & .562 (.066) & \textbf{.276} (.053) & .393 (.048) & \textbf{.164} (.033) & .273 (.033) \\
 & 50 & \textbf{.748} (.058) & .798 (.046) & \textbf{.637} (.062) & .710 (.050) & \textbf{.460} (.052) & .563 (.043) & \textbf{.324} (.034) & .439 (.030) \\
 & 100 & \textbf{.827} (.046) & .858 (.036) & \textbf{.741} (.047) & .792 (.037) & \textbf{.596} (.039) & .677 (.032) & \textbf{.465} (.032) & .567 (.027) \\
\midrule
\multirow{4}{*}{\makecell[c]{softplus \\ $k=100$}} & 10 & \textbf{.400} (.117) & .498 (.102) & \textbf{.258} (.080) & .365 (.075) & \textbf{.115} (.042) & .201 (.046) & \textbf{.048} (.021) & .105 (.027) \\
 & 20 & \textbf{.535} (.103) & .617 (.085) & \textbf{.401} (.083) & .501 (.070) & \textbf{.215} (.051) & .321 (.049) & \textbf{.116} (.035) & .204 (.038) \\
 & 50 & \textbf{.717} (.064) & .766 (.050) & \textbf{.584} (.063) & .659 (.052) & \textbf{.397} (.048) & .498 (.041) & \textbf{.258} (.037) & .365 (.034) \\
 & 100 & \textbf{.818} (.053) & .843 (.042) & \textbf{.706} (.053) & .756 (.043) & \textbf{.539} (.042) & .621 (.035) & \textbf{.394} (.031) & .495 (.027) \\
\midrule
\multirow{4}{*}{\makecell[c]{softplus \\ $k=200$}} & 10 & \textbf{.378} (.111) & .476 (.099) & \textbf{.232} (.079) & .335 (.077) & \textbf{.100} (.037) & .179 (.043) & \textbf{.045} (.019) & .093 (.025) \\
 & 20 & \textbf{.527} (.106) & .607 (.089) & \textbf{.382} (.080) & .479 (.071) & \textbf{.200} (.051) & .300 (.050) & \textbf{.102} (.029) & .182 (.032) \\
 & 50 & \textbf{.707} (.075) & .754 (.061) & \textbf{.574} (.069) & .647 (.057) & \textbf{.375} (.049) & .475 (.043) & \textbf{.241} (.034) & .344 (.032) \\
 & 100 & \textbf{.817} (.057) & .839 (.045) & \textbf{.701} (.058) & .750 (.046) & \textbf{.521} (.045) & .603 (.038) & \textbf{.376} (.036) & .475 (.032) \\
\midrule
\multirow{4}{*}{\makecell[c]{clipped exp. \\ $k=20$}} & 10 & \textbf{.596} (.064) & .598 (.062) & \textbf{.487} (.057) & .491 (.054) & \textbf{.339} (.039) & .343 (.035) & \textbf{.241} (.030) & .244 (.024) \\
 & 20 & \textbf{.694} (.053) & .695 (.051) & \textbf{.595} (.042) & .597 (.040) & \textbf{.444} (.038) & .448 (.035) & \textbf{.341} (.027) & .345 (.024) \\
 & 50 & \textbf{.800} (.037) & .801 (.036) & \textbf{.721} (.036) & .722 (.035) & \textbf{.590} (.028) & .592 (.027) & \textbf{.485} (.025) & .488 (.023) \\
 & 100 & \textbf{.867} (.024) & .868 (.024) & \textbf{.799} (.025) & .799 (.025) & \textbf{.692} (.025) & .693 (.024) & \textbf{.595} (.019) & .597 (.019) \\
\midrule
\multirow{4}{*}{\makecell[c]{clipped exp. \\ $k=50$}} & 10 & \textbf{.447} (.085) & .457 (.077) & \textbf{.325} (.061) & .337 (.051) & \textbf{.203} (.045) & .210 (.034) & \textbf{.126} (.036) & .130 (.023) \\
 & 20 & \textbf{.572} (.067) & .581 (.061) & \textbf{.445} (.063) & .459 (.056) & \textbf{.298} (.045) & .308 (.037) & \textbf{.199} (.031) & .206 (.024) \\
 & 50 & \textbf{.720} (.058) & .726 (.054) & \textbf{.611} (.048) & .619 (.045) & \textbf{.445} (.041) & .455 (.036) & \textbf{.332} (.027) & .342 (.023) \\
 & 100 & \textbf{.813} (.040) & .816 (.037) & \textbf{.724} (.041) & .729 (.038) & \textbf{.566} (.032) & .575 (.029) & \textbf{.446} (.027) & .457 (.024) \\
\midrule
\multirow{4}{*}{\makecell[c]{clipped exp. \\ $k=100$}} & 10 & \textbf{.390} (.093) & .406 (.083) & \textbf{.275} (.068) & .287 (.055) & \textbf{.156} (.045) & .164 (.033) & .091 (.036) & \textbf{.091} (.025) \\
 & 20 & \textbf{.520} (.082) & .533 (.074) & \textbf{.391} (.062) & .407 (.056) & \textbf{.241} (.049) & .251 (.038) & \textbf{.154} (.034) & .159 (.025) \\
 & 50 & \textbf{.684} (.054) & .692 (.051) & \textbf{.567} (.052) & .578 (.048) & \textbf{.389} (.043) & .404 (.037) & \textbf{.278} (.032) & .290 (.027) \\
 & 100 & \textbf{.799} (.046) & .801 (.041) & \textbf{.688} (.045) & .696 (.041) & \textbf{.521} (.036) & .533 (.032) & \textbf{.393} (.030) & .407 (.026) \\
\midrule
\multirow{4}{*}{\makecell[c]{clipped exp. \\ $k=200$}} & 10 & \textbf{.384} (.085) & .396 (.076) & \textbf{.259} (.072) & .273 (.061) & \textbf{.142} (.046) & .148 (.035) & \textbf{.078} (.030) & .080 (.019) \\
 & 20 & \textbf{.506} (.086) & .520 (.078) & \textbf{.368} (.065) & .386 (.057) & \textbf{.230} (.048) & .239 (.038) & \textbf{.142} (.034) & .149 (.024) \\
 & 50 & \textbf{.676} (.063) & .683 (.059) & \textbf{.550} (.054) & .562 (.049) & \textbf{.373} (.040) & .389 (.034) & \textbf{.265} (.032) & .276 (.026) \\
 & 100 & \textbf{.798} (.054) & .798 (.048) & \textbf{.682} (.047) & .690 (.043) & \textbf{.504} (.037) & .518 (.033) & \textbf{.374} (.031) & .390 (.027) \\
\midrule
\multirow{4}{*}{\makecell[c]{GMM CDF \\ $k=20$}} & 10 & \textbf{.171} (.066) & .467 (.097) & \textbf{.075} (.027) & .273 (.068) & \textbf{.022} (.010) & .073 (.023) & \textbf{.009} (.004) & .017 (.007) \\
 & 20 & \textbf{.304} (.072) & .608 (.070) & \textbf{.164} (.041) & .456 (.065) & \textbf{.057} (.017) & .208 (.038) & \textbf{.022} (.007) & .074 (.016) \\
 & 50 & \textbf{.512} (.066) & .756 (.043) & \textbf{.348} (.052) & .647 (.048) & \textbf{.166} (.026) & .455 (.040) & \textbf{.075} (.014) & .266 (.031) \\
 & 100 & \textbf{.666} (.051) & .828 (.030) & \textbf{.515} (.047) & .755 (.034) & \textbf{.302} (.031) & .611 (.031) & \textbf{.165} (.019) & .453 (.031) \\
\midrule
\multirow{4}{*}{\makecell[c]{GMM CDF \\ $k=50$}} & 10 & \textbf{.107} (.054) & .201 (.086) & \textbf{.050} (.023) & .084 (.037) & .019 (.009) & \textbf{.017} (.008) & .009 (.004) & \textbf{.007} (.003) \\
 & 20 & \textbf{.221} (.072) & .384 (.093) & \textbf{.105} (.035) & .199 (.053) & \textbf{.039} (.013) & .054 (.018) & .019 (.006) & \textbf{.017} (.005) \\
 & 50 & \textbf{.421} (.079) & .577 (.071) & \textbf{.259} (.050) & .429 (.058) & \textbf{.104} (.019) & .199 (.033) & \textbf{.048} (.009) & .076 (.014) \\
 & 100 & \textbf{.627} (.070) & .719 (.053) & \textbf{.433} (.057) & .585 (.050) & \textbf{.212} (.027) & .370 (.035) & \textbf{.104} (.015) & .199 (.023) \\
\midrule
\multirow{4}{*}{\makecell[c]{GMM CDF \\ $k=100$}} & 10 & \textbf{.095} (.046) & .134 (.063) & \textbf{.051} (.026) & .055 (.027) & .019 (.008) & \textbf{.014} (.006) & .010 (.005) & \textbf{.006} (.003) \\
 & 20 & \textbf{.201} (.063) & .301 (.087) & \textbf{.098} (.033) & .143 (.050) & .038 (.012) & \textbf{.037} (.012) & .019 (.006) & \textbf{.014} (.004) \\
 & 50 & \textbf{.421} (.081) & .522 (.071) & \textbf{.244} (.049) & .347 (.063) & \textbf{.098} (.020) & .141 (.029) & \textbf{.049} (.011) & .052 (.012) \\
 & 100 & \textbf{.623} (.068) & .670 (.054) & \textbf{.425} (.052) & .522 (.049) & \textbf{.199} (.032) & .288 (.037) & \textbf{.100} (.015) & .142 (.021) \\
\midrule
\multirow{4}{*}{\makecell[c]{GMM CDF \\ $k=200$}} & 10 & \textbf{.096} (.047) & .128 (.064) & .051 (.024) & \textbf{.048} (.023) & .021 (.010) & \textbf{.014} (.007) & .009 (.004) & \textbf{.007} (.003) \\
 & 20 & \textbf{.192} (.060) & .277 (.082) & \textbf{.095} (.033) & .130 (.041) & .039 (.013) & \textbf{.033} (.011) & .020 (.006) & \textbf{.014} (.005) \\
 & 50 & \textbf{.415} (.077) & .499 (.071) & \textbf{.245} (.050) & .331 (.063) & \textbf{.099} (.020) & .131 (.026) & .048 (.010) & \textbf{.046} (.010) \\
 & 100 & \textbf{.629} (.076) & .659 (.059) & \textbf{.426} (.061) & .506 (.056) & \textbf{.196} (.031) & .273 (.037) & \textbf{.098} (.015) & .125 (.019) \\
\midrule
\bottomrule
\end{tabular}
}
\label{tab:poisson_table_sparse}
\end{table}

\subsection{GAM experiment}\label{app:GAM_exp}
\begin{figure}[!t]
    \centering
    \begin{subfigure}[t]{0.49\linewidth}
        \centering
        \includegraphics[width=\linewidth]{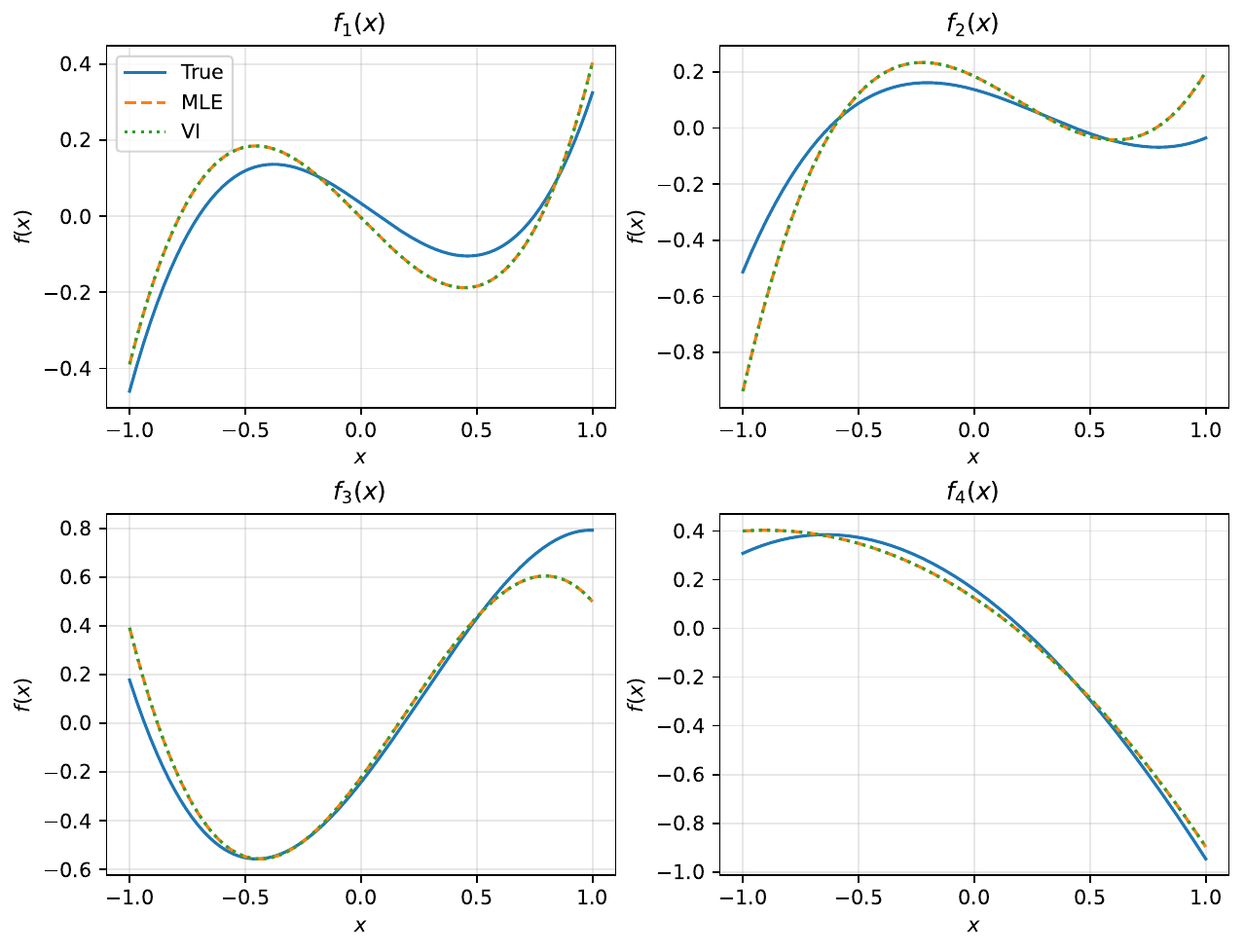}
        \caption{log}
    \end{subfigure}\hfill
    \begin{subfigure}[t]{0.49\linewidth}
        \centering
        \includegraphics[width=\linewidth]{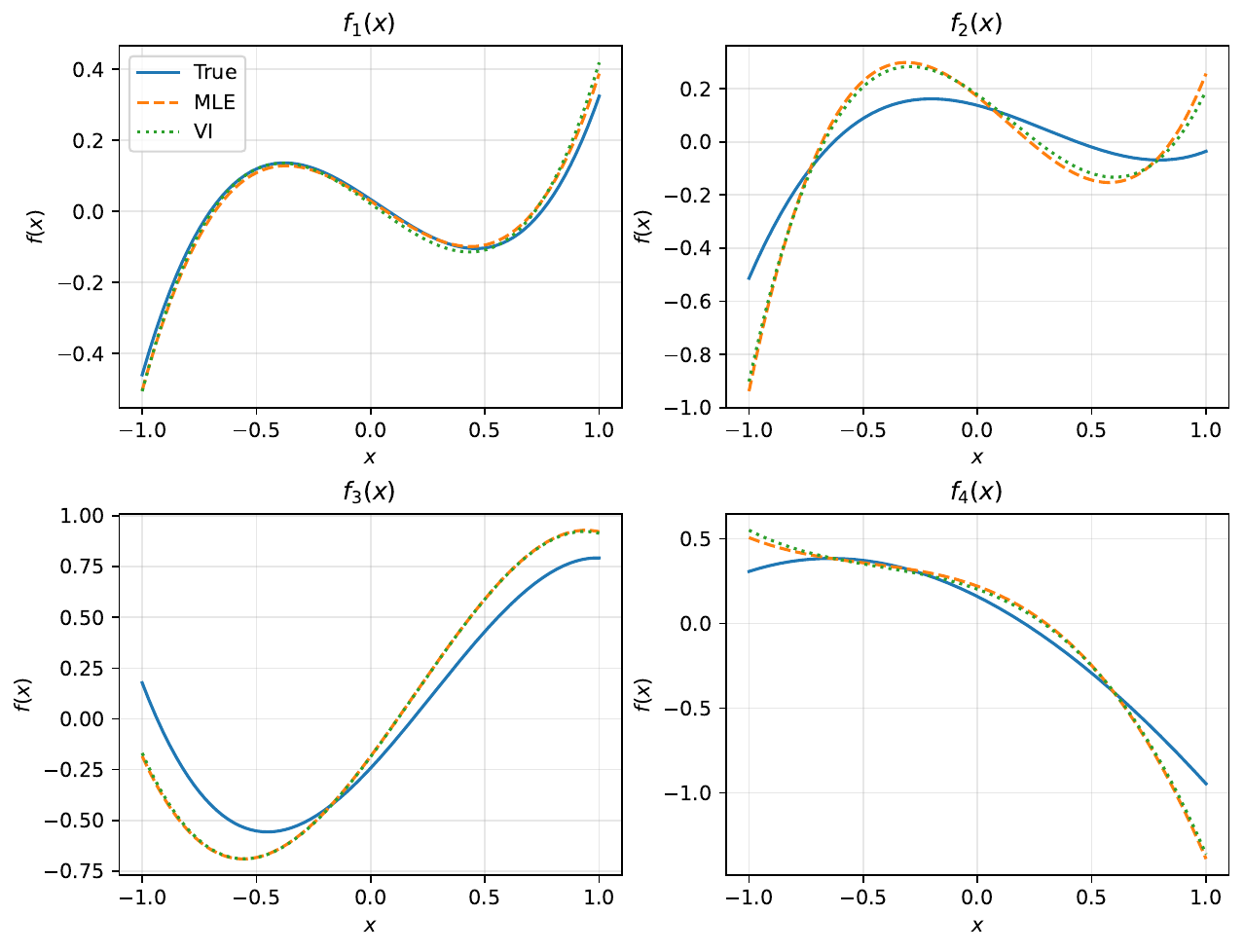}
        \caption{softplus}
    \end{subfigure}\hfill
    \begin{subfigure}[t]{0.49\linewidth}
        \centering
        \includegraphics[width=\linewidth]{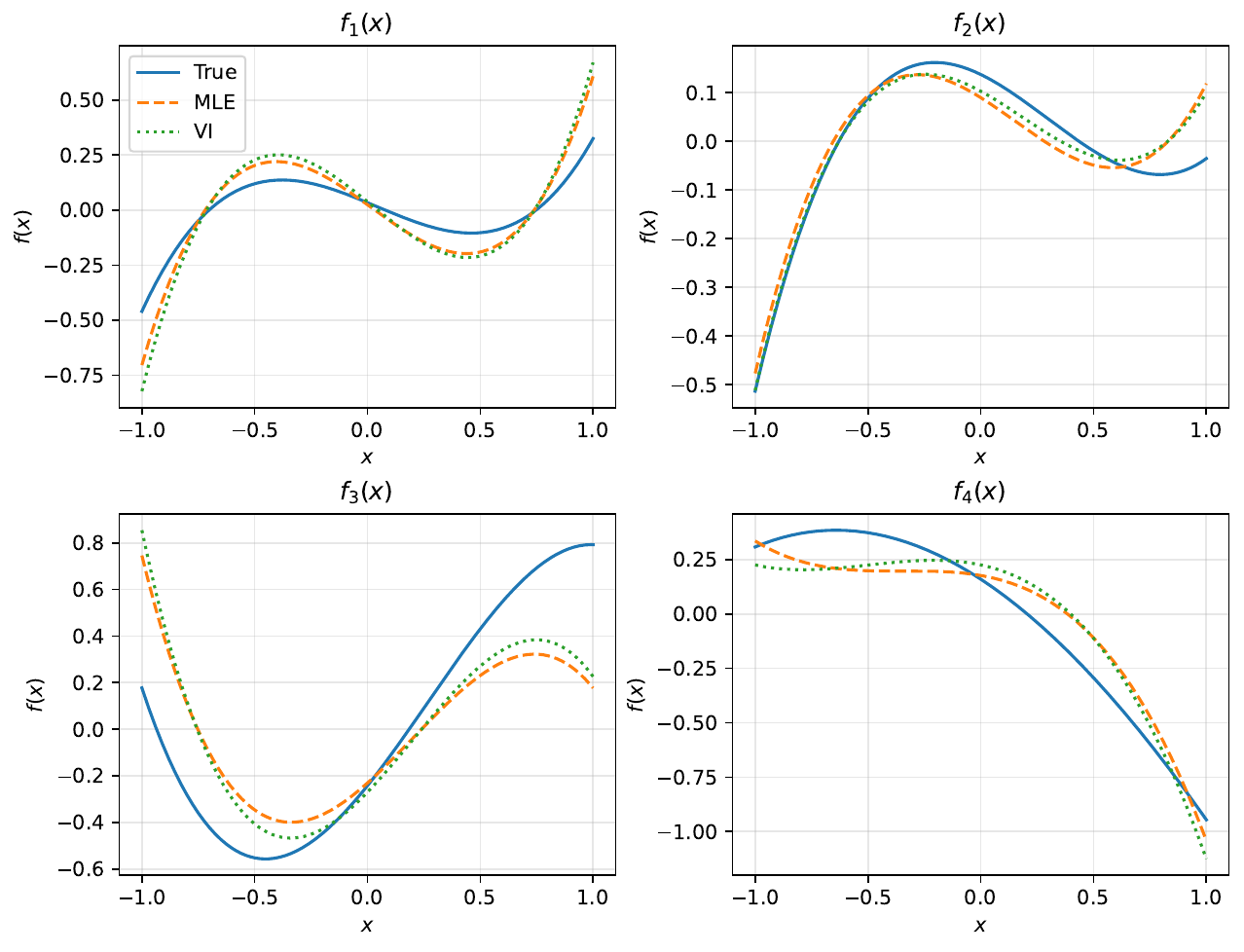}
        \caption{clipped exp.}
    \end{subfigure}\hfill
    \begin{subfigure}[t]{0.49\linewidth}
        \centering
        \includegraphics[width=\linewidth]{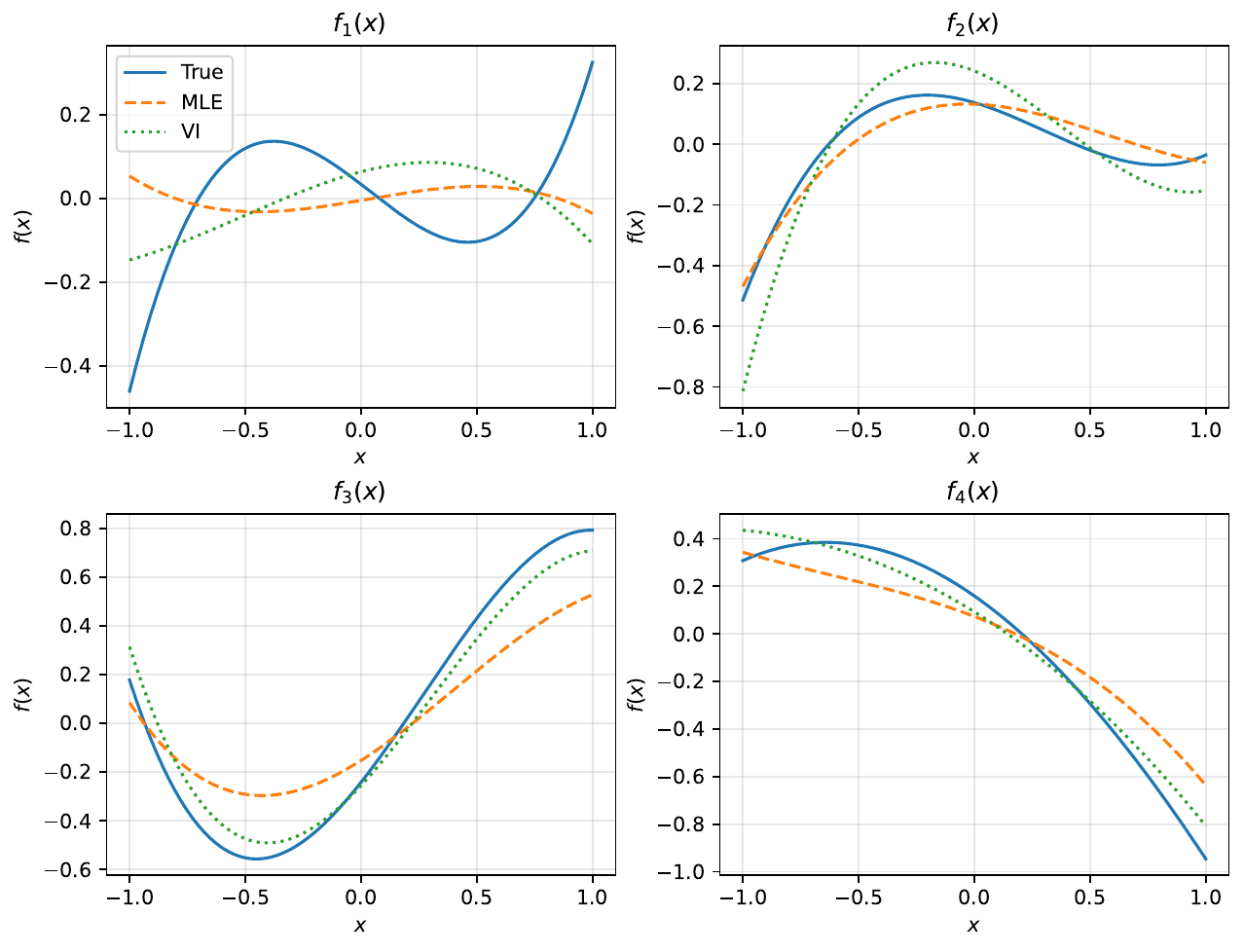}
        \caption{GMM CDF}
    \end{subfigure}
    \caption{Reconstruction of additive components in GAMs under Poisson distribution.}
    \label{fig:recon_gam_poisson}
\end{figure}

\begin{figure}[!t]
    \centering
    \begin{subfigure}[t]{0.49\linewidth}
        \centering
        \includegraphics[width=\linewidth]{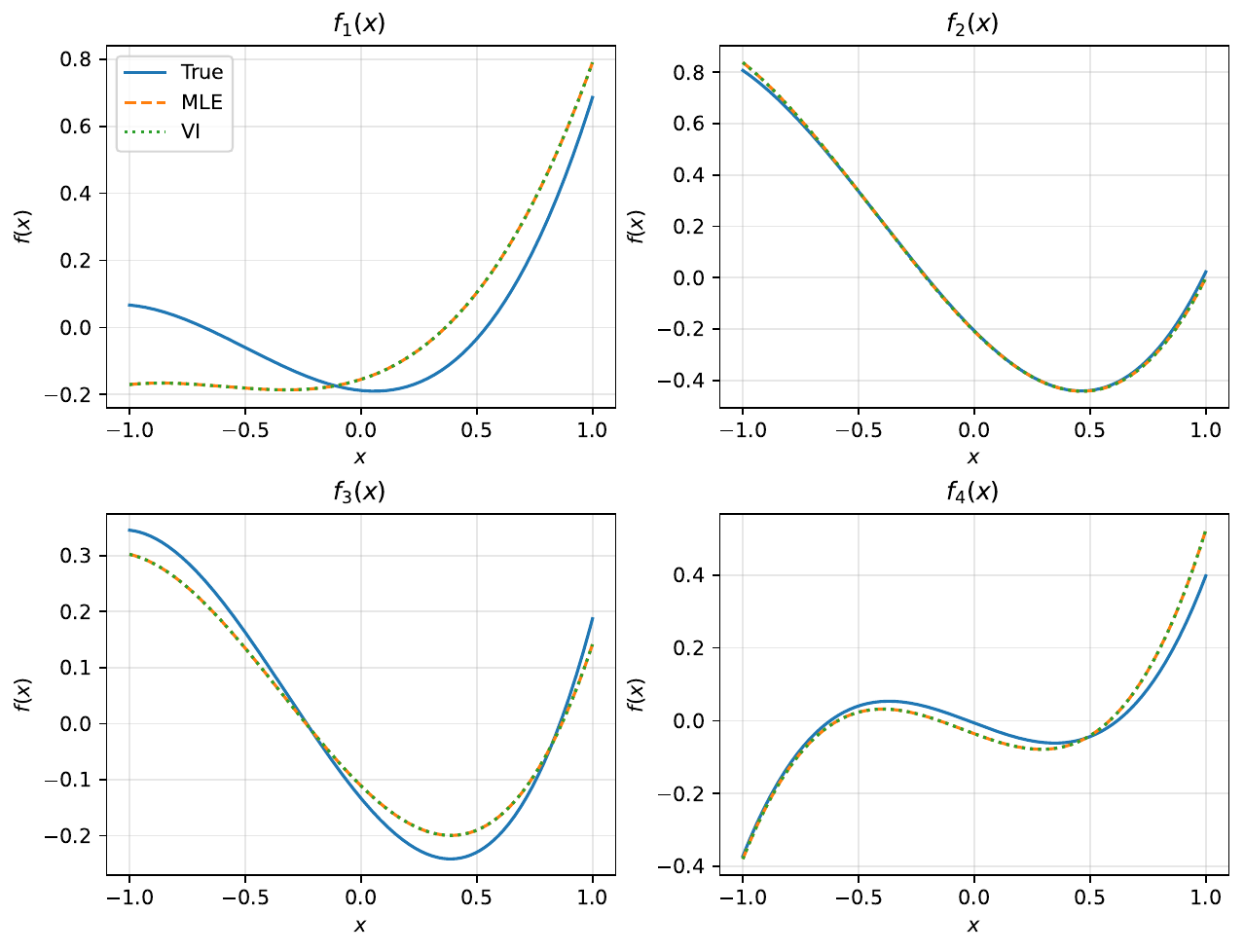}
        \caption{logistic}
    \end{subfigure}\hfill
    \begin{subfigure}[t]{0.49\linewidth}
        \centering
        \includegraphics[width=\linewidth]{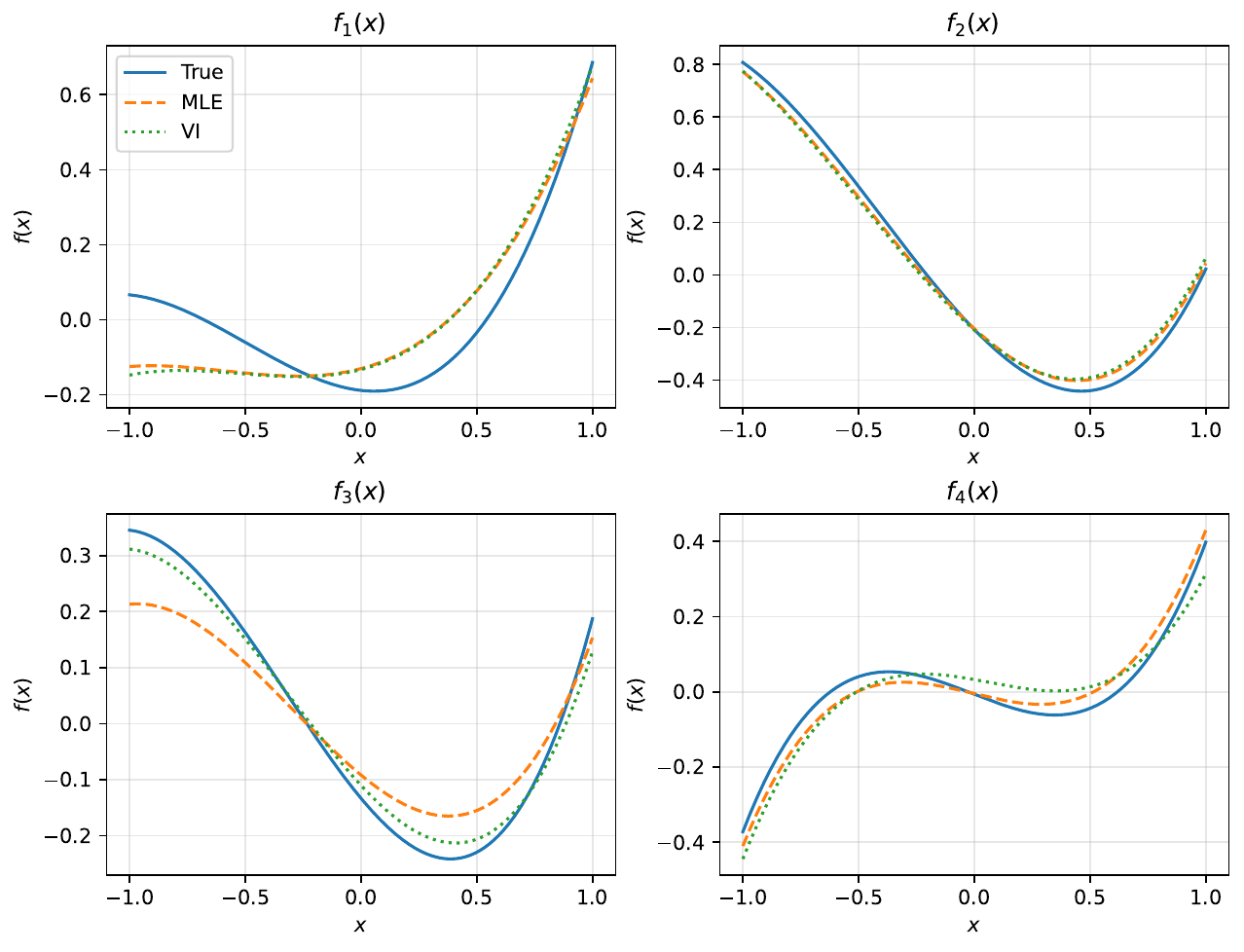}
        \caption{arctangent}
    \end{subfigure}\hfill
    \caption{Reconstruction of additive components in GAMs under Bernoulli distribution.}
    \label{fig:recon_gam_bernoulli}
\end{figure}

\begin{figure}[!t]
    \centering
    \begin{subfigure}[t]{0.49\linewidth}
        \centering
        \includegraphics[width=\linewidth]{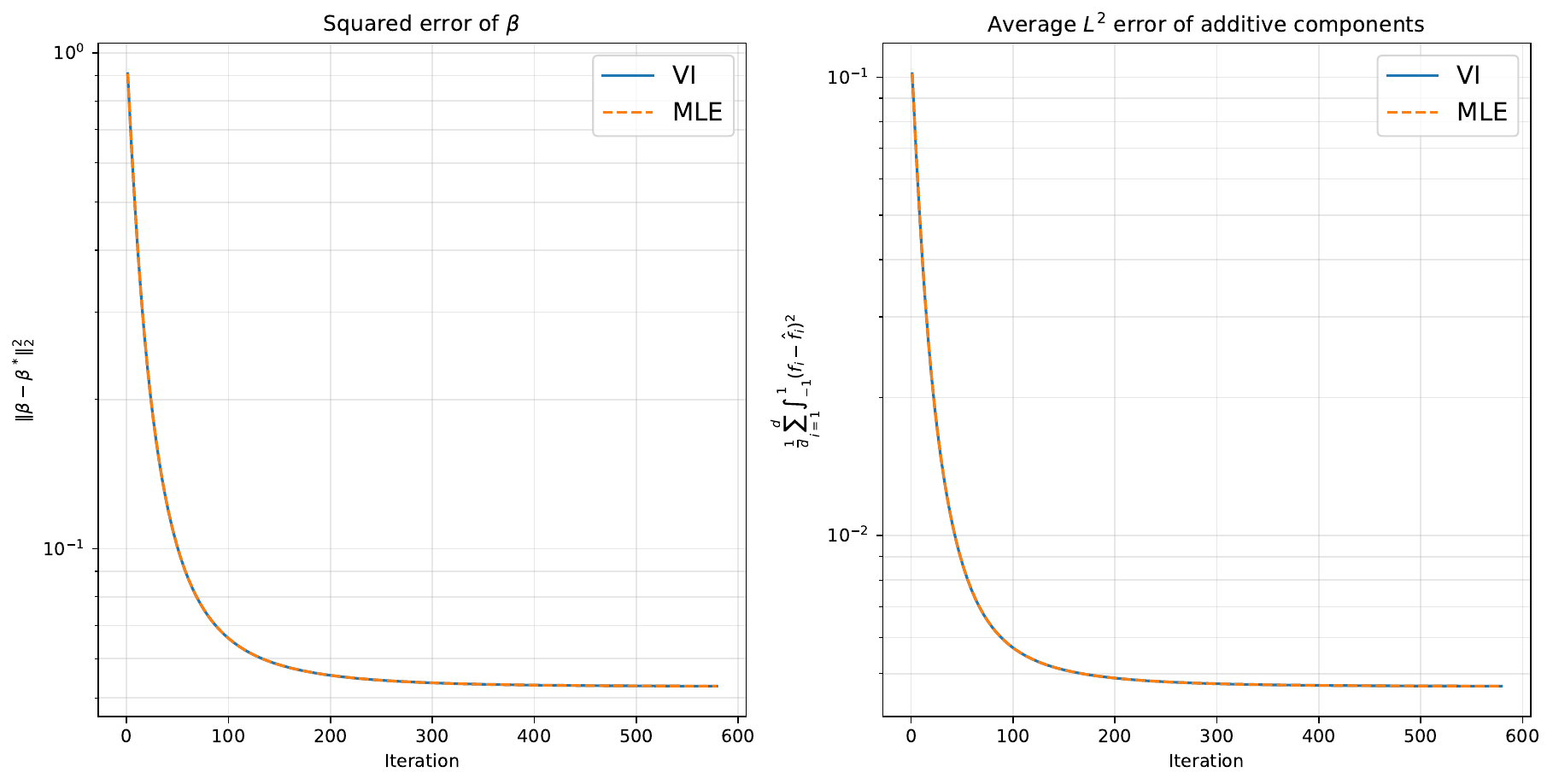}
        \caption{log}
    \end{subfigure}\hfill
    \begin{subfigure}[t]{0.49\linewidth}
        \centering
        \includegraphics[width=\linewidth]{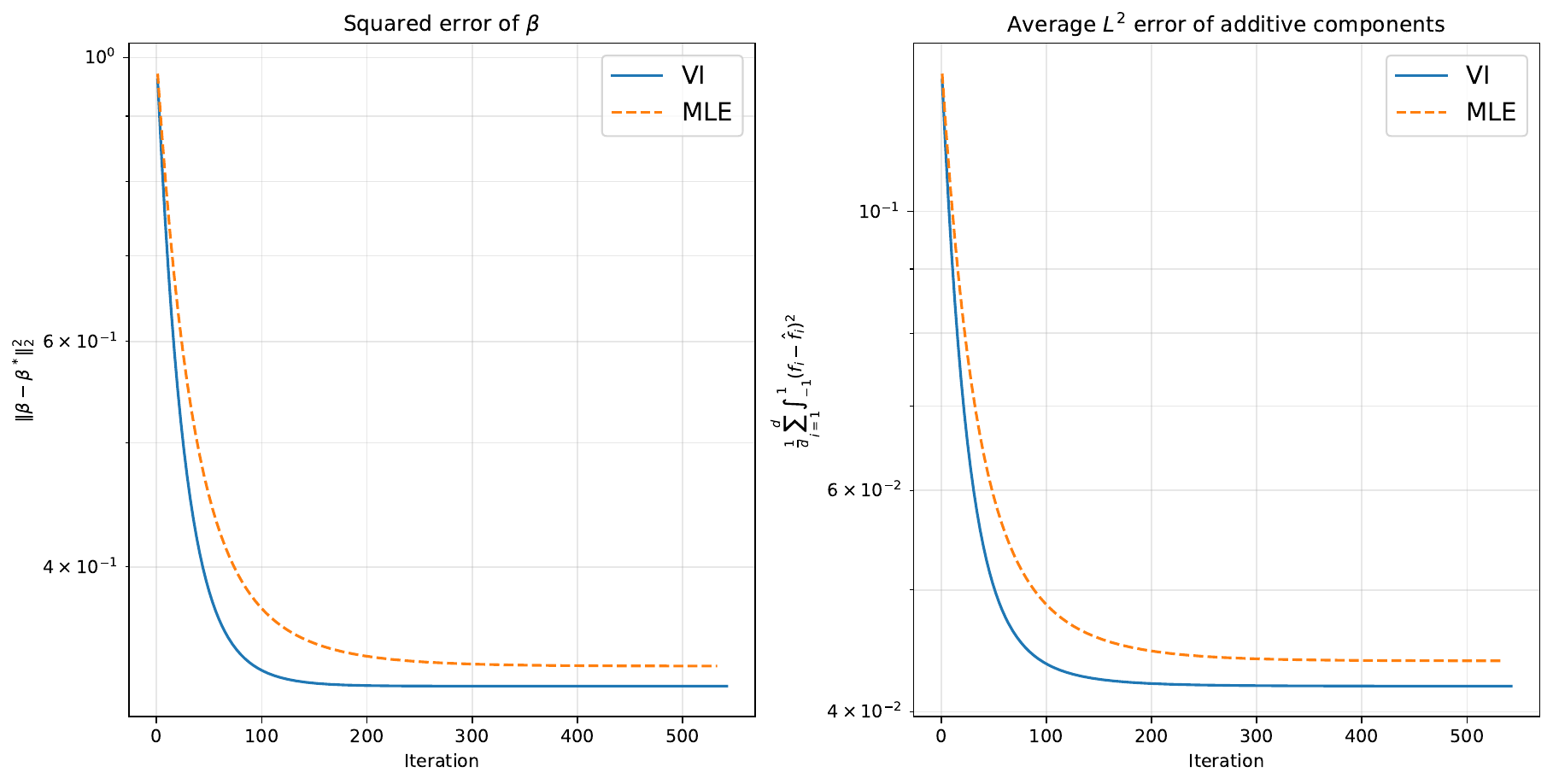}
        \caption{softplus}
    \end{subfigure}\hfill
    \begin{subfigure}[t]{0.49\linewidth}
        \centering
        \includegraphics[width=\linewidth]{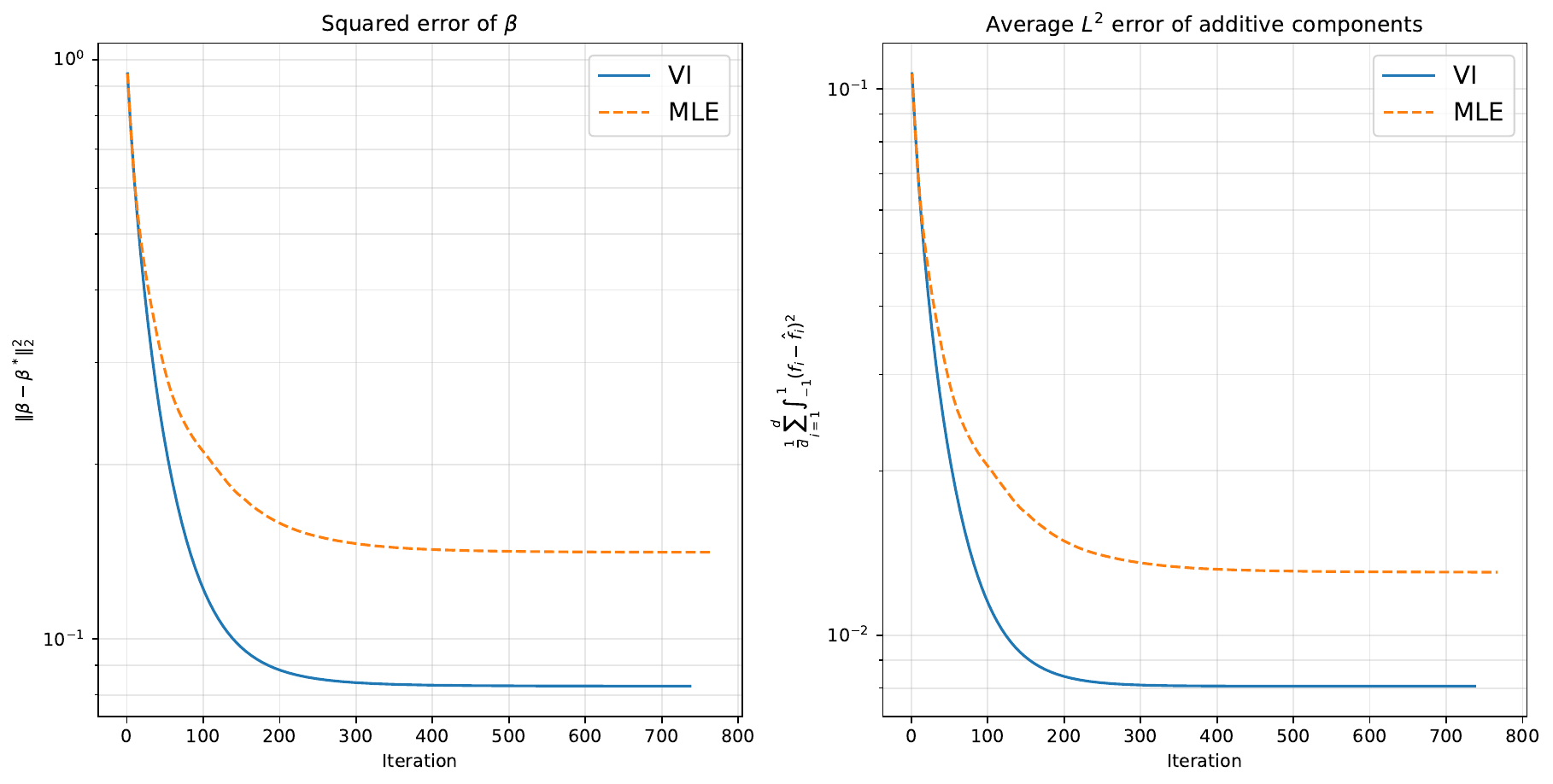}
        \caption{clipped exp.}
    \end{subfigure}\hfill
    \begin{subfigure}[t]{0.49\linewidth}
        \centering
        \includegraphics[width=\linewidth]{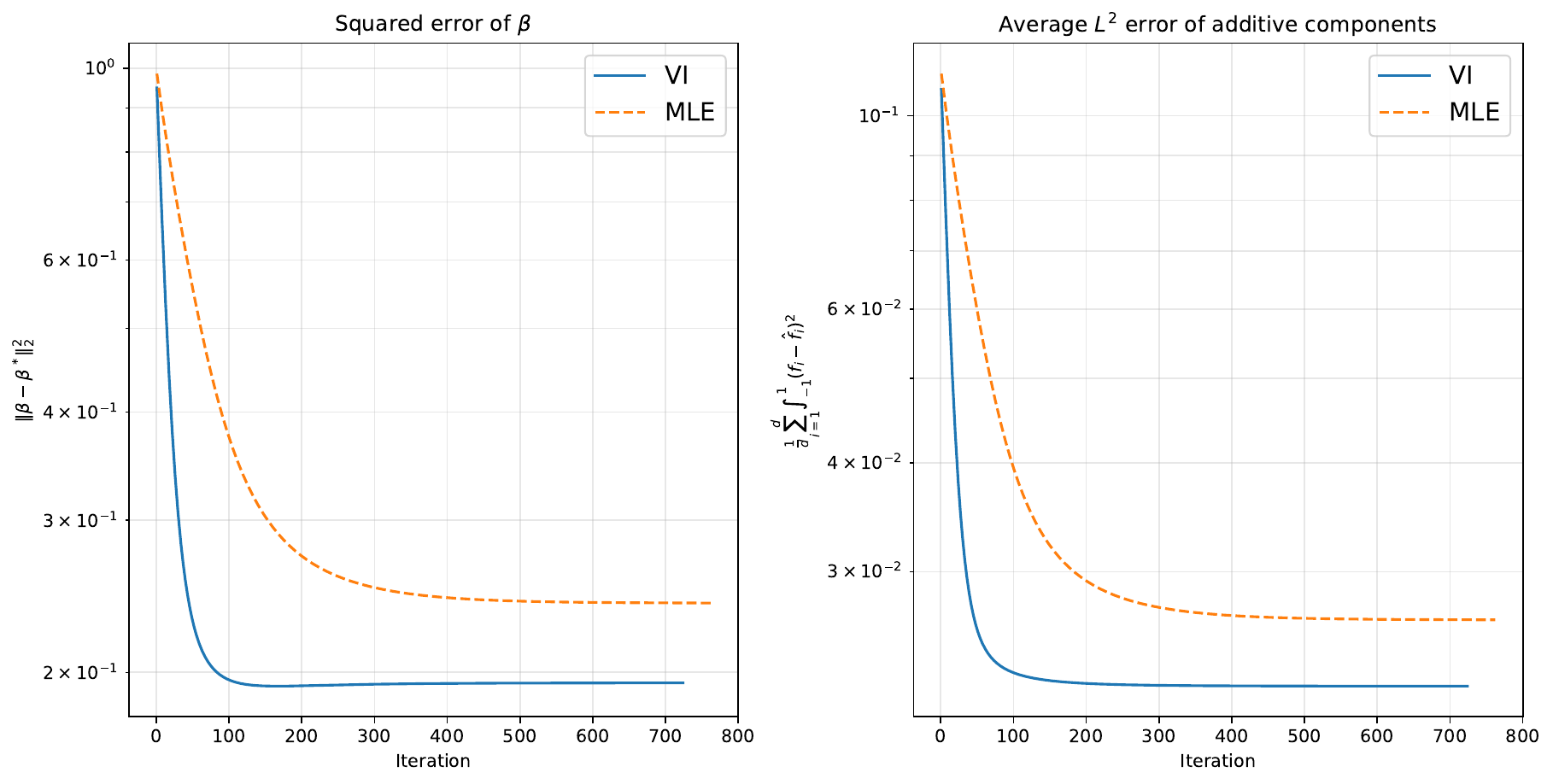}
        \caption{GMM CDF}
    \end{subfigure}
    \caption{Squared error of parameter estimates and the average $L^2$ error of the additive components (Poisson response)}
    \label{fig:traj_gam_poisson}
\end{figure}

\begin{figure}[!t]
    \centering
    \begin{subfigure}[t]{0.49\linewidth}
        \centering
        \includegraphics[width=\linewidth]{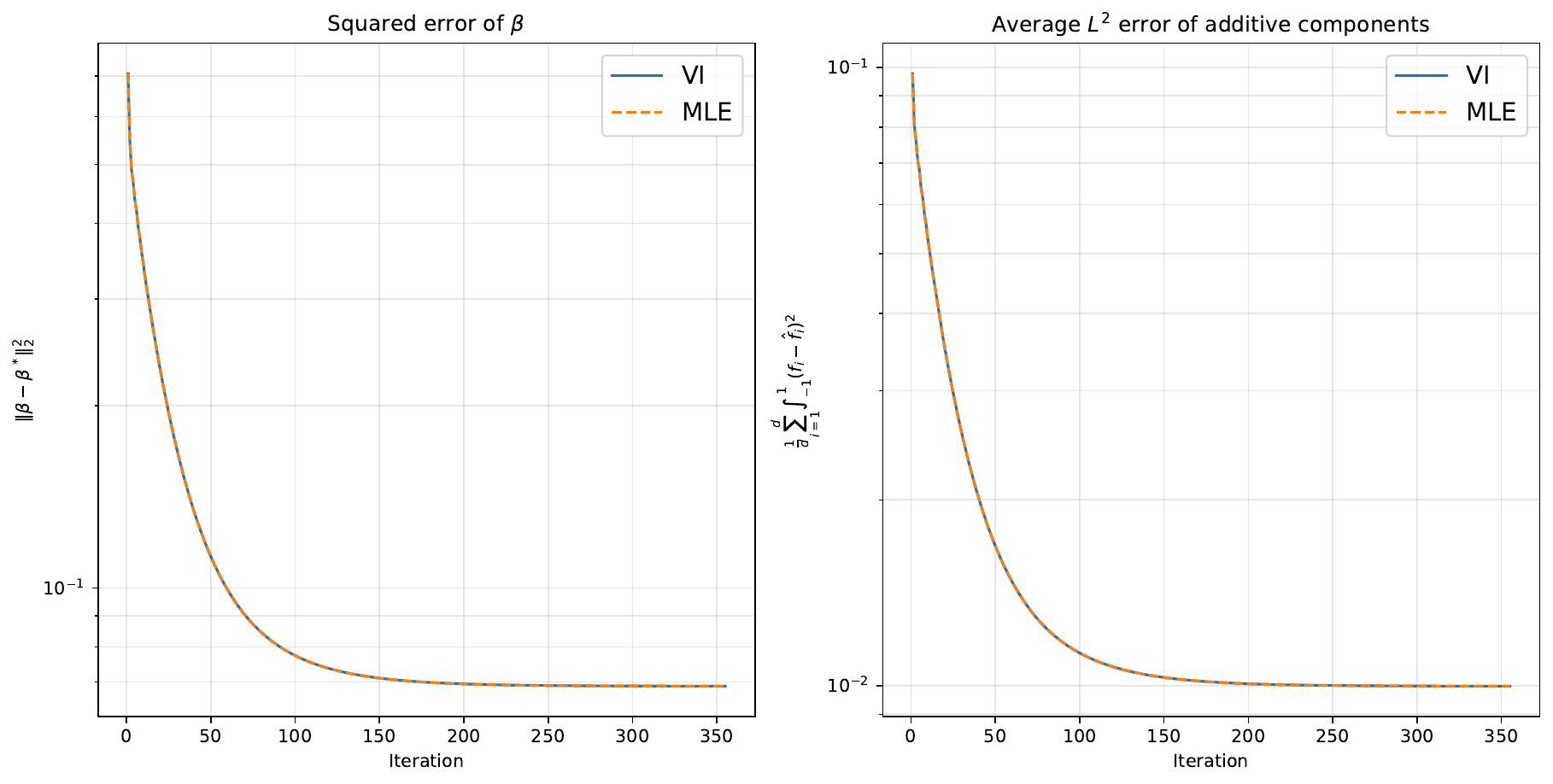}
        \caption{logistic}
    \end{subfigure}\hfill
    \begin{subfigure}[t]{0.49\linewidth}
        \centering
        \includegraphics[width=\linewidth]{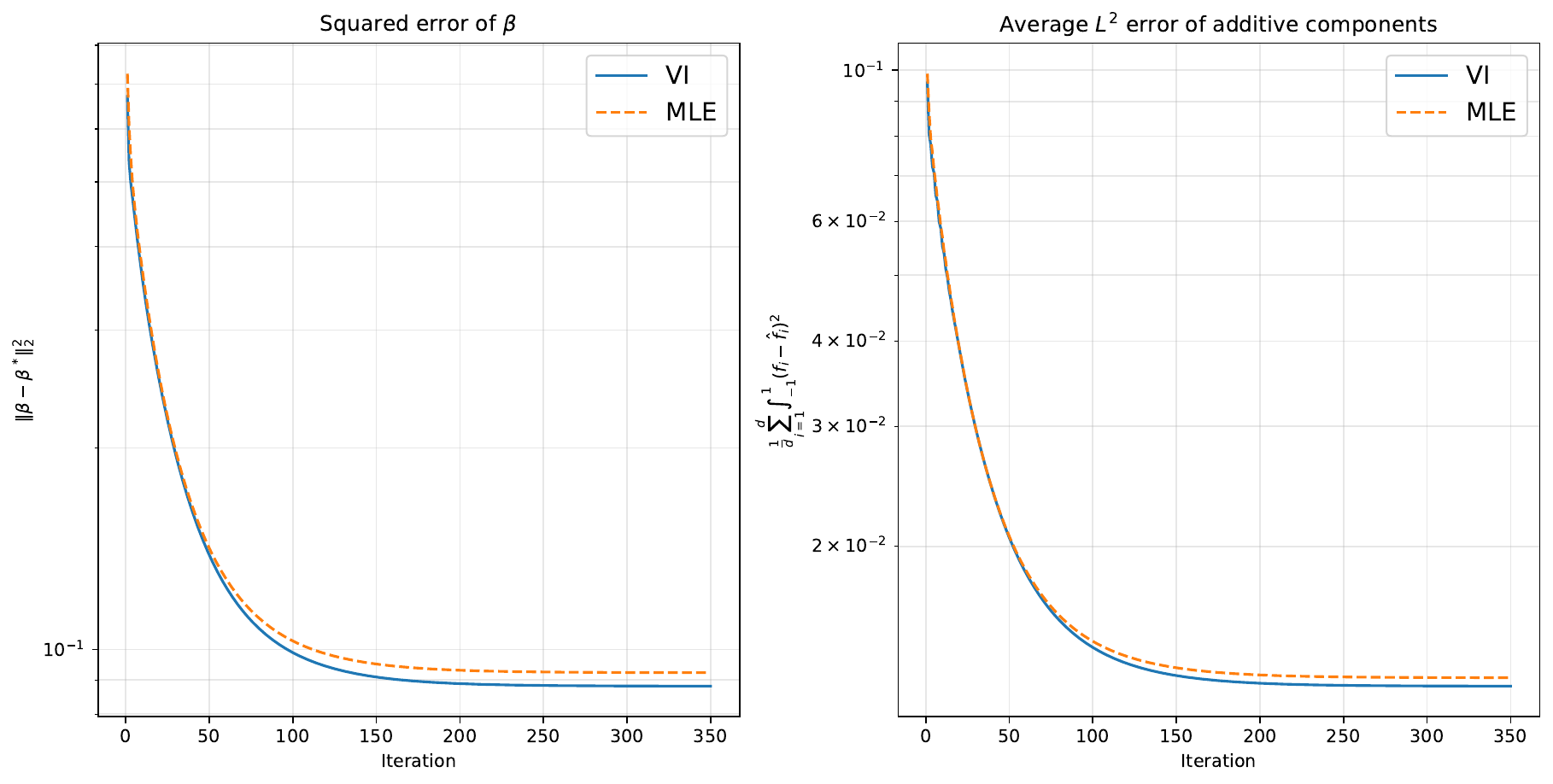}
        \caption{arctangent}
    \end{subfigure}\hfill
    \caption{Squared error of parameter estimates and the average $L^2$ error of the additive components (Bernoulli response)}
    \label{fig:traj_gam_bernoulli}
\end{figure}

We report detailed experimental settings and results for GAMs considered in Section~\ref{sec:gam}.
Each additive component $f_j$ is represented using a truncated Legendre basis of degree $K$ without the constant term, so that $f_j(0)=0$ holds by construction. We use $d=4$ additive components and $K=3$ basis functions per component, with covariates independently sampled from $U[-1,1]$. Each $\beta_{jk}$ is drawn from a standard normal distribution and then normalized so that $\|\bbeta\|_2 = 1$. We consider the Poisson and Bernoulli response models.

For Poisson models, we consider the canonical log link as well as softplus, clipped exponential, and Gaussian-mixture CDF links, as in the GLM experiments of Section~\ref{sec:exp}. For Bernoulli models, we consider the canonical logistic link together with the non-canonical arctangent link, whose inverse is given by
\[
    g^{-1}(z) = \frac{1}{2} + \frac{1}{\pi} \arctan(z).
\]

Figures~\ref{fig:recon_gam_poisson} and~\ref{fig:recon_gam_bernoulli} compare the recovered additive components with the ground truths, while Figures~\ref{fig:traj_gam_poisson} and~\ref{fig:traj_gam_bernoulli} report the squared error of the parameter estimates and the average $L^2$ errors of the additive components over iterations for the Poisson and Bernoulli models, respectively. For non-canonical links in both Poisson and Bernoulli GAMs, the VI estimator often yields slightly smoother optimization trajectories and modestly improved reconstruction of the additive components compared to MLE, although the final errors are typically of similar magnitude. These experiments illustrate that the proposed VI framework extends naturally to GAMs through basis representations and remains numerically stable in additive models beyond linear predictors.

\end{document}